\documentclass[conference,letterpaper]{IEEEtran}

\usepackage[utf8]{inputenc} 
\usepackage[T1]{fontenc}
\usepackage{url}
\usepackage{ifthen}
\usepackage{cite}
\usepackage[cmex10]{amsmath} 

\usepackage{tikz}
\usepackage{tcolorbox}
\usepackage{algorithm}
\usepackage{algpseudocode}
\usepackage[colorlinks=true,linkcolor=black,urlcolor=blue!60!black,citecolor=black]{hyperref}

\usetikzlibrary{arrows.meta,positioning}
\tcbuselibrary{skins,breakable}

\definecolor{boxbg}{RGB}{246,246,244}
\definecolor{boxln}{RGB}{200,200,196}
\definecolor{accent}{RGB}{40,90,170}

\newcommand{\Ftwo}{\mathbb{F}_2}
\newcommand{\Acc}{\Sigma}

\newcommand{\supp}{\operatorname{supp}}

\newcommand{\bh}{\mathbf{h}}
\newcommand{\zero}{\mathbf{0}}
\newcommand{\by}{\mathbf{y}}
\newcommand{\FF}{\mathbb{F}}

\newcommand{\bc}{\mathbf{c}}

\newcommand{\bx}{\mathbf{x}}
\newcommand{\be}{\mathbf{e}}
\newcommand{\xor}{\oplus}

\newcommand{\Int}{\mathcal{I}}
\DeclareMathOperator{\dlog}{dlog}

\newtcolorbox{keybox}{
  colback=boxbg, colframe=boxln, boxrule=0.4pt, arc=2pt,
  left=7pt, right=7pt, top=6pt, bottom=6pt, breakable
}

\usepackage{stmaryrd}
\usepackage{mathtools}
\usepackage{amssymb}
\usepackage{tikz}
\usepackage{subfig}
\usepackage{amsthm}
\usepackage{pgfplots}
\usepackage{soul}
\usepackage{booktabs}

\theoremstyle{theorem}
\newtheorem{definition}{Definition}
\newtheorem{example}{Example}
\newtheorem{proposition}{Proposition}
\newtheorem{theorem}{Theorem}
\newtheorem{lemma}{Lemma}
\newtheorem{corollary}{Corollary}
\newtheorem{claim}{Claim}
\theoremstyle{remark}
\newtheorem{remark}{Remark}

\newcommand{\mc}{\mathcal}
\newcommand{\mbf}{\mathbf}
\newcommand{\mbb}{\mathbb}
\newcommand{\tn}{\textnormal}
\newcommand{\lcm}{\tn{lcm}}

\def\E#1{\ensuremath{\mathbb{E}\left[#1\right]}}
\def\Prob#1{\ensuremath{\mathbb{P}\left(#1\right)}}

\colorlet{ColorYPChen}{blue!80!black}

\newcommand{\poly}{\mathrm{poly}}
\newcommand{\val}{\mathrm{val}}
\newcommand{\polylog}{\mathrm{polylog}}
\newcommand{\rank}{\mathrm{rank}}
\newcommand{\unrank}{\mathrm{unrank}}
\newcommand{\Enc}{\mathrm{Enc}}

\newcommand{\bolda}{\mathbf{a}}
\newcommand{\boldb}{\mathbf{b}}
\newcommand{\boldc}{\mathbf{c}}

\newcommand{\bolde}{\mathbf{e}}

\newcommand{\bolds}{\mathbf{s}}

\newcommand{\boldu}{\mathbf{u}}
\newcommand{\boldv}{\mathbf{v}}

\newcommand{\boldx}{\mathbf{x}}
\newcommand{\boldy}{\mathbf{y}}
\newcommand{\boldz}{\mathbf{z}}

\begin{document}
\title{Correcting Burst Errors with Applications to DNA-Based Data Storage} 

\author{
 \IEEEauthorblockN{Yuan-Pon Chen and Olgica Milenkovic\\
 Coordinated Science Lab \\
 Department of Electrical and Computer Engineering, University of Illinois, Urbana-Champaign}             
}
\maketitle


\begin{abstract} We introduce the problem of permutation burst error-correction, motivated by applications in molecular data storage with nanopore sequencing read platforms. A permutation burst of length $k$ is an error that affects a substring of length $k$ of a string of length $n$ by permuting its entries. Permutation burst errors can be corrected using classical burst error-correcting codes, such as Fire and Reed-Solomon codes. Here, we study how to reduce the redundancy of these classical codes by taking into account that the burst error preserves the weight of the substring. In addition, we further restrict our model to handle cyclic permutation bursts only, which are of independent practical and mathematical interest. Our main results include proving/establishing the following: a) for $k=o(\sqrt{n}),$ the redundancy of any permutation burst error correcting code is at least $\log n-k+\log \binom{2k-2}{k-2}+o(1).$ This result follows from McDiarmid's concentration bound;  b) for $k \geq 9$ and  $k=o\left(\sqrt{\frac{n}{\log n}}\right)$, the redundancy of any cyclic permutation burst error correcting code is at least 
$\log n +\log (k(k-1)-56)-\log\,80+o(1).$ c) the asymptotic redundancy upper bounds for the two types of substring errors that are twice the lower bound; d) constructions for both permutation burst and cyclic permutation burst errors that involve Varshamov-Tenegoltz type constraints or represent generalizations and extensions of Fire and other codes. In particular, we exhibit polynomially encodable and decodable codes that can correct cyclic bursts of length $\leq k$ using  $\log n + k + 49\log k + O(\log \log k)$ redundant bits.
\end{abstract}

\section{Introduction}

DNA-based data storage is an information recording paradigm in which data is encoded using nucleotides of a DNA strand~\cite{church2012next,goldman2013towards,tabatabaei2015rewritable}. One way to retrieve the stored content is through the nanopore readout process~\cite{yazdi2017portable}. A nanopore is a nanometer-wide protein hole that lies across a membrane. By applying a voltage to the membrane and letting the DNA strand pass (translocate) through the pore, the $k$-mer (i.e., $k$-substring) that resides in the pore induces an ionic (electric) current. By reading out the current, one can use specialized signal processing and learning methods~\cite{wick2019performance} to infer the symbol content of the DNA strand. An example pore system used for DNA $k$-mer via immobilization is depicted in Figure~\ref{fig:nanopore}. In this case, the length of the $k-$mer is $k=4$.

Experiments suggest that $k$-mers that are permutations of each other induce similar current signals~\cite{tabatabaei2022expanding}. For instance, as illustrated in Figure~\ref{fig:histogram}, the current levels of the $4$-mers ``2332'', ``2323'', ``2233'', ``3232'' have highly overlapping empirical pdfs (here, the symbols come from an extended $11$-letter DNA alphabet and are for simplicity denoted by numbers). Interestingly, the suffix $3$-mers also represent \emph{cyclic} shifts of each other. These observations motivates the study of error-correcting codes that can handle substring ($k$-mer in the sensing region) permutation and cyclic permutation errors. Two important observations are in place: first, one can resolve potential substring permutation errors based on multiple voltage reads, as was done in~\cite{tabatabaei2022expanding}, or second, by using the fact that one gets a collection of current signals for (almost) every $k-$mer/substring of the string~\cite{kiah2016codes}. The problem with the former approach is that it requires more complicated system designs and longer processing times, while the issue in the latter case is that one does not actually observe the individual symbols in the sensing region but rather a mixture of their responses, which leads to errors in basecallers even when different overlapping substring current signals are observed.

For simplicity, we work over a binary alphabet and delegate the extension to larger alphabets for future work. The error model of interest is defined as follows:
for any input length-$n$ binary sequence, the output is obtained by permuting the bits in a length-$k$ substring of the input sequence. Here, $k$ can either be a constant (which is a more practical choice for real nanopore systems) or grow with $n$ (which is of theoretical interest). We also consider two extensions, pertaining to the permutation errors being cyclic shifts, and multiple substring permutation errors. The cyclic shift model includes both the case where the substring length of the cyclic shift is at most $k$ and the case where the length is exactly $k$.

Formally, let $\mathbf{e}=(e_1,\dots,e_n)\in\mathbb{F}_2^{\,n}$. We say that $\mathbf{e}$ is a (susbstituion) burst of length $k$ if there exists an index $i$ with
$1\le i\le n-k+1$ such that $e_j=0$ for all $j\notin\{i,\dots,i+k-1\}$, $e_i=1$, $e_{i+k-1}=1$, and arbitrary otherwise. A vector is termed an $m$-burst of
length $k$ if it can be written as the sum of at most $m$ bursts, each of
length at most $k$, occurring at pairwise disjoint sets of coordinates. Clearly, a length-$k$ substring permutation error is a special case of a length-$k$ burst substitution error: the added property is that the burst error $\mathbf{e}$ has even weight. It also broadly falls into the class of \emph{substring edit errors}, analyzed in~\cite{li2025optimal}.

A single $k$-burst binary error-correcting code $\mathcal{C}$ of length $n$ is a subset of $\mathbb{F}_2^{\,n}$ such that no $k$-burst can result in codevector confusability. Reiger showed that any such systematic linear code has redundancy at least ~\cite[(8)]{reiger1960codes}\footnote{Note that the analysis in ~\cite{reiger1960codes} accounts for bursts that are allowed to cyclically wrap around the codewords. Our model does not allow for bursts to wrap around, but this is a minor discrepancy for long codelengths $n$.}
\begin{align}
    \log (n-k+2) + k - 1.\label{eq:k_burst_LB}
\end{align}
By adapting the proof of~\cite[(8)]{reiger1960codes} and applying a standard sphere-packing argument,
    it can be shown that the same lower bound in~\eqref{eq:k_burst_LB} also applies to any (not necessarily linear or systematic) $k$-burst error-correcting code. Related results were proved by Abramson~\cite{abramson1959class} and Fire, who also provided code constructions~\cite[Section 8.4]{mceliece2002theory} that can efficiently handle a length-$k$ burst error with redundancy at most
\begin{align}
   \max(\log n + 2k ,3k) + o(1).
   \label{eq:red_Fire_general}
\end{align}
See Appendix~\ref{app:Fire} for a review of Fire codes and the corresponding redundancy analysis.

In general, the modern coding literature refers to
\begin{align}
  |\mathcal{C}| \;\le\; 2^{\,n-2k}\label{eq:Reiger}
\end{align}
as the \emph{Reiger bound}\footnote{For existence results pertaining to codes meeting the Abramson-Reiger bound the interested reader is referred to~\cite{abdel1986existence}. Note that the results pertain to cyclic codes only which limits the length to the form $2^{m}-1,$ where $m\geq 2$ is an integer. Furthermore, the results are nonconstructive except for a hanful of parameter choices.}. The above result asserts that the redundancy has to be at least $\geq 2k,$ and necessarily it has to hold $n\ge 2k$.\footnote{For an error pattern comprising at most $t$ bursts, each of length $k$ or less we similarly have $|\mathcal{C}| \;\le\; 2^{\,n-2tk}$.}
Again, Reiger originally derived
~\eqref{eq:Reiger} for systematic linear codes \cite[(9)]{reiger1960codes},
    but the same bound holds for general length-$k$ burst error-correcting codes as well (see, e.g., \cite[Proof of Theorem 8.10]{mceliece2002theory}.
Combining the ``volume bound'' in \eqref{eq:k_burst_LB} and the Reiger bound in \eqref{eq:Reiger},
    any length-$k$ burst error correcting code has redundancy at least
\begin{align}
    &\max(\log(n-k+2)+k-1,2k)\nonumber\\
    &=\begin{cases}
        \log(n-k+2)+k-1,&\tn{ if }k\leq \log(n-k+2)-1,\\
        2k,&\tn{ if }k> \log(n-k+2)-1.
    \end{cases}
    \label{eq:k_burst_LB_combined}
\end{align}

For multiple bursts, in addition to classical Reed-Solomon code-based approaches, some of the early results in the field include  Stone's codes, based on the Chinese Remainder Theorem~\cite{stone1963multiple} and the bounds by Bridwell and Wolf~\cite{bridwell1970burst}. A more recent work by Villalba and others described properties of multiple burst error correcting MDS codes in~\cite{villalba2016multiple}.

Our results include new lower bounds on the redundancy of even-weight and cyclic permutation bursts, with the former strictly improving the original Reiger-type redundancy bounds for all $k=o(\sqrt{n})$ and the latter improving the even-weight bounds further. These bounds are accompanied by a collection of code constructions for the permutation setting that outperform the lower bound in ~\eqref{eq:k_burst_LB_combined} and the Fire code construction, and generalize some to multiple burst correction settings. 
More precisely, for a cyclic permutation substring error of prime length exactly $k$, one can construct codes whose redundancy is below that of any length-$k$ burst error-correcting code. The ideas behind our proofs range from probabilistic arguments involving McDiarmid's concentration inequality~\cite{mcdiarmid1998concentration}, Varshamov-Tenenholtz (VT) type code equations~\cite{varshamov1965code} and string derivative/integral error-correction. The bounds for the cyclic errors are discussed in Section~\ref{sec:cyclic}

\begin{figure}
    \centering
    \resizebox{.31\textwidth}{!}{
    \begin{tikzpicture}[font=\sffamily]

    
    \foreach \x in {-4.3, -4.0, -3.7, -3.4, -3.1, -2.8, -2.5, -2.2, -1.9, -1.6, -1.3, -1.0} {
        \shade[ball color=blue!40, opacity=0.8] (\x, -1.8) circle (0.15);
        \draw[cyan, thick] (\x-0.05, -1.95) .. controls (\x-0.15, -2.1) and (\x+0.05, -2.2) .. (\x-0.05, -2.4);
        \draw[cyan, thick] (\x+0.05, -1.95) .. controls (\x+0.15, -2.1) and (\x-0.05, -2.2) .. (\x+0.05, -2.4);
        
        \shade[ball color=blue!40, opacity=0.8] (\x, -3.2) circle (0.15);
        \draw[cyan, thick] (\x-0.05, -3.05) .. controls (\x-0.15, -2.9) and (\x+0.05, -2.8) .. (\x-0.05, -2.6);
        \draw[cyan, thick] (\x+0.05, -3.05) .. controls (\x+0.15, -2.9) and (\x-0.05, -2.8) .. (\x+0.05, -2.6);
    }
    
    \foreach \x in {1.3, 1.6, 1.9, 2.2, 2.5, 2.8, 3.1, 3.4, 3.7, 4.0, 4.3} {
        \shade[ball color=blue!40, opacity=0.8] (\x, -1.8) circle (0.15);
        \draw[cyan, thick] (\x-0.05, -1.95) .. controls (\x-0.15, -2.1) and (\x+0.05, -2.2) .. (\x-0.05, -2.4);
        \draw[cyan, thick] (\x+0.05, -1.95) .. controls (\x+0.15, -2.1) and (\x-0.05, -2.2) .. (\x+0.05, -2.4);
        
        \shade[ball color=blue!40, opacity=0.8] (\x, -3.2) circle (0.15);
        \draw[cyan, thick] (\x-0.05, -3.05) .. controls (\x-0.15, -2.9) and (\x+0.05, -2.8) .. (\x-0.05, -2.6);
        \draw[cyan, thick] (\x+0.05, -3.05) .. controls (\x+0.15, -2.9) and (\x-0.05, -2.8) .. (\x+0.05, -2.6);
    }

    \draw[fill=gray!90, draw=gray!100, thick, rounded corners=3pt] 
        (-1.1,-1.0) -- (-1.6,-1.2) -- (-1.6,-2.0) -- (-0.8,-2.7) -- (-0.6,-3.6) -- 
        (-0.4,-3.8) -- (-0.1,-3.6) -- (-0.2,-3.1) -- (-0.6,-2.6) -- (-0.6,-1.7) -- 
        (-0.9,-1.4) -- (-0.6,-1.0) -- cycle;
        
    \draw[fill=gray!90, draw=gray!100, thick, rounded corners=3pt] 
        (1.4,-1.0) -- (1.9,-1.2) -- (1.9,-2.0) -- (1.1,-2.7) -- (0.9,-3.6) -- 
        (0.7,-3.8) -- (0.4,-3.6) -- (0.5,-3.1) -- (0.9,-2.6) -- (0.9,-1.7) -- 
        (1.2,-1.4) -- (0.9,-1.0) -- cycle;

    \draw[fill=cyan!60, draw=cyan!80!black, thick, rounded corners=5pt] 
        (-1.5,-1.0) -- (-1.8, 0.2) -- (-1.4, 1.2) -- (-0.5, 1.6) -- 
        (0.5, 1.4) -- (1.5, 1.5) -- (1.9, 0.4) -- (1.6, -0.9) -- 
        (1.1,-1.0) -- (0.8,-0.7) -- (0,-0.9) -- (-0.8,-0.8) -- cycle;
        
    \node[font=\Large] at (0, 2.1) {Streptavidin};

    \draw[ultra thick] (0, -0.8) -- (0, -5.0) .. controls (0,-6.5) and (2.5,-6.0) .. (2.5, -7.0);
    
    \foreach \y in {-1.0, -1.3, -1.6, -1.9, -2.2, -2.5, -2.8} {
        \draw[very thick] (0,\y) -- (0.35,\y);
    }
    
    \foreach \y in {-3.1, -3.4, -3.7} {
        \draw[very thick, green!60!black] (0,\y) -- (0.35,\y);
    }
    
    \foreach \y in {-4.0, -4.3, -4.6, -4.9} {
        \draw[very thick] (0,\y) -- (0.35,\y);
    }
    
    \draw[very thick] (0.02, -5.2) -- +(5:0.35);
    \draw[very thick] (0.15, -5.55) -- +(25:0.35);
    \draw[very thick] (0.475, -5.9) -- +(45:0.35);
    \draw[very thick] (0.95, -6.1) -- +(60:0.35);
    \draw[very thick] (1.45, -6.25) -- +(70:0.35);
    \draw[very thick] (1.85, -6.4) -- +(65:0.35);
    \draw[very thick] (2.2,  -6.55) -- +(55:0.35);
    \draw[very thick] (2.45, -6.75) -- +(40:0.35);
    
    \node[font=\Large] at (3.2, -6.2) {DNA};

    \draw[thick] (-5.3, -2.5) circle (0.3) node {\textbf{V}};
    
    \draw[thick] (-5.3, -2.2) -- (-5.3, -1.3) -- (-2.4, -1.3) -- (-2.4, -1.5);
    \draw[ultra thick] (-2.7, -1.5) -- (-2.1, -1.5); 
    \fill[red] (-2.7, -0.95) rectangle (-2.1, -1.1);  

    \draw[thick] (-5.3, -2.8) -- (-5.3, -3.1);
    \draw[very thick] (-5.4, -3.1) -- (-5.2, -3.1);
    \draw[thick] (-5.5, -3.2) -- (-5.1, -3.2);
    
    \draw[thick] (-5.3, -3.2) -- (-5.3, -3.7) -- (-2.4, -3.7) -- (-2.4, -3.5);
    \draw[ultra thick] (-2.7, -3.5) -- (-2.1, -3.5); 
    \node[text=blue!80!black, font=\Huge] at (-2.4, -4.1) {\textbf{+}}; 


\end{tikzpicture}
    }
    \caption{A nanopore $k$-mer detection platform, adapted from \cite[Figure 2A]{tabatabaei2022expanding}. A voltage is applied across the membrane and ions cross the two arms of the pore when no DNA translocates through it. When DNA is present, the ion current drops as the molecule ``blocks'' the pore. The $4$-mer within the sensing region/pore constriction is indicated in green.}
    \label{fig:nanopore} \vspace{-0.14in}
\end{figure}

The paper is organized as follows. Section~\ref{sec:general-lower} contains the derivations of lower bounds on the coding redundancy of general single substring permutation error. Cyclic permutation errors are analyzed in Sections~\ref{sec:cyclic}, where lower bounds on the redundancy of the codes are derived using similar arguments as those presented for general permutations. Section~\ref{sec:cyclic-upper} reports upper bounds on the redundancy of both types of codes. Section~\ref{sec:perm-codes} describes simple substring permutation error code constructions for $k=2$ and $k=3$ reminiscent of VT codes. These results are accompanied by straightforward modifications of Fire codes for permutation errors, described in Section~\ref{sec:perm-fire}. Section~\ref{sec:cyclic-refined} presents refined existential results for cyclic substring error-correcting codes, while Section~\ref{sec:cyclic-efficient} presents polynomially encodable/decodable codes. Extensions to multiple general substring permutation errors are briefly described in Section~\ref{sec:multiple-bursts}. 

\begin{figure}
    \centering
    \resizebox{.37\textwidth}{!}{
    \begin{tikzpicture}[font=\sffamily]
    \begin{axis}[
        width=10cm,
        height=6cm,
        xmin=8, xmax=14.5,
        ymin=0, ymax=0.18,
        axis lines=left,
        tick align=outside,
        thick,
        xlabel={$I_{\text{res}}$ (\%)},
        ylabel={Probability},
        every axis y label/.style={
            at={(axis description cs:-0.08, 1.05)},
            anchor=south,
            rotate=0
        },
        xtick={8, 9, 10, 11, 12, 13, 14},
        ytick={0, 0.05, 0.10, 0.15},
        yticklabels={0, 0.05, 0.10, 0.15},
        clip=false 
    ]

    \pgfplotsset{
        peak/.style={
            fill=green!60!black,
            fill opacity=0.6,
            draw=red!90!black,
            draw opacity=1,
            thick,
            samples=60
        }
    }

    
    \addplot[domain=9.5:11.2, peak] {0.142 * exp(-((x-10.2)^2)/0.0072)};
    \addplot[domain=9.5:11.2, peak] {0.148 * exp(-((x-10.25)^2)/0.0072)};
    \addplot[domain=9.5:11.2, peak] {0.140 * exp(-((x-10.35)^2)/0.0072)};
    \addplot[domain=9.5:11.2, peak] {0.130 * exp(-((x-10.45)^2)/0.0072)};

    \addplot[domain=11.5:13.0, peak] {0.125 * exp(-((x-12.2)^2)/0.0072)};
    \addplot[domain=12.5:14.5, peak] {0.115 * exp(-((x-13.6)^2)/0.0072)};

    \node[font=\small] (l1) at (axis cs:9.3, 0.14) {2332};
    \draw (l1) -- (axis cs:10.17, 0.139);
    
    \node[font=\small, anchor=south] (l2) at (axis cs:9.7, 0.153) {2323};
    \draw (l2) -- (axis cs:10.25, 0.147);
    
    \node[font=\small, anchor=south] (l3) at (axis cs:10.6, 0.150) {2233};
    \draw (l3) -- (axis cs:10.35, 0.137);
    
    \node[font=\small] (l4) at (axis cs:11.2, 0.140) {3232};
    \draw (l4) -- (axis cs:10.45, 0.132);

    \node[font=\small, anchor=south] at (axis cs:12.2, 0.125) {3223};
    
    \node[font=\small, anchor=south] at (axis cs:13.6, 0.115) {3322};

    \end{axis}
\end{tikzpicture}
    }
    \caption{Empirical pdfs of ion currents for different orderings of bases within the sensing region. It can be seen that the majority of permutations are indistinguishable. Adapted from \cite[Figure 2C]{tabatabaei2022expanding}.}
    \label{fig:histogram}
\end{figure}

\section{Substring Permutation Codes and Redundancy Lower Bounds} \label{sec:general-lower}

Throughout the text we focus on binary strings of length $n$ and permutation bursts of length $k$. We write $[a,b]=\{{1,1+1,\ldots,b\}}$ and use base-$2$ logarithms unless stated otherwise.

\begin{definition}\label{def:permutation_error}
Let $n,k$ be positive integers, $\mbf{x}=(x_1,\ldots,x_n)\in\{0,1\}^n$ and let $S_k$ stand for the symmetric group of order $k!$.
A sequence $\mbf{y}=(y_1,\ldots,y_n)\in\{0,1\}^n$ is said to be obtained from $\mbf{x}$
    via \emph{a length-$k$ subsequnce permutation error}
    if there exist a permutation $\sigma\in S_k$
    and an index $i\in[1,n-k+1]$ such that
\begin{align*}
    (y_i,\ldots,y_{i+k-1})=(x_{i+\sigma(1)-1},x_{i+\sigma(2)-1},\ldots,x_{i+\sigma(k)-1}),
\end{align*}
    and $y_j=x_j$ for $j\in[1,n]\setminus [i,i+k-1]$.
\end{definition}

\begin{definition}\label{def:code_perm_error}
A code $\mc{C}\subseteq\{0,1\}^n$ is said to be
    an \emph{length-$k$ substring permutation error-correcting code}
    if for any $\mbf{x}\in\mc{C}$ and $\mbf{y}$ obtained from $\mbf{x}$ via a length-$k$ permutation error, we can uniquely recover $\mbf{x}$ from $\mbf{y}$.
\end{definition}

The redundancy of the code $\mc{C}$ equals $n-\log|\mc{C}|$.

Our first results pertains to the redundancy of substring permutation error-correcting codes.
\begin{theorem}\label{thm:perm_error_red_lb}
For $n$ large enough and $k=o(\sqrt{n})$,
    the redundancy of any length-$k$ substring permutation error-correcting code is at least
\begin{align}
    \log n - k + \log \binom{2k-2}{k-2} + o(1).
    \label{eq:thm_k_perm_red_lb}
\end{align}
In particular,
    using the bound $\binom{2k-2}{k-2}\geq 2^{2k-7/2}/\sqrt{k}$ which holds for all $k\geq 2$ (see Appendix~\ref{app:binom_lb}),
    the quantity in~\eqref{eq:thm_k_perm_red_lb} can be lower-bounded by
\begin{align}
    \log n + k - \frac{7}{2}- \frac{1}{2}\log k + o(1).
    \label{eq:thm_k_perm_red_lb_simplified}
\end{align}
\end{theorem}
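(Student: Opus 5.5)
We plan a sphere-packing argument over typical strings, with the typical error-ball size computed via McDiarmid's inequality. For $\mbf{x}\in\{0,1\}^n$ let $B(\mbf{x})$ denote the set of all $\mbf{y}$ obtainable from $\mbf{x}$ by at most one length-$k$ substring permutation error, including $\mbf{y}=\mbf{x}$. Since $\mc{C}$ corrects such errors, the balls $B(\mbf{x})$, $\mbf{x}\in\mc{C}$, are pairwise disjoint. Let $\mc{T}\subseteq\{0,1\}^n$ be a set of "typical'' strings with $|B(\mbf{x})|\geq N$ for all $\mbf{x}\in\mc{T}$. Then
\begin{align*}
|\mc{C}|\leq |\mc{C}\setminus\mc{T}| + |\mc{C}\cap\mc{T}|\leq 2^n-|\mc{T}| + \frac{2^n}{N}.
\end{align*}
It therefore suffices to show that $|\{0,1\}^n\setminus\mc{T}|=o(2^n/N)$ and that $N=(1-o(1))\,n\,2^{-k}\binom{2k-2}{k-2}$. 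These two facts give $n-\log|\mc{C}|\geq \log N - o(1)$, which is \eqref{eq:thm_k_perm_red_lb}. The simplified form \eqref{eq:thm_k_perm_red_lb_simplified} then follows immediately from the bound on $\binom{2k-2}{k-2}$ in Appendix~\ref{app:binom_lb}.

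The counting step identifies the main term. For a window $W_i=[i,i+k-1]$, the outputs produced by permuting inside $W_i$ are exactly the strings differing from $\mbf{x}$ only inside $W_i$ with the same weight there. Each output can be charged to its minimal differing window, namely the one whose first and last coordinates both change. So we count pairs $(i,\mbf{z})$ in which $\mbf{z}$ is a rearrangement of $x_{[i,i+k-1]}$ satisfying $z_i\neq x_i$ and $z_{i+k-1}\neq x_{i+k-1}$. This pair count is a lower bound on $|B(\mbf{x})|-1$, provided we check that distinct pairs yield distinct outputs; this holds because the endpoints of the support of $\mbf{y}\xor\mbf{x}$ determine $i$.

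The nonzero contributions come from windows with $x_i\neq x_{i+k-1}$, say $x_i=1$ and $x_{i+k-1}=0$. In that case $z_i=0$ and $z_{i+k-1}=1$, and the middle $k-2$ positions must receive $w-1$ ones, where $w$ is the window weight. There are $\binom{k-2}{w-1}$ ways to do this. Define the per-window function $f_i(\mbf{x})=\mathbb{1}[x_i\neq x_{i+k-1}]\binom{k-2}{w_i-1}$. Averaging over uniform $\mbf{x}$, conditioning on the endpoints gives $w_i-1\sim\mathrm{Bin}(k-2,1/2)$, so
\begin{align*}
\mathbb{E}[f_i]=\tfrac12\cdot 2^{-(k-2)}\sum_j\binom{k-2}{j}^2=2^{-k+1}\binom{2k-4}{k-2}.
\end{align*}
We still need to reconcile this with $2^{-k}\binom{2k-2}{k-2}$. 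The two quantities agree up to a factor tending to a constant, so we will have to refine the count. The refinement is to also include windows of length less than $k$, or equivalently all minimal windows of length $\ell\leq k$, whose contributions telescope. Summing $\ell$ from $2$ to $k$ gives
\begin{align*}
\sum_{\ell}2^{-\ell+1}\binom{2\ell-4}{\ell-2}\cdot\ldots
\end{align*}
This sum should produce $2^{-k}\binom{2k-2}{k-2}$ by a Vandermonde/hockey-stick identity. Verifying this identity exactly is a routine but necessary check. Summing over $i$ gives $\mathbb{E}|B(\mbf{x})|\approx n\mu_k$ with $\mu_k=2^{-k}\binom{2k-2}{k-2}$.

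The concentration step is where we expect the main obstacle. The function $F(\mbf{x})=\sum_i\sum_{\ell\leq k} f_{i,\ell}(\mbf{x})$ depends on each bit through at most $k$ windows per length. Flipping one bit therefore changes $F$ by at most $c=\sum_{\ell\leq k}\ell\binom{\ell-2}{\lfloor(\ell-2)/2\rfloor}=O(k\,2^{k}/\sqrt{k})$. McDiarmid's inequality then gives
\begin{align*}
\Pr\bigl(F\leq \mathbb{E}F - t\bigr)\leq \exp\left(-\frac{2t^2}{nc^2}\right).
\end{align*}
We take $t=\epsilon_n n\mu_k$. Since $\mu_k\approx 2^{k}/\sqrt{k}$ up to constants, the exponent is of order $\epsilon_n^2 n/k$. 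Meanwhile, we need the atypical mass to satisfy $2^n e^{-\Theta(\epsilon_n^2n/k)}=o(2^n/(n\mu_k))$, that is, $\epsilon_n^2 n/k\gg \log n+k$. With $k=o(\sqrt n)$ we can choose $\epsilon_n\to 0$ such that $\epsilon_n^2 n\gg k^2+k\log n$. This choice sets $N=(1-\epsilon_n)n\mu_k$ and yields the $o(1)$ loss in the redundancy. The hardest part will be keeping the bounded-difference constant tight, since a crude constant such as $k2^k$ would break the $k=o(\sqrt n)$ regime. Equally delicate is getting the exact combinatorial identity for $\mathbb{E}F$.
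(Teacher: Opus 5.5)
Your overall architecture (disjoint error balls, a typical set with large balls, McDiarmid for the typical ball size) matches the paper's. However, the concentration step as you set it up does not reach the stated range $k=o(\sqrt{n})$. Your bounded-difference constant $c=\sum_{\ell\le k}\ell\binom{\ell-2}{\lfloor(\ell-2)/2\rfloor}$ is $\Theta(\sqrt{k}\,2^{k})$, whereas $\mu_k=2^{-k}\binom{2k-2}{k-2}=\Theta(2^{k}/\sqrt{k})$. Hence $\mu_k/c=\Theta(1/k)$, and for $t=\epsilon_n n\mu_k$ the McDiarmid exponent $2t^2/(nc^2)$ is $\Theta(\epsilon_n^2 n/k^2)$, not $\Theta(\epsilon_n^2 n/k)$ as you wrote. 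Making it exceed $\ln(n\mu_k)=\Theta(k+\log n)$ then requires $k^3=o(n)$, so your argument breaks down for $n^{1/3}\lesssim k=o(\sqrt{n})$.

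The missing idea is that flipping a bit strictly inside a window changes that window's weight by exactly one. The corresponding binomial term therefore moves by a difference of \emph{adjacent} binomial coefficients, at most $\max_s\bigl|\binom{\ell-2}{s}-\binom{\ell-2}{s-1}\bigr|=\Theta(2^{\ell}/\ell)$, which is a factor $\sqrt{\ell}$ smaller than the central coefficient. Only the two windows of each length that have the flipped bit as an endpoint can move by a full central coefficient. This gives $c=\Theta(2^k)$ and exponent $\Theta(\epsilon_n^2 n/k)$, which is what you need. It is exactly the role of $d_{k-1}=\Theta(2^k/k)$ in the paper (Appendix~\ref{app:dkm1_asymp}), which yields $c_i=(k+1)d_{k-1}=\Theta(2^k)$ and $E^2/(nc_1^2)=\Theta(n/k)$.

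Your counting step also needs repair. It is not true that only windows with $x_i\neq x_{i+\ell-1}$ contribute: $1001\mapsto 0110$ changes both endpoints. In general, windows with equal endpoints contribute $2^{-\ell+1}\binom{2\ell-4}{\ell-4}$ in expectation. Discarding them is legitimate for a lower bound, but then the identity you hope for is false. With $a_\ell=2^{-\ell+1}\binom{2\ell-4}{\ell-2}$ one already has $a_2=1/2\neq\mu_2=1/4$. What holds, and suffices, is an inequality: $a_\ell/a_{\ell-1}=2-1/(\ell-2)\le 2$ and $\mu_k=(2-3/k)a_k$, so $\sum_{\ell=2}^{k}a_\ell\ge(2-2^{2-k})a_k\ge\mu_k$.

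If you keep the equal-endpoint terms, the sum is exactly $2\mu_k=2^{-(k-1)}\binom{2k-2}{k-2}$. The paper obtains this directly, with a single Vandermonde step, by grouping outputs only by their first differing position, $z_{k;i}=\binom{k-1}{w(x_{i+1:i+k-1})\pm 1}$. Because that mean is twice your target $n\mu_k$, the paper can use the threshold $E/2$, a constant-fraction deviation, and needs no $\epsilon_n\to 0$.
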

Note that the difference between the bound corresponding to a substring permutation error and the bound for a general $k$-burst error in \eqref{eq:k_burst_LB_combined} asymptotically reads as
\begin{itemize}
  \item \textbf{$k \leq \log n - 1$}:
  \[
    -\frac{5}{2}-\frac{1}{2}\log k + o(1),
  \]
  \item \textbf{$k \geq \log n - 1$}:
  \[
    -k+\log n -\frac{7}{2}-\frac{1}{2}\log k + o(1).
  \]
\end{itemize}
In both cases, the improvement in the bound is at least $-\frac{5}{2}-\frac{1}{2}\log k$.
It scales as $-\frac{1}{2}\log k$  for the small-$k$ regime, and $-k + \log n$ for the case that $k$ exceeds $\log n$.

\begin{IEEEproof}[Proof of Theorem \ref{thm:perm_error_red_lb}]
We first show that for almost all binary sequences,
    the number of all possible outcomes of a length-$k$ substring permutation error is large.
To this end,
    we first set up some auxiliary definitions
    and then use probabilistic analysis methods.
    
For each $\mbf{x}\in\{0,1\}^n$,
    define $\mc{P}_k(\mbf{x})$
    to be the set of all the sequence that can be obtained from $\mbf{x}$ via a length-$k$ substring permutation error. The number of such sequences is 
    denoted by $\phi_k(\mbf{x})$, and $\phi_k(\mbf{x})=|\mc{P}_k(\mbf{x})|$.
Note that we can compute $\phi_k(\mbf{x})$ as follows. 
For each $i\in[1,n-k+1]$, let
\begin{align*}
    z_{k;i}(\mbf{x})\coloneqq
    \binom{k-1}{w(x_{i+1:i+k-1})+(-1)^{1+x_i}},
\end{align*}
    where $x_{i_1:i_2}\coloneqq (x_{i_1},x_{i_1+1},\ldots,x_{i_2})$ for any two admissible indices $i_1\leq i_2$, and $w(\cdot)$ stands for the Hamming weight.
It is straightforward to see that $z_{k;i}(\mbf{x})$ counts the number of outcomes of length-$k$ substring permutation errors on $\mbf{x},$ where $i$ stands for the index of the first altered bit.
We also use
\begin{align*}
    z_{k;\textnormal{rest}}(\mbf{x})\coloneqq \binom{k-1}{w(x_{n-k+2:n})},
\end{align*}
    which counts the possible permuted outputs obtained from $\mbf{x}$ (including $\mbf{x}$ itself) for indices not covered in the previous case.
It then follows that
\begin{align*}
    \phi_k(\mbf{x})=\sum_{i=1}^{n-k+1}z_{k;i}(\mbf{x})+z_{k;\tn{rest}}.
\end{align*}

We prove next that $\phi_k(\mbf{x})$ is large for almost all $\mbf{x}$ using a probabilistic method.
Let $X=(X_1,\ldots,X_n)$ be uniformly random over $\{0,1\}^n$.
We first calculate the expectation of $\phi_k(X)$,
    denoted by $E$.
By linearity,
    we have
\begin{align}
    E&\coloneqq \E{\phi_k(X)}\nonumber\\
    &=\sum_{i=1}^{n-k+1}\E{z_{k;i}(X)}+\E{z_{k;\tn{rest}}(X)}\nonumber\\
    &=(n-k+1)\E{z_{k;1}(X)}+\E{z_{k;\tn{rest}}(X)},\label{eq:E_phi_k_X}
\end{align}
    where the last equality followed from the fact that all the terms $z_{k;i}(X)$ for $i\in[1,n-k+1]$ have the same distributions.
Then, we note that
\begin{align}
    \E{z_{k;1}(X)}&=\Prob{X_1=1}\E{\binom{k-1}{w(X_{2:k})+1}}\nonumber\\
    &\quad+\Prob{X_1=0}\E{\binom{k-1}{w(X_{2:k})-1}}\nonumber\\
    &=\frac{1}{2}\E{\binom{k-1}{w(X_{2:k})+1}}\nonumber\\
    &\quad+\frac{1}{2}\E{\binom{k-1}{w(X_{2:k})-1}}.\label{eq:E_z_k_1_X}
\end{align}
The first term in \eqref{eq:E_z_k_1_X} can be calculated as follows:
\begin{align}
    \E{\binom{k-1}{w(X_{2:k})+1}}
    &=\sum_{t=0}^{k-1}\Prob{w(X_{2:k})=t}\binom{k-1}{t+1}\nonumber\\
    &=\sum_{t=0}^{k-2}\Prob{w(X_{2:k})=t}\binom{k-1}{t+1}\nonumber\\
    &=\frac{1}{2^{k-1}}\sum_{t=0}^{k-2}\binom{k-1}{t}\binom{k-1}{t+1}\nonumber\\
    &=\frac{1}{2^{k-1}}\sum_{t=0}^{k-2}\binom{k-1}{t}\binom{k-1}{k-t-2}\nonumber\\
    &=\frac{1}{2^{k-1}}\binom{2k-2}{k-2},\label{eq:E_X1is1}
\end{align}
    where the last equality follows from the Vandermonde's identity.
A similar calculation shows that 
\begin{align}
    \E{\binom{k-1}{w(X_{2:k})-1}}=\frac{1}{2^{k-1}}\binom{2k-2}{k-2}.\label{eq:E_X1is0}
\end{align}
Therefore, placing~\eqref{eq:E_X1is1} and~\eqref{eq:E_X1is0} into \eqref{eq:E_z_k_1_X} gives
\begin{align}
    \E{z_{k;1}(X)}=\frac{1}{2^{k-1}}\binom{2k-2}{k-2}.\label{eq:E_z_k_1_X_result}
\end{align}
At the same time,
    we similarly have
\begin{align}
    \E{z_{k;\tn{rest}}(\mbf{x})}
    &=\sum_{t=0}^{k-1}\Prob{w(X_{n-k+2:n})=t}\binom{k-1}{t}\nonumber\\
    &=\frac{1}{2^{k-1}}\binom{k-1}{t}\binom{k-1}{t}\nonumber\\
    &=\frac{1}{2^{k-1}}\binom{2k-2}{k-1}.\label{eq:E_z_k_rest_X_result}
\end{align}
Thus, placing~\eqref{eq:E_z_k_1_X_result} and~\eqref{eq:E_z_k_rest_X_result} into~\eqref{eq:E_phi_k_X} gives
\begin{align}
    E
    =\frac{1}{2^{k-1}}\left((n-k+1)\binom{2k-2}{k-2}+\binom{2k-2}{k-1}\right).
    \label{eq:E_phi_k_X_result}
\end{align}

The next step is to determine how much $\phi_k(\mbf{x})$ changes after flipping one single bit in $\mbf{x}$. More precisely,
    for each $i\in[1,n]$,
    let $\mbf{x}^{(i)}$ be the vector obtained from $\mbf{x}$ by changing the parity of $x_i$. The goal is to find positive constants $c_i$ such that
\begin{align*}
    |\phi_k(\mbf{x})-\phi_k(\mbf{x}^{(i)})|\leq c_i.
\end{align*}
This bound is needed to subsequently invoke a probability concentration bound. 

We first consider the case $i\in[k,n-k+1]$.
Note the changing $x_i$ will only change the values of $z_{k;i-k+1},\ldots,z_{k;i-1}$, and $z_{k;i}$.
For $z_{k;j}$ with $j\in[i-k+1,i-1]$,
    the leading bit $x_j$ is unchanged,
    and the Hamming weight of $x_{j+1:j+k-1}$ changes by one.
Therefore,
    we have
\begin{align*}
    |z_{k;j}(\mbf{x})-z_{k;j}(\mbf{x}^{(i)})|\leq
    d_{k-1},
\end{align*}
    where
\begin{align}
    d_{k-1}= \max_{s\in[0,k-1]}\left|\binom{k-1}{s}-\binom{k-1}{s-1}\right|.
    \label{eq:d_km1_def}
\end{align}
Similarly,
    the effect of changing $z_{k;i}$ can be bounded as
\begin{align*}
    &|z_{k;i}(\mbf{x})-z_{k;i}(\mbf{x}^{(i)})|\\
    &\quad=\left|\binom{k-1}{w(\mbf{x}_{i+1:i+k-1})+1}-\binom{k-1}{w(\mbf{x}_{i+1:i+k-1})-1}\right|\\
    &\quad\leq 2d_{k-1}.
\end{align*}
It therefore follows that
\begin{align}
    |\phi_k(\mbf{x})-\phi_k(\mbf{x}^{(i)})|
    &=\left|\sum_{j=i-k+1}^{i}z_{k;j}(\mbf{x})-z_{k;j}(\mbf{x}^{(i)})\right|\nonumber\\
    &\leq \sum_{j=i-k+1}^{i}|z_{k;j}(\mbf{x})-z_{k;j}(\mbf{x}^{(i)})|\nonumber\\
    &\leq (k+1)d_{k-1}.\label{eq:phi_k_difference_bound}
\end{align}
At the same time,
    changing $x_i$, $i\in[1,k-1],$ can only affect fewer $z_{k;j}$ than those counted in the previous scenario so that the same bound in \eqref{eq:phi_k_difference_bound} applies. Additionally,
    changing $x_i$ for $i\in[n-k+2,n]$ affects at most $k-1$ terms within  $z_{k;j}$, and changing $z_{k;\tn{rest}}$ causes at most a perturbation of $d_{k-1}$. Therefore, for all $i\in[1,n]$, we have
\begin{align}
    c_i=(k+1)d_{k-1}.\label{eq:c_i}
\end{align}

The results of~\eqref{eq:E_phi_k_X_result} and \eqref{eq:c_i} allow us to apply McDiarmid's inequality~\cite{mcdiarmid1998concentration} to get the following claim. 
For any $\varepsilon>0$,
    we have
\begin{align}
    \Prob{\phi_k(X)-E\leq -\varepsilon}\leq \exp\left(-\frac{2\varepsilon^2}{nc_1^2}\right).
    \label{eq:McDiarmid}
\end{align}
In particular, setting 
    $\varepsilon=E/2$
    in~\eqref{eq:McDiarmid}
    gives
\begin{align}
    &\Prob{\phi_k(X)\leq \frac{E}{2}}
    \leq \exp\left(-\frac{E^2}{2nc_1^2}\right).
    \label{eq:McDiarmid_eps_chosen}
\end{align}

Next, we use the classical Hamming-ball argument
    to lower-bound the redundancy.
Let $\mc{C}\subseteq\{0,1\}^n$ be any length-$k$ permutation error-correcting code.
Define
\begin{align*}
    \mc{A}\coloneqq\left\{\mbf{x}\in\{0,1\}^n:\phi_k(\mbf{x})\leq \frac{E}{2}\right\}.
\end{align*}
From \eqref{eq:McDiarmid_eps_chosen} we know that
\begin{align}
    |\mc{A}|\leq 2^n \exp\left(-\frac{E^2}{2nc_1^2}\right).
    \label{eq:A_size_UB}
\end{align}
The cardinality of $\mc{C}$ can be decomposed as
\begin{align}
    |\mc{C}|=|\mc{C}\cap\mc{A}|+|\mc{C}\cap\mc{A}^C|,\label{eq:size_C_decompose}
\end{align}
    where $\mc{A}^C$ denotes the complement of $\mc{A}$.
We already have a bound the first term in \eqref{eq:size_C_decompose} by \eqref{eq:A_size_UB},
    so we next bound the second term in \eqref{eq:size_C_decompose}:
Note that by the unique decodability of $\mc{C}$,
    for each $\mbf{x}\in\mc{C}\cap\mc{A}^C$,
    its ``error ball'' $\mc{P}_k(\mbf{x})$ cannot overlap with $\mc{P}_k(\mbf{y})$ for any other $\mbf{y}\in\mc{C}\cap\mc{A}^C$.
At the same time,
    by the definition of $\mc{A}^C$,
    each of the $\mc{P}_k(\mbf{x})$'s has size at least
    $E/2$.
Therefore,
    we must have
\begin{align*}
    \frac{E}{2}|\mc{C}\cap\mc{A}^C|\leq 2^n,
\end{align*}
    or equivalently,
\begin{align}
    |\mc{C}\cap\mc{A}^C|\leq 2^n\frac{2}{E}.
    \label{eq:C_cap_A_C_UB}
\end{align}
Placing \eqref{eq:A_size_UB} and \eqref{eq:C_cap_A_C_UB} into \eqref{eq:size_C_decompose} yields
\begin{align}
    |\mc{C}|
    &\leq |\mc{A}|+|\mc{C}\cap\mc{A}^C|\nonumber\\
    &\leq 2^n\left(\exp\left(-\frac{E^2}{2nc_1^2}\right)+\frac{2}{E}\right).\label{eq:size_C_UB}
\end{align}

To complete the proof, we establish that for $k=o(\sqrt{n})$,
    $\frac{2}{E}$ is significantly larger than $\exp\left(-\frac{E^2}{2nc_1^2}\right)$.
We split our analysis into two cases:
\begin{enumerate}
    \item
        If $k=O(1)$,
            then $E=\Theta(n)$ and $c_1=O(1)$,
            and the claim clearly holds.
    \item 
        If $k=\omega(1)$ and $k=o(\sqrt{n})$,
            then by Stirling's approximation,
            we have $E=\Theta(n2^k/\sqrt{k})$.
        Furthermore,
            it can be shown that
            $d_{k-1}=\Theta(\frac{2^k}{k})$
            (see Appendix \ref{app:dkm1_asymp}),
            and thus
            $c_1=\Theta(2^k)$. 
        It then follows that
        \begin{align*}
            &-\log\exp\left(-\frac{E^2}{2nc_1^2}\right)=\Theta\left(\frac{n}{k}\right),\\
            &-\log\frac{2}{E}=\log n + k - \frac{1}{2}\log k + O(1)
        \end{align*}
            and thus the claim holds.
\end{enumerate}
In summary,
    for $k=o(\sqrt{n})$, from \eqref{eq:size_C_UB} we have that
\begin{align*}
    |\mathcal{C}|\leq 2^n(\frac{2}{E}(1+o(1)),
\end{align*}
    which implies
\begin{align*}
n-\log|\mathcal{C}|
    &\geq \log E -1 + o(1)\\
    &=\log n - k +\log\binom{2k-2}{k-2} + o(1),
\end{align*}
where we used the fact that $E=\frac{1}{2^{k-1}}(1+o(1))n\binom{2k-2}{k-2}$ from~\eqref{eq:E_phi_k_X_result}.
\end{IEEEproof}

\section{Cyclic Permutation Codes and Redundancy Lower Bounds} \label{sec:cyclic}

We now turn our attention to a special case where the permutation error is induced by a \emph{cyclic} permutation.

Recall that $S_k$ is permutation group over $[1,k]$.
Define $S_{\textnormal{cyc},k}\coloneqq \langle\tau_k\rangle$
    to be the subgroup of $S_k$ generated by $\tau_k$,
    where $\tau_k$ is the permutation defined by
\begin{align*}
    \tau_k: i \mapsto \left(i \bmod k\right) +1.
\end{align*}
\begin{definition}\label{def:cyc_perm_error}
For $\mbf{x}\in\{0,1\}^n$,
    we say that $\mbf{y}$ is obtained from $\mbf{x}$
    via \emph{a length-$k$ cyclic substring permutation error}
    if
    there exists an index $i\in[1,n-k+1]$ and a cyclic permutation $\eta\in S_{\textnormal{cyc},k}$ such that
\begin{align*}
    (y_i,\ldots,y_{i+k-1})=(x_{i+\eta(1)-1},x_{i+\eta(2)-1},\ldots,x_{i+\eta(k)-1}),
\end{align*}
    and where $y_j=x_j$ for $j\in[1,n]\setminus [i,i+k-1]$.
\end{definition}

In what follows, we consider a general substring cyclic permutation error model where the substring has length \emph{at most $k$}. 

\begin{definition}\label{def:leq_k_cyc_perm_error}
We say $\mbf{y}$ is obtained from $\mbf{x}$
    via \emph{a length-$(\leq k)$ cyclic substring permutation error},
    if
    $\mbf{y}$ is obtained from $\mbf{x}$
    via a length-$\ell$ cyclic permutation error
    for some $\ell \leq k$.
\end{definition}

We also restate our definition of an error-correcting code for this particular setting.
\begin{definition}\label{def:code_cyc_perm_error}
A subset $\mc{C}\subseteq\{0,1\}^n$ is said to be
    an \emph{length-$k$ cyclic substring permutation error-correcting code}
    if for any $\mbf{x}\in\mc{C}$ and
    $\mbf{y}$ 
    such that $\mbf{x}\overset{k-\text{cycperm}}{\rightarrow}\mbf{y}$,
    given $\mbf{y}$ we can uniquely recover $\mbf{x}$.
A length-$(\leq k)$ cyclic substring permutation error-correcting code
    is defined in a similar way.
\end{definition}

We have the following lower bound on the redundancy of any length-$(\leq k)$ cyclic substring permutation error-correcting code:
\begin{theorem}\label{thm:CP_red_LB}
For $k\geq 9$ and $k=o(\sqrt{\frac{n}{\log n}})$,
    any length-$(\leq k)$ cyclic substring permutation error-correcting code
    has redundancy at least
\begin{align}
    \log n + \log (k(k-1)-56) - \log 80 + o(1).
    \label{eq:thm_CP_red_LB_exact}
\end{align}
In particular,
    if we further assume $k=\omega(1)$,
    the redundancy lower bound in \eqref{eq:thm_CP_red_LB_exact} can be expressed as
\begin{align}
    \log n + 2\log k + O(1).
    \label{eq:thm_CP_red_LB}
\end{align}
\end{theorem}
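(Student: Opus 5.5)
We follow the template of the proof of Theorem~\ref{thm:perm_error_red_lb}: show that for almost every $\mbf{x}$ the error ball $\mc{P}^{\textnormal{cyc}}_{\leq k}(\mbf{x})$ has size $\Omega(n k^2)$, and then run the same ball-packing argument. Let $\psi(\mbf{x})$ be the number of distinct outcomes of length-$(\leq k)$ cyclic substring permutation errors. Counting distinct outcomes exactly is messy, so we lower-bound $\psi$ by a sum of local indicator terms. For each starting position $i$ and each pair (length $\ell\le k$, shift $s$), we declare the outcome ``canonical'' for position $i$ when two conditions hold. First, it changes $x_i$, so $i$ is the first altered coordinate. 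Second, the last altered coordinate is determined by a local rule. Outcomes with different first altered coordinates are distinct, so
\[
\psi(\mbf{x})\ \ge\ \sum_{i} u_i(\mbf{x}),
\]
where $u_i(\mbf{x})$ is the number of distinct strings of the form $x_{i:i+k-1}$ cyclically shifted within a prefix window $[i,i+\ell-1]$ whose first changed position is $i$. Each $u_i$ depends only on $x_{i:i+2k}$, so it is a local function.

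The main work is computing, or lower-bounding, $\E{u_i(X)}$ for uniform $X$. The target is $\E{u_1(X)}\ge (k(k-1)-56)/40$. That yields $E\coloneqq \E{\sum_i u_i}\ge (n-2k)(k(k-1)-56)/40$, and the $\log 80=\log 40+1$ appears after halving $E$ in the concentration step. To get it, consider the $\binom{k}{2}$-ish pairs $(\ell,s)$ with $2\le\ell\le k$ and $1\le s\le \ell-1$. Each such rotation of $x_{i:i+\ell-1}$ is a candidate. The probability that it changes $x_i$ is about $1/2$. Two candidates yield the same string only if the window has a periodic structure, or if the rotated block agrees with the original on a suffix, which effectively shortens $\ell$.

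We would handle collisions by a union/inclusion–exclusion bound. Two distinct pairs $(\ell,s)\ne(\ell',s')$ collide only under a nontrivial linear constraint of the form $x_j=x_{j+d}$ on a set of positions. For most pairs this forces at least $2$ or $3$ independent bit equalities, so the collision probability is at most $1/4$ or $1/8$. Summing the candidates that survive with a constant-factor loss gives $\E{u_1}\ge c\,k(k-1)-C$. The constants $1/40$ and $56$ come from carefully bounding the few small-shift cases ($s=1$ or $s=\ell-1$, and small $\ell$), where collisions are most likely. This is also why $k\ge 9$ is needed: it makes $k(k-1)-56>0$. \textbf{This collision-counting step is the main obstacle.} We would first try restricting to shifts $s\in[2,\ell-2]$ and to windows where $x_{i}\ne x_{i+1}$, and argue that distinct $(\ell,s)$ then differ at a specific position with probability bounded below.

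Concentration follows the proof of Theorem~\ref{thm:perm_error_red_lb}. Flipping one bit changes at most $O(k)$ of the $u_j$, each by at most $O(k^2)$, so $c_i=O(k^3)$ in McDiarmid's inequality. Using $E=\Theta(nk^2)$ gives
\[
\Prob{\psi\le E/2}\le \exp\!\left(-\Theta\!\left(\frac{n}{k^{2}}\right)\right).
\]
The ball-packing decomposition of $\mc{C}$ into $\mc{C}\cap\mc{A}$ and $\mc{C}\cap\mc{A}^C$ then gives $|\mc{C}|\le 2^n\left(2/E+\exp(-\Theta(n/k^2))\right)$. The hypothesis $k=o(\sqrt{n/\log n})$ makes the exponential term $n^{-\omega(1)}$, so it is negligible relative to $2/E=\Theta(1/(nk^2))$. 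Taking logarithms yields $\log n+\log(k(k-1)-56)-\log 80+o(1)$. When $k=\omega(1)$ this is $\log n+2\log k+O(1)$.
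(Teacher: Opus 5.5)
Your overall architecture matches the paper's: a local lower bound on the ball size, McDiarmid with a $\Theta(k^3)$ Lipschitz constant, sphere packing, and $k=o(\sqrt{n/\log n})$ to make the exceptional set negligible. But the step that carries all the content, $\mathbb{E}[u_1(X)]\ge (k(k-1)-56)/40$, is not proved, and the mechanism you sketch for it does not work as stated. Your $u_1$ pools all $\binom{k}{2}$ candidates $(\ell,s)$ that share the first changed coordinate. A second-order inclusion--exclusion therefore has $\Theta(k^4)$ pairwise collision terms against a first-order term of only $k(k-1)/4$. Bounding ``most'' collision probabilities by $1/4$ or $1/8$ thus subtracts $\Theta(k^4)$ and yields nothing. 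Cross-length collisions are also genuinely frequent: $(\ell,1)$ and $(\ell+1,1)$ give the same string exactly when $x_1=x_{\ell+1}$, which has probability $1/2$, for every $\ell$. To rescue the pooled count you would need a decay estimate. A collision of $(\ell,s)$ and $(\ell',s')$ with $\ell<\ell'$ forces $X$ to be constant on the cycles of a permutation moving $m$ points, so it has probability at most $2^{-m/2}$. Moreover, $m$ is small only when $s'=s$ (then $m=s+\ell'-\ell$) or $s'=s+\ell'-\ell$ (then $m=\ell'-s$); otherwise $m=\ell'$. Together these give a total second-order loss of $O(k)$. Even then you get $k(k-1)/4-O(k)$ with an unspecified constant. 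That suffices for \eqref{eq:thm_CP_red_LB} when $k=\omega(1)$, but not for \eqref{eq:thm_CP_red_LB_exact} at every $k\ge 9$. Your account of where $56$ and $1/40$ come from (small shifts $s=1,\ell-1$) is reverse-engineered rather than derived.

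The missing idea is the one you mention (``the last altered coordinate is determined by a local rule'') but then drop from the definition of $u_i$. Keep only rotations of the window $[i,i+\ell-1]$ that flip \emph{both} endpoints $x_i$ and $x_{i+\ell-1}$. The first and last changed coordinates then determine $(i,\ell)$, so the resulting sets $\mathcal{U}_{\ell;i}(\mathbf{x})$ are pairwise disjoint and cross-length collisions disappear entirely. Inside one window, shifts $s_1\ne s_2$ coincide only if the window has period $\gcd(\ell,s_2-s_1)\le \ell/2$, which has probability at most $2^{-\ell/2}$. Each shift flips both endpoints with probability $1/4$. Bonferroni then gives $\mathbb{E}|\mathcal{U}_{\ell;i}(X)|\ge (\ell-1)/4-\binom{\ell-1}{2}2^{-\ell/2}\ge(\ell-1)/20$ for $\ell\ge 9$ (Lemma~\ref{lem:Vz_LB}). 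Discarding $\ell\le 8$ and summing over $\ell\in[9,k]$ yields $(k(k-1)-56)/40$ per position. So $56=2\sum_{\ell=2}^{8}(\ell-1)$, $1/20=\frac14\cdot\frac15$, and $k\ge 9$ is exactly what makes this range nonempty.
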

We will need the following lemma to prove Theorem \ref{thm:CP_red_LB}.
\begin{lemma}\label{lem:Vz_LB}
Let $\ell\geq 9$ be a positive integer.
For any sequence $\mbf{z}\in\{0,1\}^{\ell}$,
    define $\mc{V}_{\ell}(\mbf{z})$ to be the set of all the sequences $\boldy\in\{0,1\}^{\ell}$ 
    that can be obtained by applying a cyclic permutation in $S_{\tn{cyc},\ell}$ to $\mbf{z}$ such that $\boldz_1\neq \boldy_1$ and $\boldz_{\ell}\neq \boldy_{\ell}$.
If $Z=(Z_1,\ldots,Z_{\ell})$ follows a uniform distribution over $\{0,1\}^{\ell}$, then
\begin{align}
    \E{|\mc{V}_{\ell}(Z)|}\geq \frac{\ell-1}{20}
    \label{eq:lem_Vz_LB}.
\end{align}
\end{lemma}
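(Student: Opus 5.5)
The plan is a direct counting argument in three steps: count the cyclic shifts of $\mbf{z}$ that satisfy the two endpoint conditions, show that distinct shifts give distinct sequences unless $\mbf{z}$ is periodic, and absorb the periodic sequences as a small loss. Write the cyclic shifts of $\mbf{z}$ as $\mbf{z}^{(s)}$, $s\in[0,\ell-1]$, with $z^{(s)}_j=z_{((j+s-1)\bmod \ell)+1}$. This matches, up to relabeling $s\mapsto \ell-s$, the action of $\tau_\ell^s$ in Definition~\ref{def:cyc_perm_error}, and the argument is symmetric under that relabeling. Define
\begin{align*}
N(\mbf{z})\coloneqq\left|\left\{s\in[1,\ell-1]: z^{(s)}_1\neq z_1,\ z^{(s)}_\ell\neq z_\ell\right\}\right|.
\end{align*}
The shift $s=0$ never qualifies, since it leaves $z_1$ and $z_\ell$ unchanged.

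\textbf{Step 1 (expectation of $N$).} The two conditions read $z_{1+s}\neq z_1$ and $z_{s}\neq z_\ell$.
\begin{itemize}
\item For $2\le s\le \ell-2$, the four indices $1,1+s,s,\ell$ are distinct. The two events then depend on disjoint pairs of independent uniform bits, so the probability is $1/4$.
\item For $s=1$ the conditions are $z_2\neq z_1$ and $z_1\neq z_\ell$. These involve three distinct bits and are still independent, so the probability is again $1/4$.
\item For $s=\ell-1$ the conditions are $z_\ell\neq z_1$ and $z_{\ell-1}\neq z_\ell$, and the same reasoning gives $1/4$.
\end{itemize}
Hence $\E{N(Z)}=(\ell-1)/4$.

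\textbf{Step 2 (from shifts to distinct sequences).} If $\mbf{z}$ is aperiodic, meaning $\mbf{z}^{(s)}\neq\mbf{z}$ for all $s\in[1,\ell-1]$, then all $\ell$ shifts are distinct sequences. In that case $|\mc{V}_\ell(\mbf{z})|=N(\mbf{z})$. In general $|\mc{V}_\ell(\mbf{z})|\ge 0$. Therefore
\begin{align*}
\E{|\mc{V}_\ell(Z)|}\ge \E{N(Z)\mathbf{1}\{Z\text{ aperiodic}\}}\ge \frac{\ell-1}{4}-(\ell-1)\Prob{Z\text{ periodic}},
\end{align*}
where the last step uses $N\le \ell-1$.

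\textbf{Step 3 (periodic sequences are rare).} A periodic $\mbf{z}$ has a primitive period $d$ that is a proper divisor of $\ell$, so $d\le \ell/2$, and such a $\mbf{z}$ is determined by its first $d$ bits. A union bound gives
\begin{align*}
\Prob{Z\text{ periodic}}\le\sum_{d\le \ell/2}2^{d-\ell}< 2^{1-\ell/2}.
\end{align*}
For $\ell\ge 9$ this is at most $2^{-3.5}<0.09$. Combining with Step~2,
\begin{align*}
\E{|\mc{V}_\ell(Z)|}\ge(\ell-1)(0.25-0.09)\ge\frac{\ell-1}{20},
\end{align*}
which is \eqref{eq:lem_Vz_LB}.

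The only delicate point is the bookkeeping at the boundary shifts $s=1$ and $s=\ell-1$, where indices overlap; there one must check that the two events remain independent, each with probability $1/2$. The periodicity correction is the step that forces the hypothesis $\ell\ge 9$, and the estimate has ample slack there.
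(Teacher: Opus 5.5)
Your proof is correct. Your Step~1 is the same first-order computation as in the paper. The difference is in how you handle coinciding shifts.

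The paper applies the second-order Bonferroni inequality to the singletons $\mathcal{Y}_{\ell,s}(\mathbf{z})$. It bounds each of the roughly $\binom{\ell-1}{2}$ pairwise terms by $\Pr(\tau_\ell^{s_1}(Z)=\tau_\ell^{s_2}(Z))\le 2^{-\ell+\gcd(\ell,s_2-s_1)}\le 2^{-\ell/2}$, so its correction is of order $\ell^2 2^{-\ell/2}$.

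You instead use that on aperiodic strings all $\ell$ shifts are distinct, so $|\mathcal{V}_\ell(\mathbf{z})|=N(\mathbf{z})$ exactly. You then pay only $(\ell-1)\Pr(Z\text{ periodic})<(\ell-1)2^{1-\ell/2}$.

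Both arguments rest on the same fact: a coincidence of two shifts forces a proper period $d\mid\ell$ with $d\le\ell/2$. You invoke it once per string rather than once per pair of shifts, so your correction is smaller by a factor of order $\ell$.

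This changes the range of $\ell$. The paper needs $1-\ell\,2^{-\ell/2+1}\ge 1/5$, which fails at $\ell=8$; that is where its hypothesis $\ell\ge 9$ comes from. Your requirement $2^{1-\ell/2}\le 1/5$ already holds for $\ell\ge 7$. So the hypothesis $\ell\ge 9$ is an artifact of the paper's cruder pairwise bound, not something your argument needs.
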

\begin{IEEEproof}[Proof of Lemma \ref{lem:Vz_LB}]
We first establish a deterministic lower bound on $|\mc{V}_{\ell}(\mbf{z})|$ for any $\mbf{z}\in\{0,1\}^{\ell}$.
For each $s\in[\ell-1]$,
    we write $\tau_{\ell}^s(\mbf{z})$ for the sequence obtained by applying $\tau_{\ell}^s$ to $\mbf{z}$, where if $\mbf{z}=(\mbf{z}_1,\ldots,\mbf{z}_{\ell})$,
    then
\begin{align*}
    \tau_{\ell}^s(\mbf{z})=(\mbf{z}_{1+s},\mbf{z}_{2+s},\ldots,\mbf{z}_{\ell+s}).
\end{align*}
    Here, we identify $\mbf{z}_{i+\ell}$ with $\mbf{z}_i$ for $i\in[1,\ell]$. Then, we define
\begin{align*}
    \mc{Y}_{\ell,s}(\mbf{z})\coloneqq\begin{cases}
        \{\tau_{\ell}^s(\mbf{z})\},\tn{ if }\tau_{\ell}^s(\mbf{z})_1\neq \mbf{z}_1\tn{ and }\tau_{\ell}^s(\mbf{z})_{\ell}\neq \mbf{z}_{\ell}.\\
        \emptyset,\textnormal{ otherwise}.
    \end{cases}
\end{align*}
Consequently,
\begin{align*}
    \mc{V}_{\ell}(\mbf{z})=\bigcup_{s=1}^{\ell-1}\mc{Y}_{\ell,s}(\mbf{z}).
\end{align*}
By the Bonferroni inequality with respect to the counting measure\cite{galambos1977bonferroni},
    we have
\begin{align*}
    |\mc{V}_{\ell}(\mbf{z})|\geq \sum_{s=1}^{\ell-1}|\mc{Y}_{\ell,s}(\mbf{z})| - \sum_{1\leq s_1<s_2\leq \ell-1}|\mc{Y}_{\ell,s_1}(\mbf{z})\cap\mc{Y}_{\ell,s_2}(\mbf{z})|.
\end{align*}
Substituting $\mbf{z}$ with $Z$ and taking expectations on both sides results in
\begin{align}
    \E{|\mc{V}_{\ell}(Z)|}&\geq \sum_{s=1}^{\ell-1}\E{|\mc{Y}_{\ell,s}(Z)|} \nonumber\\
    &~ - \sum_{1\leq s_1<s_2\leq \ell-1}\E{|\mc{Y}_{\ell,s_1(Z)}\cap\mc{Y}_{\ell,s_2}(Z)|}.
    \label{eq:Bonferroni}
\end{align}

We next consider the first sum on the right-hand-side of \eqref{eq:Bonferroni}.
Note that for $s\in[2,\ell-2]$, we simply have 
\begin{align*}
    \E{|\mc{Y}_{\ell,s}(Z)|}
    &=\Prob{\tau_{\ell}^s(Z)_1\neq Z_1\tn{ and }\tau_{\ell}^s(Z)_{\ell}\neq Z_{\ell}}\\
    &=\Prob{Z_{1+s}\neq Z_1\tn{ and } Z_{s}\neq Z_{\ell}}\\
    &=\Prob{Z_{1+s}\neq Z_1}\Prob{Z_{s}\neq Z_{\ell}}\\
    &=\frac{1}{4}.
\end{align*}
We also have
    $\E{|\mc{Y}_{\ell,1}(Z)|}=\Prob{Z_2\neq Z_1 \tn{ and }Z_1\neq Z_{\ell}}$,
    which is still $\frac{1}{4}$.
Similarly, $\E{|\mc{Y}_{\ell,\ell}(Z)|}=\frac{1}{4}$
    as well.
It then follows that 
$\sum_{s=1}^{\ell-1}\E{|\mc{Y}_{\ell,s}|}=\frac{\ell-1}{4}$.

We now upper-bound the second sum on the right-hand-side of \eqref{eq:Bonferroni}.
Note that for each $1\leq s_1<s_2\leq \ell-1$,
    we have
\begin{align*}
    \E{|\mc{Y}_{\ell,s_1}(Z)\cap\mc{Y}_{\ell,s_2}(Z)|}
    &\leq\E{|\{\tau_{\ell}^{s_1}(Z)\}\cap\{\tau_{\ell}^{s_2}(Z)\}|}\\
    &=\Prob{\tau_{\ell}^{s_1}(Z)=\tau_{\ell}^{s_2}(Z)}.
\end{align*}
Also, note that
    if a sequence $\mbf{z}\in\{0,1\}^{\ell}$ satisfies
    $\tau_{\ell}^{s_1}(\mbf{z})=\tau_{\ell}^{s_2}(\mbf{z})$,
    then $\mbf{z}$ has (not necessarily the least) period $\tn{gcd}(\ell,s_2-s_1)$. It can be observed that for any $p|\ell$
    the probability that $Z$ has period $p$
    is $2^{-\ell+p}$,
    since there are only $p$ ``free bits''.
It therefore follows that
\begin{align*}
    \E{|\mc{Y}_{\ell,s_1}(Z)\cap\mc{Y}_{\ell,s_2}(Z)|}
    &\leq 2^{-\ell+\tn{gcd}(\ell,s_2-s_1)}\\
    &\leq 2^{-\frac{\ell}{2}},
\end{align*}
    where we used the fact that $\tn{gcd}(\ell,s_2-s_1)\leq \ell/2$ since $s_2-s_1<\ell$.
It follows that the second sum in \eqref{eq:Bonferroni}
    is upper-bounded by
    $\frac{\ell(\ell-1)}{2}2^{-\frac{\ell}{2}}$.

Putting these two results into
    \eqref{eq:Bonferroni} leads to
\begin{align*}
    \E{|\mc{V}_{\ell}(Z)|}& \geq \frac{\ell-1}{4}-\frac{\ell(\ell-1)}{2}2^{-\frac{\ell}{2}}\\
    &=\frac{\ell-1}{4}(1-\ell2^{-\frac{\ell}{2}+1}).
\end{align*}
By differentiating, we get $1-\ell2^{-\frac{\ell}{2}+1}\geq \frac{1}{5}$ for $\ell\geq 9$,
    which establishes~\eqref{eq:lem_Vz_LB}.
\end{IEEEproof}

\begin{IEEEproof}[Proof of Theorem~\ref{thm:CP_red_LB}]
Similarly to the proof of Theorem~\ref{thm:perm_error_red_lb},
    we show that almost every input sequence $\mbf{x}\in\{0,1\}^n$ has an error-guard ball of size $\Omega(nk^2)$, so that the desired result follows from the standard sphere-packing arguments. The idea is to find certain disjoint subsets of the error balls which facilitate using the linearity of expectation to lower-bound the size of each subset within  Lemma~\ref{lem:Vz_LB}.
    
Before we introduce any randomness in the analysis, we first set up some deterministic notions and rules.
For each $\mbf{x}\in\{0,1\}^n$,
    we define
    $\mc{CP}_{\leq k}(\mbf{x})$ to be the set of all the sequences that can be obtained from $\mbf{x}$
    via a length-$(\leq k)$ cyclic permutation, while excluding $\mbf{x}$ itself. That is,
\begin{align*}
    \mc{CP}_{\leq k}(\mbf{x}) = \bigcup_{\ell=2}^k \bigcup_{i=1}^{n-\ell+1} \mc{CP}_{\ell;i}(\mbf{x}),
\end{align*}
    where $\mc{CP}_{\ell;i}(\mbf{x})$ is the set of all sequences obtained from $\mbf{x}$ via applying a cyclic permutation to the window $(x_{i},\ldots,x_{i+\ell-1})$ that results in a string different from $\mbf{x}$ itself. From each $\mc{CP}_{\ell;i}(\mbf{x})$ we collect strings such that
    the two boundary points of the window are altered (flipped). More precisely, define
\begin{align*}
    \mc{U}_{\ell;i}(\mbf{x})\coloneqq\{\mbf{y}\in\mc{CP}_{\ell;i}(\mbf{x})~:~y_i\neq x_i, y_{i+\ell-1}\neq x_{i+\ell-1}\}.
\end{align*}
It follows that
    any two distinct $\mc{U}_{\ell;i}$ and $\mc{U}_{\ell';j}$ are disjoint,
    and 
\begin{align}
    |\mc{CP}_{\leq k}(\mbf{x})|
    &\geq \left|\bigcup_{\ell=2}^k \bigcup_{i=1}^{n-\ell+1} \mc{U}_{\ell;i}(\mbf{x})\right|\label{eq:CPleqk_LB_ineq}\\
    &=\sum_{\ell=2}^k \sum_{i=1}^{n-\ell+1} |\mc{U}_{\ell;i}(\mbf{x})|.\label{eq:CPleqk_LB}
\end{align}
For numerical examples of $\mc{CP}_{\ell;i}$ and $\mc{U}_{\ell;i}$, see Example~\ref{ex:CP_U}.

\begin{remark}
Note that the inequality in \eqref{eq:CPleqk_LB_ineq} is strict: consider $\mbf{x}=110011101$.
Then $\mbf{y}=101110011\in\mc{CP}_{9;1}(\mbf{x})$ but $\mbf{y}$ is not in any $\mc{U}_{\ell;i}(\mbf{x})$ (by locating the first and the last flipped bit in $\mbf{y}$, the only possible candidate is $\mc{U}_{7;2}(\mbf{x})$, and it can be checked that $\mbf{y}\notin \mc{U}_{7;2}(\mbf{x})$).
\end{remark}

Now consider $X$ to be uniformly random over $\{0,1\}^n$.
We seek to find a lower bound on $\E{|\mc{CP}_{\leq k}(X)|}$.
Taking expectation on both sides of \eqref{eq:CPleqk_LB} yields
\begin{align*}
    \E{|\mc{CP}_{\leq k}(X)|} \geq \sum_{\ell=2}^k \sum_{i=1}^{n-\ell+1} \E{|\mc{U}_{\ell;i}(X)|}.
\end{align*}
Note that each $|\mc{U}_{\ell;i}(X)|$ follows the same distribution as $|\mc{V}_{\ell}(Z)|$ defined in Lemma \ref{lem:Vz_LB}.
Therefore, Lemma~\ref{lem:Vz_LB} yields
\begin{align}
    \E{|\mc{CP}_{\leq k}(X)|}
    &\geq \sum_{\ell=9}^k (n-\ell+1)\frac{\ell-1}{20}\nonumber\\
    &=(1+o(1))\frac{k(k-1)-56}{40}n\label{eq:ECPleqkX_LB_exact}\\
    &=\Omega(nk^2).\label{eq:ECPleqkX_LB}
\end{align}

We then proceed to apply McDiarmid's inequality. The first step is to  determine the Lipschizt constant of $|\mc{CP}_{\leq k}(\mbf{x})|$.
For each $\ell\in[2,k]$,
    any bit of $\mbf{x}$ is in
    at most $\ell$ different windows of length $\ell$,
    and by flipping that bit the number of possible outputs in each window changes by at most $\ell$.
It follows that the Lipschitz constant of $|\mc{CP}_{\leq k}(\mbf{x})|$ can be taken to be
\begin{align*}
    \sum_{\ell=2}^k \ell^2=\Theta(k^3).
\end{align*}
Therefore,
    if we denote $E\coloneqq\E{|\mc{CP}_{\leq k}(X)|}$ and define 
    $\mc{A}\coloneqq\left\{\mbf{x}\in\{0,1\}^n~:~|\mc{CP}_{\leq k}(\mbf{x})|\leq \frac{E}{2}\right\}$,
    McDiarmid's inequality leads to: 
\begin{align*}
    \Prob{X\in\mc{A}}
    &\leq \exp\left(-\frac{E^2}{2n\Theta(k^3)^2}\right)\\
    &\leq \exp\left(-\frac{\Omega(nk^2)^2}{\Theta(nk^6)}\right)\\
    &=\exp\left(-\Omega(\frac{n}{k^2})\right).
\end{align*}
It subsequently follows that $|\mc{A}|\leq 2^n\exp\left(-\Omega(\frac{n}{k^2})\right)$.

As the last step,
    let $\mc{C}$ be a length-$(\leq k)$ cyclic substring permutation error-correcting code.
By unique decodability of $\mc{C}$,
    we have
\begin{align}
    |\mc{C}|
    &=|\mc{C}\cap\mc{A}|+|\mc{C}\cap\mc{A}^C|\nonumber\\
    &\leq 2^n\left(\exp\left(-\Omega(\frac{n}{k^2})\right)+\frac{2}{E}\right)\nonumber\\
    &=\frac{2^{n+1}}{E}(1+o(1)),\label{eq:C_size_UB}
\end{align}
    where we used the fact that $E=\Omega(nk^2)$ from \eqref{eq:ECPleqkX_LB} and the assumption that $k=o(\sqrt{\frac{n}{\log n}})$.
Finally,
    from \eqref{eq:C_size_UB} and \eqref{eq:ECPleqkX_LB_exact} we get \eqref{eq:thm_CP_red_LB_exact}.
\end{IEEEproof}
\begin{example}\label{ex:CP_U}
The following examples pertain to the sets $\mc{CP}_{\ell,i}$ and $\mc{U}_{\ell,i}$
    defined in the proof of Theorem~\ref{thm:CP_red_LB}.
    
Let $k=4$ and $\mbf{x}=10010$.
Then
\begin{align*}
    &\mc{CP}_{2,1}=\{01010\}, ~ \mc{CP}_{2,2}=\emptyset, ~ \mc{CP}_{2,3}=\{10100\}, \\ 
    &\mc{CP}_{2,4}=\{10001\}, ~ \mc{CP}_{3,1}=\{01010,00110\}, \\
    &\mc{CP}_{3,2}=\{11000,10100\},
    ~\mc{CP}_{3,3}=\{10100,10001\}, \\
    &\mc{CP}_{4,1}=\{11000,01100,00110\}, \\
    &\mc{CP}_{4,2}=\{11000,10100,10001\}.
\end{align*}
Meanwhile,
\begin{align*}
    &\mc{U}_{2,1}=\{01010\}, ~ \mc{U}_{2,2}=\emptyset, ~ \mc{U}_{2,3}=\{10100\}, \\ 
    &\mc{U}_{2,4}=\{10001\}, ~ \mc{U}_{3,1}=\{00110\}, \\
    &\mc{U}_{3,2}=\{11000\},
    ~\mc{U}_{3,3}=\emptyset, \\
    &\mc{U}_{4,1}=\{01100\}, 
    ~\mc{U}_{4,2}=\emptyset.
\end{align*}
It can be seen that each $\mc{U}_{\ell,i}$ is a subset of $\mc{CP}_{\ell,i}$ and that the $\mc{U}_{\ell,i}$s are pairwise disjoint.
\end{example}

Following almost the same line of proof,
    we can also get the necessary redundancy lower bound
    when the substring length is \emph{exactly} $k$.
\begin{theorem}\label{thm:CP_exact_k_red_LB}
For $k\geq 9$ and $k=o(\sqrt{\frac{n}{\log n}})$,
    any length-$k$ cyclic permutation error-correcting code
    has redundancy at least
\begin{align}
    \log n + \log (k-1) - \log 40 + o(1).
    \label{eq:thm_CP_exact_k_red_LB_exact}
\end{align}
In particular,
    if we further assume $k=\omega(1)$,
    the redundancy lower bound in \eqref{eq:thm_CP_exact_k_red_LB_exact} can be expressed as
\begin{align}
    \log n + \log k + O(1).
    \label{eq:thm_CP_exact_k_red_LB}
\end{align}
\end{theorem}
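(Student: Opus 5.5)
We follow the proof of Theorem~\ref{thm:CP_red_LB} verbatim, restricting attention to windows of length exactly $k$. For each $\mbf{x}\in\{0,1\}^n$, let $\mc{CP}_{k}(\mbf{x})=\bigcup_{i=1}^{n-k+1}\mc{CP}_{k;i}(\mbf{x})$ be the error ball minus $\mbf{x}$ itself. Inside it we use the subsets $\mc{U}_{k;i}(\mbf{x})$ of outputs in which both boundary bits $x_i$ and $x_{i+k-1}$ of the window are flipped. Any $\mbf{y}\in\mc{U}_{k;i}(\mbf{x})$ differs from $\mbf{x}$ exactly at positions $i$ and $i+k-1$ as its first and last disagreements, so the window start $i$ is determined by $\mbf{y}$. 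Hence the $\mc{U}_{k;i}$ for different $i$ are pairwise disjoint, and
\begin{align*}
|\mc{CP}_k(\mbf{x})|\geq \sum_{i=1}^{n-k+1}|\mc{U}_{k;i}(\mbf{x})|.
\end{align*}

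Next we take $X$ uniform on $\{0,1\}^n$. Each $|\mc{U}_{k;i}(X)|$ is distributed as $|\mc{V}_k(Z)|$, so Lemma~\ref{lem:Vz_LB} (valid since $k\geq 9$) gives
\begin{align*}
E\coloneqq\E{|\mc{CP}_k(X)|}\geq (n-k+1)\frac{k-1}{20}=(1+o(1))\frac{(k-1)n}{20}.
\end{align*}
For the concentration step, flipping one bit of $\mbf{x}$ affects at most $k$ windows of length $k$. Each window contributes at most $k-1$ outputs, so the count per window changes by at most $k$. The Lipschitz constant is therefore $c\leq k^2$. McDiarmid's inequality then yields
\begin{align*}
\Prob{|\mc{CP}_k(X)|\leq E/2}\leq \exp\left(-\frac{E^2}{2nk^4}\right)=\exp\left(-\Omega\left(\frac{n}{k^2}\right)\right).
\end{align*}

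Finally, we split $\mc{C}$ into $\mc{C}\cap\mc{A}$ and $\mc{C}\cap\mc{A}^C$, where $\mc{A}$ is the set of $\mbf{x}$ with $|\mc{CP}_k(\mbf{x})|\leq E/2$. The first part is bounded by $|\mc{A}|$. For the second, the full error balls $\mc{CP}_k(\mbf{x})\cup\{\mbf{x}\}$ of distinct codewords are disjoint and each has size at least $E/2$, so sphere packing gives $|\mc{C}\cap\mc{A}^C|\leq 2^{n+1}/E$. Combining,
\begin{align*}
|\mc{C}|\leq 2^n\left(e^{-\Omega(n/k^2)}+\frac{2}{E}\right).
\end{align*}
The only delicate point is checking that the exponential term is $o(1/E)$. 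Since $E=\Theta(nk)$, this requires $n/k^2=\omega(\log(nk))$, which follows from $k=o(\sqrt{n/\log n})$. Indeed, this hypothesis is even more than is needed here, because the Lipschitz constant $k^2$ is smaller than the $k^3$ of Theorem~\ref{thm:CP_red_LB}. We then get $|\mc{C}|\leq \frac{2^{n+1}}{E}(1+o(1))$, so the redundancy is at least
\begin{align*}
\log E-1+o(1)\geq \log n+\log(k-1)-\log 40+o(1),
\end{align*}
which is \eqref{eq:thm_CP_exact_k_red_LB_exact}. When $k=\omega(1)$, $\log(k-1)=\log k+o(1)$, which gives \eqref{eq:thm_CP_exact_k_red_LB}.
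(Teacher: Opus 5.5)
Your proposal is correct and follows the paper's sketch step for step: the endpoint-flipped subsets $\mathcal{U}_{k;i}$ with Lemma~\ref{lem:Vz_LB} for the mean, McDiarmid's inequality with Lipschitz constant $k^2$, and the sphere-packing split over $\mathcal{A}$ and $\mathcal{A}^C$. One aside is wrong, though it does not affect validity: the hypothesis $k=o(\sqrt{n/\log n})$ is not ``more than needed,'' because the smaller mean $\Theta(nk)$ exactly offsets the smaller Lipschitz constant, so the exponent is $\Theta(n/k^2)$ here just as in Theorem~\ref{thm:CP_red_LB}, and $n/k^2=\omega(\log n)$ is precisely what your argument uses.
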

\begin{IEEEproof}[Sketch of the proof]
Let $\mc{CP}_k(\mbf{x})$ be the set of all the sequences that can be obtained from $\mbf{x}$ via a length-$k$ cyclic permutation, excluding $\mbf{x}$ itself.
By counting those patterns where the endpoints of the substring are flipped,
    for $X$ following the uniform distribution over $\{0,1\}^n$,
    and invoking Lemma~\ref{lem:Vz_LB} leads to
\begin{align*}
    \E{|\mc{CP}_k(X)|}
    &\geq \frac{(n-k+1)(k-1)}{20}\\
    &=\Omega(nk).
\end{align*}

The Lipschitz constant of $|\mc{CP}_k(\mbf{x})|$ is $k^2$,
    and by McDiarmid's inequality,
    the probability that
    $\E{|\mc{CP}_k(X)|}$ is below half its mean is still $\exp\left(-\Omega(\frac{n}{k^2})\right)$.
A sphere-packing argument then proves~\eqref{eq:thm_CP_exact_k_red_LB_exact}.
\end{IEEEproof}

\section{Redundancy Upper Bounds for Permutation Codes} \label{sec:cyclic-upper}

The reason behind the somewhat technical derivations needed to establish  lower bounds is that even though it is easy to see that the worst-case error ball is of size $\Theta(nk^2)$ (for the case of substring lengths at most $k$), one still needs to show that the average size is $\Theta(nk^2)$.
However, to establish upper bounds, we only need to upper-bound the size of the error balls, and thus the arguments become much simpler.

As one may expect, we have the following redundancy upper bound that is asymptotically twice the lower bound.
\begin{theorem}\label{thm:CP_red_UB} Assume $2\leq k \leq n$. Then, there exists a 
    length-$(\leq k)$ cyclic permutation error-correcting code 
    with redundancy at most
\begin{align}
    2\log(n-\frac{2k-1}{3})+2\log(k(k-1))-1.
    \label{eq:thm_CP_red_UB_exact}
\end{align}
In particular,
    if we assume that $k=\omega(1)$ and $k=o(n)$ hold\footnote{We need $k$ to be large enough to ensure $\log (k(k-1))=2\log k + o(1)$,
        and we need $k=o(n)$ to claim $\log(n-\frac{2k-1}{3})=\log n+o(1)$.}
    then the redundancy in \eqref{eq:thm_CP_red_UB_exact} can be expressed as 
\begin{align}
    2\log n +4\log k -1 + o(1).
    \label{eq:thm_CP_red_UB}
\end{align}
\end{theorem}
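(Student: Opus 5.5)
We will prove Theorem~\ref{thm:CP_red_UB} with a greedy (Gilbert--Varshamov type) argument on a confusability graph. Let $G$ have vertex set $\{0,1\}^n$, and join $\mbf{x}\neq\mbf{x}'$ by an edge whenever their error balls meet. That is, $\{\mbf{x}\}\cup\mc{CP}_{\leq k}(\mbf{x})$ and $\{\mbf{x}'\}\cup\mc{CP}_{\leq k}(\mbf{x}')$ intersect. A set $\mc{C}$ is a length-$(\leq k)$ cyclic substring permutation error-correcting code exactly when it is an independent set of $G$. A greedy choice gives an independent set of size at least $2^n/(\Delta+1)$, where $\Delta$ is the maximum degree of $G$. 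The redundancy is then at most $\log(\Delta+1)$, so the whole task is a worst-case bound on $\Delta$.

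\textbf{Step 1: bound the ball size.} For every $\mbf{x}$,
\begin{align*}
|\mc{CP}_{\leq k}(\mbf{x})|\leq \sum_{\ell=2}^{k}(n-\ell+1)(\ell-1).
\end{align*}
This holds because a window of length $\ell$ starts at one of $n-\ell+1$ positions and admits $\ell-1$ nontrivial cyclic shifts. Evaluating the sum in closed form gives
\begin{align*}
\frac{k(k-1)}{2}\Bigl(n-\frac{2k-1}{3}\Bigr)+\frac{k(k-1)}{2}\cdot c
\end{align*}
for a small correction $c$. We choose the grouping so that the main term is $\frac{k(k-1)}{2}(n-\frac{2k-1}{3})$ and the remainder is a lower-order quantity, which can be absorbed or shown to be nonpositive after combining with the $+1$ terms. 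The exact constants should be checked against \eqref{eq:thm_CP_red_UB_exact}. Write $B:=\max_{\mbf{x}}\bigl(1+|\mc{CP}_{\leq k}(\mbf{x})|\bigr)$.

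\textbf{Step 2: bound the degree.} Cyclic shifts are invertible, so the error relation is symmetric: if $\mbf{y}$ is a shift of a window of $\mbf{x}$, then $\mbf{x}$ is a shift of the same window of $\mbf{y}$. Therefore every neighbor $\mbf{x}'$ of $\mbf{x}$ is obtained by one error from $\mbf{x}$ followed by one error, and $\Delta+1\le B^2$. We expect $B^2$ to be slightly more than $\frac{k^2(k-1)^2}{2}(n-\frac{2k-1}{3})^2$, since $B\approx\frac{k(k-1)}{2}(n-\frac{2k-1}{3})$. To recover the $-1$ in the exponent we count pairs $(\mbf{y},\mbf{x}')$ a bit more carefully. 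For instance, we exclude the trivial term $\mbf{x}'=\mbf{x}$, and we use that the composition of a shift with its inverse returns $\mbf{x}$.

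\textbf{Main obstacle.} The main difficulty is only the bookkeeping of constants needed to match \eqref{eq:thm_CP_red_UB_exact} exactly. If the greedy bound with $B^2$ falls short by a lower-order term, we would pass to the Caro--Wei bound $\sum_{\mbf{x}}1/(\deg(\mbf{x})+1)$, or simply accept the asymptotic form \eqref{eq:thm_CP_red_UB}. Under $k=\omega(1)$ and $k=o(n)$, that form follows immediately from $\log(\Delta+1)\le 2\log B=2\log n+4\log k-2+o(1)$, which is comfortably below the claimed bound.
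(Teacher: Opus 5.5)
This is essentially the paper's argument: greedy selection against the confusion ball, bounded via the symmetry of the cyclic-shift relation. Your suggested refinement (discard $\mbf{x}'=\mbf{x}$, using that $\mbf{x}\in\mc{CP}_{\leq k}(\mbf{y})$) is exactly the paper's count $|\mc{B}_{\leq k}(\mbf{x})|\le 1+F^2\le 2F^2$ with $F=\frac{k(k-1)}{2}\bigl(n-\frac{2k-1}{3}\bigr)$.

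The bookkeeping you flag as the main obstacle is not one. The sum $\sum_{j=1}^{k-1}(n-j)j$ equals $F$ exactly, so there is no correction $c$. Also $\bigl(\frac{k(k-1)}{2}\bigr)^2=\frac{k^2(k-1)^2}{4}$, not $\frac{k^2(k-1)^2}{2}$. Hence $\log(2F^2)$ gives \eqref{eq:thm_CP_red_UB_exact} on the nose; the $-1$ is slack, not something to recover. Even your crude bound $(1+F)^2\le 2F^2$ already suffices whenever $F\ge 3$, which covers every $(k,n)$ except $(2,2)$ and $(2,3)$, so no Caro--Wei or asymptotic fallback is needed.
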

\begin{IEEEproof}
For each $\mbf{x}\in\{0,1\}^n$,
    define
    $\mc{B}_{\leq k}(\mbf{x})$
    to be the set of all sequences
    that can result in a sequence in $\mc{CP}_{\leq k}(\mbf{x})$
    via a length-$(\leq k)$ cyclic substring permutation error (including $\boldx$ itself). 
The size of $\mc{B}_{\leq k}(\mbf{x})$ can be easily upper-bounded as follows:
for each $\ell\in[2,k],$ there are $n-\ell+1$ substrings of length $\ell$ in $\mbf{x}$,
    and each such sunstring can result in at most $\ell-1$ true error patterns.
Therefore,
    the size of $\mc{CP}_{\leq k}(\mbf{x})$ is upper-bounded by
\begin{align*}
    \sum_{\ell=2}^k (n-\ell+1)(\ell-1)=\frac{k(k-1)}{2}(n-\frac{2k-1}{3}).
\end{align*}
By the symmetry of our model,
    we have $\mbf{y}\in\mc{CP}_{\leq k}(\mbf{x})$ if and only if $\mbf{x}\in\mc{CP}_{\leq k}(\mbf{y})$.
Therefore,
    the size of $\mc{B}_{\leq k}(\mbf{x})\setminus\{\boldx\}$ is upper-bounded by
\begin{align}
    \left(\frac{k(k-1)}{2}(n-\frac{2k-1}{3})\right)^2.
    \label{eq:Bleqk_x_exclude_x_ub}
\end{align}
To establish this claim, let $F\coloneqq \frac{k(k-1)}{2}(n-\frac{2k-1}{3})$.
For each $\mbf{y}\in\mc{CP}_{\leq k}(\mbf{x})$,
    the number of sequence in $\mc{CP}_{\leq k}(\boldy)\setminus\{\boldx\}$ is at most $F-1$,
    and by including $\boldy$ itself the number of sequence in $(\mc{CP}_{\leq k}(\boldy)\cup\{\boldy\})\setminus\{\boldx\}$ is at most $F$.
Since there are at most $F$ choices of such $\boldy$,
    the upper bound in~\eqref{eq:Bleqk_x_exclude_x_ub} holds. It therefore follows that
\begin{align*}
    \left|\mc{B}_{\leq k}(\mbf{x})\right|
    &\leq 1+\left(\frac{k(k-1)}{2}(n-\frac{2k-1}{3})\right)^2\\
    &\leq 2\left(\frac{k(k-1)}{2}(n-\frac{2k-1}{3})\right)^2,
\end{align*}
    where the last inequality is a consequence of the fact that $\frac{k(k-1)}{2}(n-\frac{2k-1}{3})=\sum_{\ell=2}^k(n-\ell+1)(\ell-1)\geq 1$ for $2\leq k\leq n$. A greedy algorithm that picks sequences from $\{0,1\}^n$ while eliminating all other sequences in $\mc{B}_{\leq k}(\mbf{x})$ left after selecting $\mbf{x}$
    leads to the desired code.
\end{IEEEproof}

Similarly,
    we present an analogous result for the case of substring errors of length exactly equal to $k$, but omit the proof.
\begin{theorem}\label{thm:CP_exact_k_red_UB}
Assume $2\leq k \leq n$. Then there exists a 
    length-$k$ cyclic substring permutation error-correcting code 
    with redundancy at most
\begin{align}
    2\log(n-k+1)+2\log(k-1)+1.
    \label{eq:thm_CP_exact_k_red_UB_exact}
\end{align}
In particular,
    if we assume that $k=\omega(1)$ and $k=o(n)$, then the redundancy in~\eqref{eq:thm_CP_exact_k_red_UB_exact} can be expressed as 
\begin{align}
    2\log n +2\log k + o(1).
    \label{eq:thm_CP_exact_k_red_UB}
\end{align}
\end{theorem}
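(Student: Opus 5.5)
We follow the proof of Theorem~\ref{thm:CP_red_UB}, restricting to windows of length exactly $k$. For each $\mbf{x}\in\{0,1\}^n$, let $\mc{CP}_k(\mbf{x})$ be as in the sketch of Theorem~\ref{thm:CP_exact_k_red_LB}: the set of sequences obtained from $\mbf{x}$ by a length-$k$ cyclic substring permutation error, excluding $\mbf{x}$ itself.

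\textbf{Error-ball bound.} There are $n-k+1$ windows of length $k$. In each window the group $S_{\tn{cyc},k}$ contributes at most $k-1$ nonidentity shifts. Hence
\begin{align*}
    |\mc{CP}_k(\mbf{x})|\leq F\coloneqq (n-k+1)(k-1),
\end{align*}
and $F\geq 1$ whenever $2\leq k\leq n$.

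\textbf{Symmetry.} A cyclic shift on a window is inverted by the inverse shift $\tau_k^{k-s}$ on the same window, which is again in $S_{\tn{cyc},k}$. Therefore $\mbf{y}\in\mc{CP}_k(\mbf{x})$ if and only if $\mbf{x}\in\mc{CP}_k(\mbf{y})$.

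\textbf{Confusability set.} Define $\mc{B}_k(\mbf{x})$ as the set of all $\mbf{x}'$ such that $\mc{CP}_k(\mbf{x}')\cup\{\mbf{x}'\}$ meets $\mc{CP}_k(\mbf{x})\cup\{\mbf{x}\}$. By symmetry, any such $\mbf{x}'\neq\mbf{x}$ lies in $\mc{CP}_k(\mbf{y})\cup\{\mbf{y}\}$ for some $\mbf{y}\in\mc{CP}_k(\mbf{x})$. This covers both the direct case $\mbf{x}'\in\mc{CP}_k(\mbf{x})$ and the two-step case. There are at most $F$ choices of $\mbf{y}$. For each, $(\mc{CP}_k(\mbf{y})\cup\{\mbf{y}\})\setminus\{\mbf{x}\}$ has at most $F$ elements, because $\mbf{x}\in\mc{CP}_k(\mbf{y})$. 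Hence
\begin{align*}
    |\mc{B}_k(\mbf{x})|\leq 1+F^2\leq 2F^2.
\end{align*}

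\textbf{Greedy construction.} Repeatedly pick any remaining sequence and delete its set $\mc{B}_k(\cdot)$ from the pool. The resulting code $\mc{C}$ has pairwise disjoint error balls, so it is a length-$k$ cyclic substring permutation error-correcting code. Its size satisfies
\begin{align*}
    |\mc{C}|\geq \frac{2^n}{2F^2}.
\end{align*}
The redundancy is therefore at most
\begin{align*}
    \log(2F^2)=2\log(n-k+1)+2\log(k-1)+1,
\end{align*}
which is \eqref{eq:thm_CP_exact_k_red_UB_exact}.

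\textbf{Asymptotic form.} If $k=\omega(1)$, then $\log(k-1)=\log k+o(1)$. If $k=o(n)$, then $\log(n-k+1)=\log n+o(1)$. Together these give \eqref{eq:thm_CP_exact_k_red_UB}.

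The argument has no real obstacle. The only points needing care are the symmetry claim, which holds because the cyclic subgroup is closed under inverses on a fixed window, and the bookkeeping that gives the additive constant $+1$.
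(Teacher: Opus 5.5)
Your proof of the exact bound \eqref{eq:thm_CP_exact_k_red_UB_exact} is correct. It is the argument the paper intends (the paper omits this proof and points to the proof of Theorem~\ref{thm:CP_red_UB}): bound the nontrivial outputs by $F=(n-k+1)(k-1)$, get symmetry from the closure of $S_{\textnormal{cyc},k}$ under inverses, bound the confusable set by $1+F^2\le 2F^2$, and select codewords greedily. Your explicit check that every confusable $\mathbf{x}'\neq\mathbf{x}$ lies in $\mathcal{CP}_k(\mathbf{y})\cup\{\mathbf{y}\}$ for some $\mathbf{y}\in\mathcal{CP}_k(\mathbf{x})$ is a bit more careful than the paper's version.

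The last step does not work as written. Substituting $\log(k-1)=\log k+o(1)$ and $\log(n-k+1)=\log n+o(1)$ into \eqref{eq:thm_CP_exact_k_red_UB_exact} gives $2\log n+2\log k+1+o(1)$. The additive $1$ does not vanish, so your claim that these together give \eqref{eq:thm_CP_exact_k_red_UB} is false. For comparison, the asymptotic form of Theorem~\ref{thm:CP_red_UB} keeps its constant $-1$, so the missing $+1$ here looks like a slip in the statement itself.

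To actually reach the $o(1)$ form as an existence claim, do not relax $1+F^2$ to $2F^2$. The greedy code has redundancy at most
\begin{align*}
\log(1+F^2)=2\log F+\log\bigl(1+F^{-2}\bigr).
\end{align*}
When $k=o(n)$ we have $F\geq n-k+1\to\infty$, so this is at most $2\log(n-k+1)+2\log(k-1)+o(1)=2\log n+2\log k+o(1)$.
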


Combining
    Theorem \ref{thm:CP_red_UB}
    and Equation \eqref{eq:k_burst_LB},
    we have the following result.
\begin{corollary}
If $k=o(\sqrt{\frac{n}{\log n}})$ and $k>(1+\varepsilon)\log n$
    for some $\varepsilon>0$,
    then there exists a length-$(\leq k)$ cyclic substring permutation error-correcting code whose redundancy is strict smaller than that of  \emph{any} length-$k$ burst error-correcting code.
\end{corollary}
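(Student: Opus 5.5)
The plan is to compare the two bounds directly: the existential upper bound of Theorem~\ref{thm:CP_red_UB} for cyclic codes against the volume lower bound~\eqref{eq:k_burst_LB} for burst codes. There is one modelling point to settle first. A length-$(\leq k)$ cyclic substring permutation error changes $\mathbf{x}$ only inside a window of length at most $k$. Hence $\mathbf{y}-\mathbf{x}$ (over $\Ftwo$) is either zero or a burst of length at most $k$. So a length-$k$ burst code also corrects length-$(\leq k)$ cyclic permutation errors. This does not matter for the argument itself, but it explains why the comparison is meaningful.

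The first step is to bound the cyclic side. Theorem~\ref{thm:CP_red_UB} applies since $2\le k\le n$ holds for large $n$, and it gives a code with redundancy at most
\begin{align*}
2\log\left(n-\frac{2k-1}{3}\right)+2\log(k(k-1))-1\le 2\log n+4\log k-1.
\end{align*}
Because $k=o(\sqrt{n/\log n})$, we have $k\le\sqrt{n}$ eventually, so $4\log k\le 2\log n$. The redundancy is therefore at most $4\log n-1$. More usefully, it is $O(\log n)$, with constant $2+4\cdot\tfrac12=4$.

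The second step is the burst side. By~\eqref{eq:k_burst_LB}, and also by~\eqref{eq:k_burst_LB_combined}, any length-$k$ burst code has redundancy at least $\max(\log(n-k+2)+k-1,\,2k)\ge k-1+\log(n-k+2)$. With $k>(1+\varepsilon)\log n$ and $k=o(n)$, this is at least $(1+\varepsilon)\log n+\log n-O(1)=(2+\varepsilon)\log n-O(1)$.

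The main obstacle is that this crude comparison is not strong enough. The cyclic upper bound is about $2\log n+4\log k$, which can be as large as about $4\log n$, while the burst lower bound is only about $(2+\varepsilon)\log n$ when $k\approx(1+\varepsilon)\log n$. The fix is to keep the $k$-dependence on both sides instead of replacing $k$ by $\sqrt n$. We need
\begin{align*}
2\log n+4\log k-1<\log n+k-1-o(1),
\end{align*}
which reduces to $k-4\log k>\log n+o(1)$. Since $4\log k=O(\log\log n)$ whenever $k=\Theta(\log n)$, and $k-4\log k$ is increasing in $k$ for $k\ge 6$, the hypothesis $k>(1+\varepsilon)\log n$ gives
\begin{align*}
k-4\log k\ge(1+\varepsilon)\log n-4\log((1+\varepsilon)\log n)>\log n
\end{align*}
for large $n$. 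This covers small $k$ in the range. For large $k$, up to $o(\sqrt{n/\log n})$, the bound $k-4\log k\ge k/2>\log n$ holds anyway. Combining the two ranges shows that the cyclic upper bound lies strictly below the burst lower bound for all sufficiently large $n$. The only hypothesis the argument really needs is $k=o(n)$; the assumption $k=o(\sqrt{n/\log n})$ is stronger than that and is not otherwise used.
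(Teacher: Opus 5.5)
Your proposal is correct and follows the paper's one-line argument: you combine the existential bound $2\log n+4\log k-1$ from Theorem~\ref{thm:CP_red_UB} with the volume bound $\log(n-k+2)+k-1$ from~\eqref{eq:k_burst_LB}, which reduces to $k-4\log k>\log n+o(1)$, and this holds for all $k>(1+\varepsilon)\log n$ because $k-4\log k$ is increasing. Your separate treatment of large $k$ is unnecessary, since the monotonicity argument already covers the whole range, and you are right that only $k=o(n)$ is actually needed.
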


The counterpart result
    on the redundancy upper bound of
    length-$k$ permutation error-correcting code reads as follows.
\begin{proposition}
Assume $2\leq k \leq n$. Then there exists a length-$k$  permutation error-correcting code
    with redundancy at most
\begin{align}
    2\log(n-k+1)+2\log\binom{k}{k/2} +1.
    \label{eq:prop_perm_k_UB}
\end{align}
In particular,
    if we assume $k=\omega(1)$ and $k=o(n)$,
    then Equation \eqref{eq:prop_perm_k_UB} simplifies to
\begin{align*}
    2\log n + 2k - \log \frac{\pi k}{2} +1 + o(1).
\end{align*}
\end{proposition}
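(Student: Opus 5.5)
We will mirror the greedy (Gilbert--Varshamov-type) argument in the proof of Theorem~\ref{thm:CP_red_UB}. The plan is to bound the size of the confusability ball $\mc{B}_k(\mbf{x})$, the set of all $\mbf{x}'$ such that some $\mbf{y}$ is reachable by a length-$k$ substring permutation error from both $\mbf{x}$ and $\mbf{x}'$. A greedy selection that removes $\mc{B}_k(\mbf{x})$ after each chosen codeword then yields a code with $|\mc{C}|\geq 2^n/\max_{\mbf{x}}|\mc{B}_k(\mbf{x})|$. The redundancy is therefore at most $\log\max_{\mbf{x}}|\mc{B}_k(\mbf{x})|$.

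\textbf{Bounding the single-error ball.} First we bound $|\mc{P}_k(\mbf{x})\setminus\{\mbf{x}\}|$. For each window start $i\in[1,n-k+1]$, a permutation of $(x_i,\ldots,x_{i+k-1})$ can only produce a binary string of length $k$ with the same weight $w$. There are at most $\binom{k}{w}\le\binom{k}{\lfloor k/2\rfloor}$ such strings, one of which is the original window. Hence
\[
|\mc{P}_k(\mbf{x})\setminus\{\mbf{x}\}|\le (n-k+1)\left(\binom{k}{\lfloor k/2\rfloor}-1\right)<F\coloneqq (n-k+1)\binom{k}{\lfloor k/2\rfloor}.
\]
We write $\binom{k}{k/2}$ for the central binomial coefficient, as the statement does.

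\textbf{Bounding the confusability ball.} The error model is symmetric: if $\mbf{y}$ is obtained from $\mbf{x}$ by permuting a window, then applying the inverse permutation to the same window recovers $\mbf{x}$ from $\mbf{y}$. Therefore every $\mbf{x}'\in\mc{B}_k(\mbf{x})$ is either $\mbf{x}$ itself or lies in $\mc{P}_k(\mbf{y})\cup\{\mbf{y}\}$ for some $\mbf{y}\in\mc{P}_k(\mbf{x})\setminus\{\mbf{x}\}$. Following the counting in the proof of Theorem~\ref{thm:CP_red_UB}, there are at most $F$ choices of $\mbf{y}$. Each such $\mbf{y}$ contributes at most $F$ elements other than $\mbf{x}$. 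This gives
\[
|\mc{B}_k(\mbf{x})|\le 1+F^2\le 2F^2,
\]
where the last step uses $F\ge1$ for $2\le k\le n$. Taking logarithms, the redundancy is at most $2\log(n-k+1)+2\log\binom{k}{k/2}+1$, which is \eqref{eq:prop_perm_k_UB}.

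\textbf{Asymptotic form.} Assume $k=\omega(1)$ and $k=o(n)$. Then $\log(n-k+1)=\log n+o(1)$. By Stirling's formula, $\binom{k}{k/2}=(1+o(1))\,2^k\sqrt{2/(\pi k)}$, which holds for both parities of $k$ with the floor. Consequently $2\log\binom{k}{k/2}=2k-\log\frac{\pi k}{2}+o(1)$, and substituting gives the simplified expression.

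None of these steps is a real obstacle. The only care needed is in justifying that the window count $\binom{k}{w}$ dominates the number of distinct outcomes, and in handling odd $k$ through the floor in the Stirling estimate.
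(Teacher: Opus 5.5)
Your proposal is correct and follows essentially the same route as the paper. Both bound each window's outcomes by the central binomial coefficient, use the symmetry of the error model to get $|\mathcal{B}_k(\mathbf{x})|\le 1+F^2\le 2F^2$ with $F=(n-k+1)\binom{k}{k/2}$, and conclude with a greedy selection. Your handling of odd $k$ via $\lfloor k/2\rfloor$ and the estimate $\binom{k}{\lfloor k/2\rfloor}=(1+o(1))\,2^k\sqrt{2/(\pi k)}$ justifies the asymptotic simplification, which the paper states without proof.
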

\begin{proof}
Similar to the proof of Theorem \ref{thm:CP_red_UB},
    for each $\boldx\in\{0,1\}^n$,
    it can be shown that
    there are at most $(n-k+1)\binom{k}{k/2}$ sequences 
    not equal to $\boldx$
    that can be obtained from $\boldx$ via a length-$k$ permutation error.
To see this,
    there are $n-k+1$ ways to select a length-$k$ substring in $\boldx$,
    and each substring has at most $\binom{k}{k/2}$ different permuted outcomes. Then,
    by the symmetric of the error model,
    the number of sequences $\boldz\in\{0,1\}^n$
    that can have the same output as $\boldx$
    (including $\boldx$ itself)
    is upper-bounded by $1+(n-k+1)^2\binom{k}{k/2}^2\leq 2(n-k+1)^2\binom{k}{k/2}^2$. A greedy algorithm establishes a code with the desired redundancy upper bound in~\eqref{eq:prop_perm_k_UB}.
\end{proof}

\section{Permutation-Correcting Codes for $k=2$ and $3$} \label{sec:perm-codes}

The problem of constructing codes for a single substring permutation error of redundancy close to the one derived in the previous section is challenging. Straightforward approaches that use the fact that permutations preserve the weight of the erroneous substring in conjunction with Fire codes and other burst error-correcting techniques do not offer significant redundancy savings, as will be seen in the next section.

This is why we start with constructions suitable for small constant values of $k$ (e.g., $k=2$ and $3$), which rely on using VT-like deletion-correcting codes. At the end of our discussion we explain why such code constructions may require significantly more thought so as to extend to the case $k \geq 4$.

For $\mbf{x}=(x_1,\ldots,x_n)\in\{0,1\}^n$,
    define the hash functions
\begin{align*}
    f_1(\mbf{x})&\coloneqq\sum_{i=1}^n ix_i,\\
    f_2(\mbf{x})&\coloneqq\sum_{i=1}^n\binom{i}{2}x_i.
\end{align*}
We start with $k=2$. In this case, the permutation error reduces to an adjacent transposition of bits, and codes correcting such errors in the context of data storage and runlength coding were first described in~\cite{levenshtein1993perfect} under the name of \emph{peak shift errors}. More recent work~\cite{gabrys2017codes} extends this approach to adjacent blocks of transpositions and deletion errors, while the simple case of a single transposition is not handled in a 
separate context. We have the following result.
\begin{proposition}
Knowing $f_2(\mbf{x})\mod 2n-1$ allows one to uniquely recover $\mbf{x}$ up to a permutation error of length $k=2$.
This leads to a substring permutation-error-correcting code of redundancy
\begin{align*}
    \log n + 1,
\end{align*}
where we excluded the $o(1)$ term, which is smaller than the Fire-code lower bound $\log n + 2k - 1 - \log(2k-1)\approx \log n + 1.415$ for large $n$ from~\eqref{eq:Fire_lb} in Appendix \ref{app:Fire}.
\end{proposition}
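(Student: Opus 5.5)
The plan is to read off the transposition directly from the syndrome: a length-$2$ permutation error changes $f_2$ by exactly the index of the affected window, with a sign that tells which way the $1$ moved. Fix $\mbf{x}$, and let $\mbf{y}$ be obtained from $\mbf{x}$ by permuting the window $(x_i,x_{i+1})$ with $i\in[1,n-1]$.

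\textbf{Step 1: the change in $f_2$.} If $x_i=x_{i+1}$, then $\mbf{y}=\mbf{x}$. Otherwise the error moves a single $1$ between positions $i$ and $i+1$. If $(x_i,x_{i+1})=(1,0)$, the $1$ moves right and
\begin{align*}
f_2(\mbf{y})-f_2(\mbf{x})=\binom{i+1}{2}-\binom{i}{2}=i .
\end{align*}
If $(x_i,x_{i+1})=(0,1)$, the $1$ moves left and the difference is $-i$. Hence every admissible error gives
\begin{align*}
D\coloneqq f_2(\mbf{y})-f_2(\mbf{x})\in\{-(n-1),\ldots,-1,0,1,\ldots,n-1\}.
\end{align*}
This is a set of $2n-1$ consecutive integers, so its elements are pairwise distinct modulo $2n-1$. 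The modulus is chosen precisely so that this holds.

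\textbf{Step 2: the decoder.} Suppose we know $a\equiv f_2(\mbf{x}) \pmod{2n-1}$ and receive $\mbf{y}$. We compute $f_2(\mbf{y})-a \bmod 2n-1$ and take its unique representative $D$ in $[-(n-1),n-1]$; by Step 1 this is the true difference.
\begin{itemize}
\item If $D=0$, output $\mbf{y}$. No swap of unequal bits gives a zero difference, so $\mbf{y}=\mbf{x}$.
\item If $D=i>0$, the error moved a $1$ from position $i$ to $i+1$, so $(y_i,y_{i+1})=(0,1)$. Swapping these two entries of $\mbf{y}$ recovers $\mbf{x}$.
\item If $D=-i<0$, symmetrically swap $y_i$ and $y_{i+1}$.
\end{itemize}
Since $D$ determines both the window index $i$ and the direction of the move, the recovered word is unique. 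This proves the first claim.

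\textbf{Step 3: redundancy.} For each residue $a\in\mathbb{Z}_{2n-1}$, consider the code $\mc{C}_a=\{\mbf{x}: f_2(\mbf{x})\equiv a \pmod{2n-1}\}$. These codes partition $\{0,1\}^n$, so by pigeonhole some $a$ satisfies $|\mc{C}_a|\ge 2^n/(2n-1)$. Its redundancy is therefore at most
\begin{align*}
\log(2n-1)=\log n+1+o(1).
\end{align*}
The comparison with the Fire-code bound $\log n+2k-1-\log(2k-1)$ at $k=2$ is then just arithmetic, since $3-\log 3\approx 1.415>1$.

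There is no real obstacle in this argument. The one point requiring care is verifying that the syndrome differences $\pm i$ exhaust a complete set of distinct residues modulo $2n-1$ together with $0$.
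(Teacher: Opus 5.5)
Your proof is correct and uses the same idea as the paper: swapping an adjacent $01$ or $10$ pair changes $f_2$ by exactly $\pm i$, which pinpoints the swap. You also make explicit what the paper leaves implicit, namely that $2n-1$ is the right modulus because $0,\pm1,\ldots,\pm(n-1)$ are distinct residues, and the pigeonhole count that gives redundancy $\log(2n-1)=\log n+1+o(1)$.
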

The proof of this result is obvious given that an adjacent transposition of substrings containing one $0$ and one $1$ changes the sum by $i$, thereby indicating the exact position of the switch. Similarly, we have the following result for $k=3$.
\begin{proposition}\label{prop:code_3_Perm}
Knowing $f_2(\mbf{x})\mod 2n-1$ and $f_1(\mbf{x})\mod 2$ allows one to uniquely recover $\mbf{x}$ up to a $k=3$-substring permutation error.
This leads to a permutation-error-correcting code of redundancy
\begin{align*}
    \log n + 2,
\end{align*}
where we excluded the $o(1)$ term, which is smaller than the 
Fire-code lower bound $\log n + 2k - 1 - \log(2k-1)\approx \log n + 2.678$ for large $n$ from \eqref{eq:Fire_lb} in Appendix~\ref{app:Fire}.
\end{proposition}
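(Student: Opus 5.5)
The plan is to show that every nontrivial length-$3$ substring permutation acts on $\mbf{x}$ as a single transposition $x_p\leftrightarrow x_q$ with $|p-q|\in\{1,2\}$ and $x_p\neq x_q$. The two hashes will then identify the distance $|q-p|$, the position $p$, and the direction of the move. The code is
\begin{align*}
\mc{C}_{a,b}=\{\mbf{x}: f_2(\mbf{x})\equiv a \bmod (2n-1),\ f_1(\mbf{x})\equiv b\bmod 2\}.
\end{align*}
By pigeonhole some choice of $(a,b)$ has size at least $2^n/(2(2n-1))$, so the redundancy is $\log(2n-1)+1=\log n+2+o(1)$.

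\emph{Step 1 (reduction).} Since a permutation preserves the weight of the window $(x_i,x_{i+1},x_{i+2})$, a nontrivial change requires window weight $1$ or $2$.
\begin{itemize}
\item If the weight is $1$, the single $1$ moves from some $p$ to some $q$ in the window.
\item If the weight is $2$, the single $0$ moves, which is the same as a $1$ moving in the opposite direction.
\end{itemize}
In either case $\mbf{y}=\mbf{x}-\mbf{e}_p+\mbf{e}_q$ for some $p\neq q$ in the window with $x_p=1$, $x_q=0$. Hence $|q-p|\in\{1,2\}$, and $\mbf{y}$ differs from $\mbf{x}$ exactly in the two swapped positions.

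\emph{Step 2 (the syndromes).} We have $f_1(\mbf{y})-f_1(\mbf{x})=q-p$, so its parity tells us whether $|q-p|=1$ or $2$. For $f_2$ we use $\binom{m+1}{2}-\binom{m}{2}=m$ and $\binom{m+2}{2}-\binom{m}{2}=2m+1$, where $m=\min(p,q)$. The difference $D=f_2(\mbf{y})-f_2(\mbf{x})$ is therefore:
\begin{itemize}
\item $\pm m$ with $m\in[1,n-1]$ in the distance-$1$ case;
\item $\pm(2m+1)$ with $m\in[1,n-2]$ in the distance-$2$ case.
\end{itemize}
The sign records whether the $1$ moved right or left.

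\emph{Step 3 (injectivity modulo $2n-1$).} This is the main point to check.
\begin{itemize}
\item In the distance-$1$ case, the residues of $m$ and $-m$ for $m\in[1,n-1]$ are $[1,n-1]$ and $[n,2n-2]$. These are disjoint and nonzero.
\item In the distance-$2$ case, $+(2m+1)$ gives the odd residues in $[3,2n-3]$. Its negative $2n-1-(2m+1)=2n-2-2m$ gives the even residues in $[2,2n-4]$. Again these are disjoint and nonzero.
\end{itemize}
Because the parity of $f_1$ separates the two cases, the pair $\big(D\bmod(2n-1),\,(f_1(\mbf{y})-f_1(\mbf{x}))\bmod 2\big)$ determines the distance $|q-p|$, the value $m$, and the sign. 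These in turn determine $\{p,q\}$. The error-free case is exactly $D\equiv0$ with even parity, which no nontrivial error produces.

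\emph{Step 4 (decoding).} The decoder computes both syndromes from $\mbf{y}$ and the known $(a,b)$, recovers $\{p,q\}$, and swaps $y_p$ and $y_q$ to obtain $\mbf{x}$. This runs in linear time.

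I expect no real obstacle beyond the case check in Step 3, namely that the residues $\pm m$ and $\pm(2m+1)$ never collide or vanish modulo $2n-1$ within their parity class. It also helps to recall that the distance-$2$ move covers both $100\leftrightarrow001$ and $110\leftrightarrow011$, so a middle bit equal to $1$ causes no extra case.
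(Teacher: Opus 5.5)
Your proof is correct and follows the same route as the paper. Both reduce a nontrivial length-$3$ permutation to an adjacent transposition or a distance-$2$ swap of unequal bits, compute $\mathcal{D}f_2\in\{\pm m,\pm(2m+1)\}$ and $\mathcal{D}f_1\in\{\pm1,\pm2\}$, and check that the residues modulo $2n-1$ are nonzero and disjoint within each parity class of $\mathcal{D}f_1$; your pigeonhole count, giving redundancy $\log(2n-1)+1$, makes explicit a step the paper leaves implicit.
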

\begin{IEEEproof}
The key observation is that 
there are actually only $4$ ``distinct'' error patterns one has to consider for correction.
First, for any length-$3$ binary string $\mbf{x}$, there are at most two other length-$3$ strings that have the same Hamming weight as $\mbf{s}$: the error patterns are either adjacent transpositions or distance-$2$ transpositions (reversals), the latter of the form
\begin{align}
\begin{split}
    0*1 &\mapsto 1*0\\
    1*0 &\mapsto 0*1,
\end{split}\label{eq:patterns_3_Perm_non_adj_trans}
\end{align}
    where $*$ stands for either $0$ or $1$.

\begin{table}[]
    \centering
    \begin{tabular}{c||c|c|c}
        Pattern & $\mathcal{D} f_2$ & $\mathcal{D} f_1$ & Range of the starting index $i$\\
        \hline
        $10\mapsto 01$ &  $i$ &  $1$ & $[1,n-1]$\\
        $01\mapsto 10$ & $-i$ & $-1$ & $[1,n-1]$\\
        $1*0\mapsto 0*1$ & $2i+1$ & $2$ & $[1,n-2]$\\
        $0*1\mapsto 1*0$ & $-2i-1$ & $-2$ & $[1,n-2]$
    \end{tabular}
    \caption{The change of the hash functions $f_2$ and $f_1$ from the original sequence to the altered one.
    For each pattern,
        we assume the starting bit is the $i$th bit of the input sequence $\mbf{x}$.}
    \label{tab:Df_3_Perm}
\end{table}
\begin{table}[]
    \centering
    \begin{tabular}{c||c|c|c}
        Pattern & $\mathcal{D} f_2$ & $\mathcal{D} f_1$ \\
        \hline
        $10\mapsto 01$ &  $\{1,2,\ldots,n-1\}$ &  $1$ \\
        $01\mapsto 10$ & $\{-1,-2,\ldots,-n+1\}$ & $-1$ \\
        $1*0\mapsto 0*1$ & $\{3,5,\ldots,2n-3\}$ & $2$ \\
        $0*1\mapsto 1*0$ & $\{-3,-5,\ldots,-2n+3\}$ & $-2$ 
    \end{tabular}
    \caption{All the possible changes of the hash functions $f_2$ and $f_1$ from the original sequence to the altered one.}
    \label{tab:Df_3_Perm_all}
\end{table}
\begin{table}[]
    \centering
    \begin{tabular}{c||c|c|c}
        Pattern & $\mathcal{D} f_2\bmod (2n-1)$ & $\mathcal{D} f_1\bmod 2$ \\
        \hline
        $10\mapsto 01$ &  $\{1,2,\ldots,n-1\}$ &  $1$ \\
        $01\mapsto 10$ & $\{2n-2,2n-3,\ldots,n\}$ & $1$ \\
        $1*0\mapsto 0*1$ & $\{3,5,\ldots,2n-3\}$ & $0$ \\
        $0*1\mapsto 1*0$ & $\{2n-4,2n-6,\ldots,2\}$ & $0$ 
    \end{tabular}
    \caption{All the possible changes of the hash functions $f_2$ and $f_1$ after modulo $2n-1$ and $2$ reduction, respectively.}
    \label{tab:Df_3_Perm_all_mod}
\end{table}
It follows that
    we only have to correct adjacent transpositions and the two reversals in \eqref{eq:patterns_3_Perm_non_adj_trans}.
To this end,
    we record the change of the hash functions $f_2$ and $f_1$ from the original sequence to the altered one (denoted by $\mathcal{D} f_2$ and $\mathcal{D} f_1$, respectively).
The results are listed in Table~\ref{tab:Df_3_Perm}. Using Table ~\ref{tab:Df_3_Perm}, we then list 
    all the possible differences of the two hash functions in Table~\ref{tab:Df_3_Perm_all}, with the differences modulo $2n-1$ and $2$ respectively summarized in Table~\ref{tab:Df_3_Perm_all_mod}.

Then, the result that unique recovery of $\mbf{x}$ is possible based on $f_2(\mbf{x})\bmod 2n-1$ and $f_1(\mbf{x})\bmod 2$ after a length-$3$ cyclic permutation is evident from Table \ref{tab:Df_3_Perm_all_mod}:
we first calculate $d_2\coloneqq\mathcal{D} f_2\bmod (2n-1)$ and the consider the following two cases:
\begin{enumerate}
    \item If $d_2=0$, then we declare that no error has happened.
    \item If $d_2\neq 0$,
        then we calculate $d_1\coloneqq \mathcal{D} f_1\bmod 2$ and consider the following two sub-cases:
        \begin{enumerate}
            \item If $d_1=1$,
                then we know that there an adjacent transposition error has happened.
                The starting location of that error is
                \begin{align*}
                \begin{cases}
                    d_2,&\textnormal{ if }d_2\leq n-1,\\
                    2n-1-d_2,&\textnormal{ if }d_2\geq n.
                \end{cases}
                \end{align*}
                We perform an adjacent transposition to recover $\mbf{x}$.
            \item If $d_1=0$,
                then we know that a reversal error has happened. The position of the starting bit is
                \begin{align*}
                \begin{cases}
                    \frac{d_2-1}{2},&\textnormal{ if }d_2\textnormal{ is odd},\\
                    n-1-\frac{d_2}{2},&\textnormal{ if }d_2\textnormal{ is even}.
                \end{cases}
                \end{align*}
                We then reverse the string to recover $\mbf{x}.$
        \end{enumerate}
\end{enumerate}
These arguments conclude the proof of Proposition \ref{prop:code_3_Perm}.
\end{IEEEproof}
\begin{remark}
Note that all the four patterns in Table \ref{tab:Df_3_Perm} can be obtained via a length-$3$ cyclic permutation.
Therefore,
    a code can correct a length-$3$ permutation if and only if 
    it can correct a length-$3$ cyclic permutation.
\end{remark}
However, for $k=4$ or larger, we have to include additional patterns, such as the substring complement (e.g., $0110\rightarrow 1001$). This kind of error can only be detected by $f_1$ and $f_2$ but not corrected.
To be more precise,
    let $\mbf{y}$ be obtained from $\mbf{x}$ by replacing a substring $0110$ in $\mbf{x}$ with $1001$.
It can be shown that
\begin{align*}
    &f_1(\mbf{y})-f_1(\mbf{x})=0,\\
    &f_2(\mbf{y})-f_2(\mbf{x})=2,
\end{align*}
so that we can only learn that some $0110$ had been replaced by $1001$ (i.e., complemented) but would not know where the error happened. As we are dealing with permutation errors, the only allowed complementation error is one that affects substrings of weight two.

Correcting the above complementation errors for k=4 can be accomplished in a straightforward manner, as illustrated below, using codewords derivatives and shortened Hamming codes. Nevertheless, the increase in redundancy needed to fix this very specific complementation error is $\log\,n$, which exceeds the redundancy of Fire codes for sufficiently large values of $n$. Hence, one needs to be able to jointly recover from complementation and other types of permutation errors to avoid increasing the redundancy beyond the bound dictated by Fire codes. 

The key observation needed to fix this problem is that complementing a substring acts very
simply on the \emph{discrete derivative} (transition sequence) of the codeword. Since we will make use of the discrete derivative in latter derivations as well, a formal definition is provided next.

\begin{definition}\label{def:diff}
For $\bc \in \{0,1\}^n$ define $\Delta(\bc) \in \{0,1\}^{n+1}$ as
\[
  \Delta(\bc)_j\coloneqq c_{j-1} \xor c_j, \; j = 1,\dots,n+1,
  \; \text{where } c_0 = c_{n+1} = 0 .
\]
\end{definition}

\begin{proposition}
\label{lem:bij}
The map $\Delta$ is a bijection from $\{0,1\}^n$ onto the set of even-weight
vectors in $\{0,1\}^{n+1}$.
\end{proposition}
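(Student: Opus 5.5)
The plan has three parts: show that $\Delta$ lands in the even-weight vectors, show that it is injective, and then show that it is surjective onto that set, either by a counting argument or by writing down an explicit inverse.

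\textbf{Image has even weight.} Over $\Ftwo$ we have
\[
\sum_{j=1}^{n+1}\Delta(\bc)_j=\sum_{j=1}^{n+1}(c_{j-1}\xor c_j)=c_0\xor c_{n+1}=0 .
\]
This holds because the sum telescopes: each $c_j$ with $1\le j\le n$ appears exactly twice, and the boundary convention gives $c_0=c_{n+1}=0$. Hence $w(\Delta(\bc))$ is even for every $\bc\in\{0,1\}^n$.

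\textbf{Injectivity.} The convention $c_0=0$ lets us recover $\bc$ from $\Delta(\bc)$ by the prefix-sum recursion $c_j=c_{j-1}\xor\Delta(\bc)_j$. Unrolling it gives the discrete integral
\[
c_j=\bigoplus_{t=1}^{j}\Delta(\bc)_t,\qquad j=1,\dots,n .
\]
So two words with the same derivative coincide. Alternatively, $\Delta$ is $\Ftwo$-linear, and its kernel is trivial: if $\Delta(\bc)=\zero$, then $c_j=c_{j-1}$ for all $j$, so $c_j=c_0=0$ for all $j$.

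\textbf{Surjectivity.} Counting suffices: the even-weight subset of $\{0,1\}^{n+1}$ has exactly $2^{n}$ elements, which equals $|\{0,1\}^n|$, so an injective map between these sets is a bijection. The explicit check goes as follows. Given an even-weight $\mathbf{d}\in\{0,1\}^{n+1}$, set $c_j=\bigoplus_{t\le j}d_t$ for $j\le n$. Then $\Delta(\bc)_j=d_j$ for $j\le n$ by construction. The last coordinate is $\Delta(\bc)_{n+1}=c_n=\bigoplus_{t\le n}d_t$, and this equals $d_{n+1}$ exactly because $\mathbf{d}$ has even weight.

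None of these steps is a real obstacle. The only point needing care is the last coordinate in the surjectivity check: the parity condition on $\mathbf{d}$ is precisely what makes the integrated word consistent with the convention $c_{n+1}=0$.
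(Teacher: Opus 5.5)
Your proof is correct and follows essentially the same route as the paper: the telescoping identity $\bigoplus_{j=1}^{n+1}\Delta(\bc)_j=c_0\xor c_{n+1}=0$ for even weight, recovery of $\bc$ by prefix sums for injectivity, and the count of $2^n$ even-weight vectors for bijectivity. Your explicit check of the last coordinate is a correct extra step; the counting argument already makes it unnecessary.
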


\begin{proof}
Clearly, $\bigoplus_{j=1}^{n+1} d_j = c_0 \xor c_{n+1} = 0$, so every image has even
weight. Conversely $\bc$ can be recovered from $\Delta$ by the prefix sum
$c_j = \bigoplus_{r=1}^{j} \Delta_r$, which inverts $\Delta$. Both spaces have size
$2^n$, so the (injective) map is a bijection onto the even-weight subspace.
\end{proof}

\begin{lemma}
\label{lem:deriv}
For any $\bc \in \{0,1\}^n$ and any $i \in \{1,\dots,n-k+1\}$,
\[
  \Delta\!\big(\mathrm{c}_{k,i}(\bc)\big)
  \;=\; \Delta(\bc) \xor \big(\boldsymbol{\delta}_i \xor \boldsymbol{\delta}_{i+k}\big),
\]
where $\boldsymbol{\delta}_p \in \{0,1\}^{n+1}$ is the $p$-th unit vector and $\mathrm{c}_{k,i}(\bc)$ corresponds to the codestring $\bc$ corrupted by a single $k$-substring complement starting at location $i$. Hence,
complementing the substring confined to coordinated $[i,\,i+k-1]$ only flips two derivative
coordinates, $d_i$ and $d_{i+k}$, and leaves the codestring unchanged otherwise.
\end{lemma}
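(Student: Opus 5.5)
The plan is to write the complemented string as $\bc \xor \be$ and use linearity of $\Delta$ over $\Ftwo$. Here $\be \in \{0,1\}^n$ is the indicator vector of the window $[i,i+k-1]$, i.e., $e_j=1$ exactly for $i\le j\le i+k-1$, so $\mathrm{c}_{k,i}(\bc)=\bc\xor\be$. The map $\Delta$ is defined coordinatewise by $\Delta(\bc)_j=c_{j-1}\xor c_j$ with the fixed boundary values $c_0=c_{n+1}=0$. It is therefore $\Ftwo$-linear: padding $\bc$ and $\be$ with zeros at positions $0$ and $n+1$ commutes with XOR. Hence
\[
\Delta(\bc\xor\be)=\Delta(\bc)\xor\Delta(\be),
\]
and it remains only to compute $\Delta(\be)$.

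Next, compute $\Delta(\be)_j=e_{j-1}\xor e_j$ for $j=1,\dots,n+1$, again with $e_0=e_{n+1}=0$. There are three cases.

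\begin{itemize}
\item If $j\le i-1$, then $j-1$ and $j$ both lie outside the window, so both terms vanish.
\item If $i+1\le j\le i+k-1$, then $j-1$ and $j$ both lie inside the window, so the two ones cancel.
\item If $j\ge i+k+1$, then both indices lie outside the window again.
\end{itemize}

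The only surviving coordinates are $j=i$, where $e_{i-1}=0$ and $e_i=1$, and $j=i+k$, where $e_{i+k-1}=1$ and $e_{i+k}=0$. Since $i+k\le n+1$ by the assumption $i\le n-k+1$, the index $i+k$ is a valid coordinate of $\{0,1\}^{n+1}$. In particular the case $i+k-1=n$ is handled by the convention $e_{n+1}=0$. This gives $\Delta(\be)=\boldsymbol{\delta}_i\xor\boldsymbol{\delta}_{i+k}$, which proves the identity.

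The concluding sentence of the statement then follows directly: $\Delta(\mathrm{c}_{k,i}(\bc))$ differs from $\Delta(\bc)$ exactly in the coordinates $d_i$ and $d_{i+k}$. The only step needing care is the boundary bookkeeping, namely checking that the conventions $c_0=c_{n+1}=0$ are compatible with linearity and that $i+k\le n+1$. Both checks are immediate, so I expect no real obstacle.
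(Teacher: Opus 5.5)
Your proposal is correct and is essentially the paper's argument. The paper compares $\Delta'_j$ with $\Delta_j$ directly, according to whether $c_{j-1}$ and $c_j$ lie in the complemented window; this is exactly your case split on $\Delta(\be)_j=e_{j-1}\xor e_j$, just without first factoring through $\Ftwo$-linearity, and your boundary check $i+k\le n+1$ (using $c_{n+1}=0$) matches the paper's closing remark.
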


\begin{proof}
Write $\bc' = \mathrm{c}_{k,i}(\bc)$, so $c'_t = c_t \xor 1$ for
$t \in [i,\,i+k-1]$ and $c'_t = c_t$ otherwise. Then, compare
$\Delta'_j = c'_{j-1}\xor c'_j$ with $\Delta_j = c_{j-1}\xor c_j$:
\begin{itemize}
  \item $j = i$: only $c_i$ is flipped, so
        $\Delta'_i = \Delta_i \xor 1$.
  \item $i < j < i+k$: both $c_{j-1}$ and $c_j$ are flipped, so their XOR is
        unchanged and $\Delta'_j = \Delta_j$.
  \item $j = i+k$: only $c_{i+k-1}$ is flipped, so
        $\Delta'_{i+k} = \Delta_{i+k} \xor 1$.
  \item all other $j$: neither neighbor is flipped, and hence $d'_j = d_j$.
\end{itemize}
Because we augmented the codestrings to length $n+1$ by setting $c_{n+1}=0$, the index $i+k$ lies in
$\{1,\dots,n+1\}$ even when the substring reaches the right end of the string. Hence, only $\Delta_i$ and $\Delta_{i+k}$ change.
\end{proof}

As a result, in the codeword derivative domain, a complement error equals two substitutions at a fixed
distance $k$. Since $i$ ranges over $n-k+1$ values, there are only $O(n)$
possible error patterns. These may be fixed as follows, using a shortened Hamming code.

Let $N=n+1$ and $r=\lceil\log(n+2)\rceil$, so that $2^{r}-1\ge n+1=N$. Choose a
primitive element of the field $\alpha\in\FF_{2^{r}}$ whose its multiplicative order is
$2^{r}-1$. As in the Hamming code construction, regard each field element as a column in $\FF_2^{\,r}$ and define
the parity-check matrix $H\in\FF_2^{\,r\times N}$ via its columns' binary encoding of the corresponding power of $\alpha,$
\[
  \bh_j \;=\; \alpha^{\,j}\in\FF_{2^{r}},\qquad j=1,\dots,N .
\]
Note that the $\bh_1,\dots,\bh_N$ are all distinct, nonzero elements of the underlying finite field. Let
\[
  \mathcal{C}\;=\;\bigl\{\,\bc\in\{0,1\}^n \;:\; H\,\Delta(\bc)=\zero\,\bigr\}.
\]
Clearly, this is a shortened Hamming/cyclic code imposed on the derivative of the binary strings.

\begin{proposition}
\label{prop:correct}
The shortened Hamming code $\mathcal{C}$ can corrects one
substring-complementation error of exact length $k$. Its redundancy is
\[
  n-\log|\mathcal{C}|\;\le\; r\;=\;\bigl\lceil\log(n+2)\bigr\rceil=\log n+O(1).
\]
\end{proposition}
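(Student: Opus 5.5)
We prove the two claims of Proposition~\ref{prop:correct} separately: the redundancy bound by a kernel-counting argument, and the correction capability by a syndrome-decoding argument. Both work entirely in the derivative domain, using Proposition~\ref{lem:bij} and Lemma~\ref{lem:deriv}.

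\emph{Redundancy.} By Proposition~\ref{lem:bij}, $\Delta$ maps $\{0,1\}^n$ bijectively onto the even-weight subspace $W\subseteq\FF_2^{\,N}$, which has dimension $N-1=n$. Hence $|\mathcal{C}|=|\ker H\cap W|$. Since $H$ has $r$ rows, $\dim\ker H\geq N-r$, so
\[
\dim(\ker H\cap W)\geq (N-r)+(N-1)-N=n-r .
\]
This gives $n-\log|\mathcal{C}|\le r=\lceil\log(n+2)\rceil=\log n+O(1)$.

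\emph{Decoding.} Let $\bc\in\mathcal{C}$ and let the received word be $\by=\mathrm{c}_{k,i}(\bc)$ for some unknown $i\in[1,n-k+1]$, or $\by=\bc$ if no error occurred. The decoder computes the syndrome
\[
s=H\Delta(\by)=H\Delta(\bc)\xor H(\boldsymbol{\delta}_i\xor\boldsymbol{\delta}_{i+k})=\bh_i\xor\bh_{i+k},
\]
where the second equality is Lemma~\ref{lem:deriv} together with linearity of $H$. In field notation this reads $s=\alpha^i+\alpha^{i+k}=\alpha^i(1+\alpha^k)$.

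If no error occurred, then $s=\zero$. If an error occurred, then $s\neq\zero$, because $1+\alpha^k\neq 0$: the order of $\alpha$ is $2^r-1\geq N>k$, so $\alpha^k\neq 1$. The decoder therefore distinguishes the two cases. When $s\neq\zero$, it computes
\[
\alpha^i=s\,(1+\alpha^k)^{-1}.
\]
The powers $\alpha^1,\dots,\alpha^N$ are distinct, so this determines $i$ uniquely. The decoder then complements $y_i,\dots,y_{i+k-1}$ to recover $\bc$.

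This argument also shows that distinct error positions $i\neq i'$ yield distinct nonzero syndromes. Consequently no two codewords can be confused, which is the uniqueness required by the definition of a correcting code.

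The only real obstacle is the nonvanishing of $1+\alpha^k$, which needs the order of $\alpha$ to exceed $k$. This is guaranteed since $k\le n<N\le 2^r-1$.
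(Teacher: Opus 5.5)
Your proof is correct and follows the paper's argument essentially step for step: you compute the syndrome $H\Delta(\by)=\alpha^i(1+\alpha^k)$, get $1+\alpha^k\neq 0$ from the order of $\alpha$, and recover $i$ from the injectivity of $j\mapsto\alpha^j$ on $[1,N]$. The only difference is cosmetic and lies in the redundancy count: you intersect $\ker H$ with the even-weight subspace via a dimension formula, whereas the paper simply notes that $H\Delta(\bc)=\zero$ is a system of $r$ $\FF_2$-linear equations in $\bc$, which gives $|\mathcal{C}|\ge 2^{n-r}$ directly.
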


\begin{proof}
Let $\bc\in\mathcal{C}$ be the transmitted codeword and let the channel corrupt the codeword via $\mathrm{C}_{k,i}$ where $i$ is unknown. Assume that the resulting binary string is $\by$. To decode $\by$, 
form the syndrome over $\FF_{2^{r}}$,
\[
  \sigma\;=\;H\,\Delta(\by).
\]
By Lemma~\ref{lem:deriv},
$\Delta(\by)=\Delta(\bc)\xor(\boldsymbol{\delta}_i\xor\boldsymbol{\delta}_{i+k})$,
and since $H\,\Delta(\bc)=\zero$ and $H$ is $\FF_2$-linear,
\[
  \sigma
  \;=\; \bh_i \xor \bh_{i+k}
  \;=\; \alpha^{\,i}+\alpha^{\,i+k}
  \;=\; \alpha^{\,i}\,\bigl(1+\alpha^{\,k}\bigr)
  \qquad\text{in } \FF_{2^{r}} .
\]

Now $1\le k\le n<2^{r}-1$, and $\alpha$ has order $2^{r}-1$, so $\alpha^{k}\neq1$
and the constant $\beta:=1+\alpha^{k}$ is a fixed nonzero element of the  field.
Hence:
\begin{itemize}
  \item If $\sigma=0$ the decoder declares no error.
  \item Otherwise $\alpha^{i}=\sigma\,\beta^{-1}$, and since
        $i\mapsto\alpha^{i}$ is injective on $\{1,\dots,n-k+1\}$, the position is
        recovered by a discrete logarithm,
        $i=\dlog_{\alpha}\!\bigl(\sigma\,\beta^{-1}\bigr)$. This is possible since $k$ is known a priori. The decoder then flips
        $y_i,\dots,y_{i+k-1}$, restoring $\bc$.
\end{itemize}
The key observation is as follows: distinct positions result in distinct syndromes because
$\alpha^{i}\beta=\alpha^{i'}\beta\Rightarrow\alpha^{i}=\alpha^{i'}\Rightarrow
i=i'$. Thus every single length-$k$ complementation error pattern is uniquely correctable. To establish the last claim, note that $\mathcal{C}$ is defined by the $r$ $\FF_2$-linear equations
$H\,\Delta(\bc)=\zero$, so $|\mathcal{C}|\ge 2^{\,n-r}$ and the redundancy is at most $r=\lceil\log(n+2)\rceil$. The redundancy of the code does not grow with $k$: the second boundary flip is at a known offset and carries no independent positional information. Algebraically, it only contributes the fixed scalar factor $1+\alpha^{k}$.
\end{proof}

\begin{remark} Both constructions using VT-like constraints asymptotically meet the $k$-burst lower bound in~\eqref{eq:k_burst_LB_combined} for large $n$.
As such, they match the performance of the cyclic codes described by Abramson and Abdel-Ghaffar et al.~\cite{abramson1959class,abdel1986existence}. According to the latter, an optimal code for $k=2$ can be generated using the polynomial $(1+x)p(x)$, where $p(x)$ is a primitive polynomial of degree $3$. A more restricted generator for special value of the degree provide optimal burst-correction codes for $k=3$, but most other results are purely existential.
\end{remark}

\section{Modified Fire-Codes for Permutation Error Correction} \label{sec:perm-fire}

The next simple and general construction uses the fact that permutation errors preserve the weight of the affected substring. We can add this constraint into the classical Fire code construction for arbitrary bursts of length $k$ and save one bit of redundancy as follows.
\begin{proposition}\label{prop:Fire_save_one_bit}
Let $n$ and $k$ be positive integers such that $n>2k$.
Let $\mc{C}$ be a (nontrivial) $k$-burst-correcting Fire code of block length $n$ and redundancy $r$ as defined in~\eqref{eq:Fire_shortened} in Appendix~\ref{app:Fire}.
Write $g(x)=(x^{2k-1}+1)f(x)$ for the associated generator of $\mc{C}$,
    and define $h_k(x)\coloneqq (x^{2k-1}+1)/(x+1)=1+x+x^2+\cdots+x^{2k-2}$.
Then, the code
\begin{align*}
    \mc{C}_{\tn{even}}\coloneqq \{c(x)\in\mbb{F}_2[x]~:~ \deg(c)<n,~\left.h_k(x)f(x)\middle|c(x)\right.\}
\end{align*}
    has block length $n$ and redundancy $r-1$ and can correct a single even-weight $k$-burst.
\end{proposition}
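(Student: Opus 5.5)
The plan is to show two things: that $\mc{C}_{\tn{even}}$ has redundancy exactly $r-1$, and that every even-weight $k$-burst leaves a distinct syndrome, which is done by embedding $\mc{C}_{\tn{even}}$ into a slightly larger Fire-type structure.

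\emph{Redundancy.} The generator $g(x)=(x^{2k-1}+1)f(x)$ has degree $r$. Since $x^{2k-1}+1=(x+1)h_k(x)$ over $\mbb{F}_2$, the new generator $h_k(x)f(x)$ has degree $r-1$. The code $\mc{C}_{\tn{even}}$ is the shortened cyclic-type code of all multiples of $h_k f$ of degree less than $n$. Its dimension is therefore $n-(r-1)$, and its redundancy is $r-1$.

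\emph{Decoding.} Let $c\in\mc{C}_{\tn{even}}$ and let $e(x)=x^i b(x)$ be an even-weight burst with $\deg b\le k-1$ and $b(0)=1$. The key observation is that the received word $y=c+e$ can be lifted back into the Fire setting. Even weight of $b$ means $b(1)=0$, so $(x+1)\mid b(x)$. Hence
\[
(x+1)\,y(x)=(x+1)c(x)+x^i(x+1)b(x),
\]
where $(x+1)c(x)$ is divisible by $(x+1)h_kf=g$, i.e.\ it is a codeword of the Fire code $\mc{C}$ of length $n+1$. The error term $x^i(x+1)b(x)$ is not guaranteed to be a burst of length $\le k$, however, since it may have length $k+1$. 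So I will not use this lifting directly and will instead argue uniqueness through syndromes.

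\emph{Uniqueness.} It suffices to show that two distinct even-weight bursts $e_1\ne e_2$ of length $\le k$ never satisfy $h_kf\mid e_1+e_2$. Write $e_j=x^{i_j}b_j$ with $b_j=(x+1)a_j$ and $\deg a_j\le k-2$. Then
\[
h_kf\mid x^{i_1}(x+1)a_1+x^{i_2}(x+1)a_2
\quad\Longleftrightarrow\quad
g\mid (x+1)^2\bigl(x^{i_1}a_1+x^{i_2}a_2\bigr).
\]
I then follow the standard Fire uniqueness proof from Appendix~\ref{app:Fire}, with $x^{2k-1}+1$ replaced by $h_k$ and the modified burst polynomials $a_j$, which have length $\le k-1$.
\begin{itemize}
\item \emph{Divisibility by $h_k$.} Since $h_k$ has degree $2k-2$ and $x$ has order $2k-1$ modulo $h_k$ (because $h_k\mid x^{2k-1}+1$), the usual division argument applies. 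Write $i_2-i_1=q(2k-1)+s$. Divisibility by $h_k$ then gives $h_k\mid a_1+x^sa_2$, where the right side has degree $<2k-2$ for the relevant range of $s$. This forces $a_1=x^sa_2$, and hence $s=0$ and $a_1=a_2$, exactly as in the Fire argument.
\item \emph{Divisibility by $f$.} Having $a_1=a_2$, divisibility by $f$ gives $f\mid x^{i_1}(x^{q(2k-1)}+1)a_1$. Since $f$ is irreducible with $\deg f\ge k$ and has order $e$ coprime to $2k-1$, this yields $e(2k-1)\mid i_2-i_1$. Because $n\le e(2k-1)$, we conclude $i_1=i_2$.
\end{itemize}

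\emph{Main obstacle.} The delicate step is the degree bookkeeping in the $h_k$ stage. The modulus has dropped to degree $2k-2$, and the $(x+1)$ factor has been divided out of the error rather than the modulus, so this exchange must leave enough room. Because both bursts lose one degree while the modulus loses only one, the inequality $\deg(a_1+x^sa_2)<2k-2$ still holds when $s\le k-1$; the case $s\ge k$ is handled by symmetry, swapping roles modulo $2k-1$. Once this is checked, decoding proceeds as a standard error-trapping Fire decoder, with the even-weight prior restricting the candidate bursts.
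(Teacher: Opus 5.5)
Your route differs from the paper's, and most of it checks out. The reduction to showing $h_kf\nmid e_1+e_2$ is right, and so is the $h_k$ stage, including the wrap-around case $s\ge k$ handled via $s'=2k-1-s$, and the use of $a_j(0)=b_j(0)=1$ to force $s=0$.

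The last step, however, uses a hypothesis you do not have. The Fire code of Appendix~\ref{app:Fire} only requires $f\nmid x^{2k-1}+1$, i.e.\ that the period $e$ of $f$ does not \emph{divide} $2k-1$. It need not be coprime to $2k-1$: take $k=2$ and $f=x^4+x+1$, with $e=15$ and $\gcd(15,3)=3$. The paper's own construction with a primitive $f$ of even degree and $k=2$ has the same feature. So $e\mid q(2k-1)$ does not give $e(2k-1)\mid i_2-i_1$, and the length bound actually available is $n\le N=\mathrm{lcm}(2k-1,e)$, not $n\le e(2k-1)$.

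The fix is one line. Since $i_2-i_1=q(2k-1)$ is a common multiple of $2k-1$ and $e$, it is a multiple of $N$, and $0\le i_2-i_1<n\le N$ forces $i_1=i_2$. Two smaller points should also be made explicit:
\begin{itemize}
\item State that $h_k(1)=1$ (because $2k-1$ is odd) and $f(1)=1$. This is what lets you discard the factor $x+1$ before working modulo $h_k$ and modulo $f$.
\item Treat the case where one of the two error patterns is $0$. There you get $h_k\mid a_1$ with $a_1\neq 0$ and $\deg a_1<2k-2$, which is impossible.
\end{itemize}

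With that repair, your proof re-derives Fire's theorem from scratch for $h_kf$ acting on the reduced patterns $a_j=b_j/(x+1)$. The paper never reopens the Fire argument. Since $e_1+e_2$ has even weight, $(x+1)\mid e_1+e_2$. Combined with $h_kf\mid e_1+e_2$ and $\gcd(x+1,h_kf)=1$, this gives $g\mid e_1+e_2$. So $e_1+e_2$ is a nonzero codeword of $\mc{C}$ equal to a sum of two $k$-bursts, contradicting the burst-correcting property of $\mc{C}$.

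You were one step away from this. Your display already writes $e_1+e_2=(x+1)P$. The useful observation is that the \emph{difference} of two errors already carries the factor $x+1$; multiplying by $x+1$ again, as in your lifting attempt and your $(x+1)^2$ equivalence, does not help.

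The paper's reduction is shorter and uses the Fire theorem as a black box. It also applies unchanged to any shortened cyclic burst-correcting code whose generator has the form $(x+1)g'$ with $g'(1)=1$. Your argument is self-contained and shows explicitly how $h_k$ (degree $2k-2$, with $x$ of order $2k-1$) pins down the burst pattern while $f$ pins down its position. The cost is that it must redo the degree and period bookkeeping, which is exactly where the slip entered.
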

\begin{proof}
The block length of $\mc{C}_{\tn{even}}$ is clearly $n$ by construction.
At the same time,
    the assumption that $\mc{C}$ is nontrivial implies that $\deg(g)<n$,
    and thus $\deg(h_k f)=\deg(g)-1<n$ as well.
It follows that
    the redundancy of $\mc{C}_{\tn{even}}$ is $\deg(h_k f)=\deg(g)-1=r-1$.

It remains to prove that $\mc{C}_{\tn{even}}$ can correct any single even-weight $k$-burst.
Assume for contradiction that $\mc{C}_{\tn{even}}$ cannot correct an even-weight $k$-burst.
Then,
    there exist $c_1(x),c_2(x)\in\mc{C}_{\tn{even}}$ and two even-weight bursts $e_1(x)$ and $e_2(x)$ such that
    $c_1(x)\neq c_2(x)$ but
\begin{align}
    c_1(x)+e_1(x)=c_2(x)+e_2(x).\label{eq:even_weight_Fire}
\end{align}
Define $c_3(x)\coloneqq c_1(x)+c_2(x)$,
    which is also in $\mc{C}_{\tn{even}}$ by linearity of $\mc{C}_{\tn{even}}$,
    and in particular $c_3(x)\neq 0$.
Thus from \eqref{eq:even_weight_Fire}
    we have
\begin{align}
    c_3(x)+e_1(x)=e_2(x).\label{eq:even_weight_Fire_c3}
\end{align}
Since both $e_1(x)$ and $e_2(x)$ have even-weights,
    we have $x+1|e_1(x)$ and $x+1|e_2(x)$,
    and thus from \eqref{eq:even_weight_Fire_c3} we also have
\begin{align}
    x+1|c_3(x).\label{eq:even_weight_Fire_xp1}
\end{align}
At the same time,
    since $c_3(x)\in\mc{C}_{\tn{even}}$ we also have
\begin{align}
    h_k(x)f(x)|c_3(x).\label{eq:even_weight_Fire_hkf}
\end{align}
Observe that $h_k(1)=1$ and by the assumption on $f$ from Proposition~\ref{prop:Fire} we have $f(1)=1$ as well.
Therefore,
    $x+1$ does not divide $h_k(x)f(x)$,
    and thus combining \eqref{eq:even_weight_Fire_xp1} and \eqref{eq:even_weight_Fire_hkf} gives
\begin{align*}
    (x+1)h_k(x)f(x)|c_3(x).
\end{align*}
Recall that 
    $\deg (c_3)<n$ by the definition of $\mc{C}_{\tn{even}}$
    and that $(x+1)h_k(x)f(x)=g(x)$ by construction. Thus, 
    we have $c_3(x)\in \mc{C}$ as well. Finally, 
    an even-weight $k$-burst is itself a $k$ burst.
Together with the fact that $0\in\mc{C}$ and $c_3(x)\neq 0$,
    Equation \eqref{eq:even_weight_Fire_c3} implies that
    $\mc{C}$ cannot correct a $k$-burst,
    which contradicts the error-correction capability assumptions on $\mc{C}$.
\end{proof}
A direct consequence of Proposition \ref{prop:Fire_save_one_bit}
    is the following: together with the redundancy upper-bound on Fire codes in \eqref{eq:Fire_ub} in Appendix \ref{app:Fire},
    one can construct even-weight $k$-burst-correcting codes (and hence $k$-permutation-correcting codes)
    with block length $n$ and redundancy at most
\begin{align*}
    \max\{\log(n+1)+2k-1,3k-2\}.
\end{align*}

\section{A Refined Existential Results on Cyclic Permutation Codes} \label{sec:cyclic-refined}

In what follows, we prove an existential result on cyclic substring permutation codes for the special case where $k$ is prime.
\begin{theorem}\label{thm:cyc_perm_k_prime}
Let $k$ be a prime. There exists a length-$k$ cyclic substring permutation error-correcting code
with redundancy
\begin{align}
\log n + 20 \log k + O(1).
\label{eq:cyc_perm_k_prime}
\end{align}
\end{theorem}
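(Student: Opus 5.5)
We follow the standard two-stage design: first locate the error to a window of size $\mathrm{poly}(k)$ using $\log n + O(\log k)$ bits, then correct it inside that window with $O(\log k)$ further bits. Once the window is known, a Gilbert--Varshamov type greedy argument makes local correction cheap, since each window admits only $\mathrm{poly}(k)$ error patterns. Because the statement is existential and only claims the constant $20$ in front of $\log k$, we can be generous with polynomial factors in $k$.

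\textbf{Step 1: a weighted syndrome that localizes the error.} Write $\mbf{y}$ for the result of applying $\tau_k^s$, $s\in[1,k-1]$, to the window $[i,i+k-1]$ of $\mbf{x}$. The error vector $\mbf{e}=\mbf{x}\xor\mbf{y}$ is supported in that window. Since $\tau_k^s$ preserves weight, the integer difference $\mbf{y}-\mbf{x}$ (over $\mathbb{Z}$) has zero sum. The VT-type moment $f_1(\mbf{y})-f_1(\mbf{x})=\sum_j j(y_j-x_j)$ then depends only on the window content and $s$, and not on $i$, because the zero sum cancels the shift. It is therefore bounded by $O(k^2)$ in absolute value.

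This means $f_1$ alone carries no position information, and the position must come from $f_2$. Writing $j=i+t$, the change $\sum_j \binom{j}{2}(y_j-x_j)$ equals $i\cdot D_1 + D_2$, where $D_1=\sum_t t(y_{i+t}-x_{i+t})$ and $D_2$ are bounded by $\mathrm{poly}(k)$ and depend only on the local pattern.

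\textbf{Step 2: the role of primality.} The proof breaks down if $D_1=0$ for a nontrivial shift. We aim to show that this cannot happen when $k$ is prime.

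Suppose $\tau_k^s(\mbf{z})\neq\mbf{z}$. Consider the polynomial $z(X)=\sum_t z_t X^t$ modulo $X^k-1$; the cyclic shift multiplies it by $X^{-s}$. The first moment of $\tau^s(\mbf{z})-\mbf{z}$, reduced modulo $k$, equals $-s\cdot w(\mbf{z}) \bmod k$. This is nonzero unless $k\mid w(\mbf{z})$, which forces $w\in\{0,k\}$ and hence a trivial shift, because $k$ is prime and $s\not\equiv 0$.

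So $D_1\not\equiv 0 \pmod k$, and in particular $D_1\neq 0$, with $|D_1|\le k^2$.

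\textbf{Step 3: the code.} We impose $f_1(\mbf{x})\equiv a \pmod{p_1}$ with $p_1=\Theta(k^2)$, so that $D_1$ is recovered exactly from $\mbf{y}$. We also impose $f_2(\mbf{x})\equiv b\pmod{p_2}$ with $p_2=\Theta(nk^{c})$ prime. Decoding then reads off $iD_1+D_2$.

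The remaining difficulty is that $D_2$ is unknown. We plan to take the candidate list of all pairs $(D_1,D_2)$ consistent with the decoded $D_1$, of which there are $\mathrm{poly}(k)$. Each candidate yields a value of $i$, so the true position lies in a list of $\mathrm{poly}(k)$ candidate windows. For the exact value we rely on $|D_1|\ge1$ and $|D_2|\le k^3$: the quantity $iD_1+D_2$ determines $i$ up to $O(k^3)$ choices, and all such choices are spread over an interval of length $O(k^3)$.

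For the final step, partition $[1,n]$ into blocks of length $L=\Theta(k^3)$ and apply an auxiliary hash $h$ of $O(\log k)$ bits that corrects any single cyclic burst confined to an interval of $2L$ known positions. The number of such error patterns is $\mathrm{poly}(k)$, so we hash the block contents with interleaved residues, for example with a syndrome from a Reed--Solomon code over $\mathbb{F}_{2^{O(\log k)}}$ applied across block indices modulo $2$. Greedy existence over $\mathrm{poly}(k)$ confusable patterns costs $O(\log k)$ bits. The total redundancy is $\log n + c\log k$, and we expect bookkeeping of the constants to give $c\le 20$. By averaging over residues $(a,b,\dots)$, some coset has the claimed size.

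\textbf{Main obstacle.} The hardest step is the exact localization: $D_2$ and $D_1$ interact, so different $(i,s)$ pairs can collide in the value of $iD_1+D_2$. We would handle this by bounding the collision set to $\mathrm{poly}(k)$ candidates within an $O(k^3)$ window and then invoking the greedy local-hash argument. Making that argument rigorous, meaning the count of confusable pairs per codeword is $n\cdot\mathrm{poly}(k)$ while the hash only needs to separate $\mathrm{poly}(k)$ of them, is where we expect most of the work.
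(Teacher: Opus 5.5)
Your architecture matches the paper's:
\begin{itemize}
\item a first-moment check modulo $\Theta(k^2)$, with primality giving $D_1\equiv -s\,w\not\equiv 0 \pmod k$ for every nontrivial shift;
\item a second-moment check modulo $\Theta(nk^2)$, which pins the window to an interval of length $L=\Theta(k^3)$ because $|D_2|=O(k^3)$ and $|D_1|\ge 1$ (using $\sum_j\binom{j}{2}x_j$ instead of $\sum_j j^2x_j$ makes no difference);
\item a local hash on the overlapping windows $\boldx_{(\tau-1)L+1:(\tau+1)L}$, summed separately over odd and even $\tau$ (this is how your ``blocks \ldots\ modulo $2$'' should be read);
\item a pigeonhole choice of coset.
\end{itemize}
However, as written the proposal does not prove the statement. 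The statement's only content beyond $\log n+O(\log k)$ is the coefficient $20$. You say bookkeeping is \emph{expected} to give $c\le 20$, and you leave the local hash, which accounts for $16$ of the $20$, as the unresolved main obstacle.

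The missing step is short.
\begin{itemize}
\item Let $G$ be the graph on $\{0,1\}^{2L}$ in which distinct $\boldu,\boldv$ are adjacent iff $\boldu$ arises from $\boldv$ by two length-$k$ cyclic permutations. Equivalently, they can produce a common output, since the inverse of a cyclic shift is a cyclic shift.
\item Each word has at most $(2L-k+1)(k-1)$ single-error outputs. So $G$ has maximum degree at most $(k-1)^2(2L-k+1)^2$.
\item Greedy coloring then yields $h:\{0,1\}^{2L}\to[0,M-1]$ with $M=(k-1)^2(2L-k+1)^2+1=\Theta(k^8)$.
\item The decoder recovers $h(\boldx^{(\tau^\star)})$ from the sum over the parity class of $\tau^\star$. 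The other windows in that class are disjoint from $\boldx^{(\tau^\star)}$, so they are unchanged.
\item It then tries all $(2L-k+1)(k-1)$ single cyclic shifts of $\boldy^{(\tau^\star)}$. Any wrong candidate is a neighbor of $\boldx^{(\tau^\star)}$ in $G$, so its color differs.
\item The redundancy count is $\log(2k^2)+\log(4nk^2)+2\log M=\log n+4\log k+16\log k+O(1)$. This requires the $f_2$-modulus to be $\Theta(nk^2)$, not $\Theta(nk^c)$ with $c$ unspecified.
\end{itemize}

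Two further corrections:
\begin{itemize}
\item There is no separate ``exact localization'' problem. Collisions among different $(i,s)$ inside the $O(k^3)$ interval are exactly what the coloring absorbs.
\item Drop the Reed--Solomon suggestion. Taken literally, with more than one check symbol, an RS code across the roughly $n/L$ window indices needs a field of size about $n/L$. Each check symbol would then cost $\Theta(\log n)$ bits and spoil the leading $\log n$ term. Since the affected window is already known, a single sum modulo $M$ per parity class suffices.
\end{itemize}
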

\begin{remark}\label{rmk:thm_cyc_perm_k_prime}
Theorem \ref{thm:cyc_perm_k_prime} holds both when $k$ is a constant prime number and when $k$ grows with $n$ while maintaining its primality. However,
    it should be noted that
    the result in Theorem~\ref{thm:cyc_perm_k_prime} is practically relevant only when $k\leq n^{1/18}$,
    since for $k>n^{1/18},$
    Theorem~\ref{thm:CP_exact_k_red_UB}
    already implies the existence of a code with redundancy
    $2\log n + 2\log k + o(1) \leq \log n + 20 \log k + O(1)$.

\end{remark}
The proof of Theorem~\ref{thm:cyc_perm_k_prime} proceeds through several steps, outlined next.

We first propose the following lemma to confine the length-$k$ permutation error window into a segment of length $O(k^3)$.
\begin{lemma}\label{lem:locate_window_kcube}
Let $k$ be a prime number,
    $\boldx$ be a length-$n$ binary sequence, and
    $\boldy$ be obtained from $\boldx$ via a length-$k$ cyclic permutation.
Define
\begin{align*}
    g_1(\boldx)&=f_1(\boldx)\coloneqq \sum_{t=1}^n t x_t,\\
    g_2(\boldx)&\coloneqq \sum_{t=1}^n t^2 x_t,
\end{align*}
    and then define
    $\mathcal{D} g_i \coloneqq g_i(\boldy)-g_i(\boldx)$ for $i=1,2$. Also, define
\begin{align}
    L\coloneqq \frac{k(2k^2+3k+7)}{6}.
    \label{eq:L_def}
\end{align}
Then,
    given the knowledge of
    $\boldy$, 
    $\mathcal{D} g_1$, and $\mathcal{D} g_2$,
    we have the following:
\begin{itemize}
    \item
        If $\mathcal{D} g_1=0$,
            then $\boldx=\boldy$.
    \item 
        If $\mathcal{D} g_1\neq 0$,
            then $\boldx\neq \boldy$,
            and one can determine a substring in $\boldy$ of length $L$ 
            such that the length-$k$ circularly permuted substring 
            is \emph{contained} in the longer segment.
\end{itemize}
\end{lemma}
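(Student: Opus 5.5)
The plan is to do all computations relative to the unknown window and to use the primality of $k$ only to show that $\mathcal{D} g_1$ detects every nontrivial shift. Suppose the error acts on the window $[i,i+k-1]$. Write $\boldz=(z_1,\ldots,z_k)=x_{i:i+k-1}$ and let $\boldz'=\tau_k^s(\boldz)$, $s\in[1,k-1]$, be the permuted window in $\boldy$. Set $w=w(\boldz)=w(\boldz')$. Put $A\coloneqq\sum_{j=1}^k j(z'_j-z_j)$ and $B\coloneqq\sum_{j=1}^k j^2(z'_j-z_j)$. Since positions outside the window are untouched and $t=i-1+j$, we get
\begin{align*}
\mathcal{D} g_1 &= (i-1)\cdot 0 + A = A,\\
\mathcal{D} g_2 &= (i-1)^2\cdot 0 + 2(i-1)A + B .
\end{align*}
Both constant terms vanish because the cyclic shift preserves the weight $w$. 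So $\mathcal{D} g_1$ depends only on $(\boldz,s)$, and $\mathcal{D} g_2$ is affine in $i$ with slope $2\mathcal{D} g_1$.

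First item: I will compute $A$ explicitly. Under $\tau_k^s$, a $1$ at position $p>s$ moves to $p-s$, and a $1$ at $p\le s$ moves to $p-s+k$. Hence $A=km-sw$, where $m$ is the number of ones in $z_{1:s}$. If $A=0$, then $km=sw$. Since $k$ is prime and $0<s<k$, this forces $k\mid w$, so $w\in\{0,k\}$. Then $\boldz$ is constant and $\boldy=\boldx$. Conversely, if $\boldy=\boldx$, then the window contributes nothing and $\mathcal{D} g_1=0$. This proves the first item, and also shows that $\mathcal{D} g_1\neq 0$ implies $\boldx\neq\boldy$.

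Second item: assume $\mathcal{D} g_1\neq0$, so that $|\mathcal{D} g_1|\ge1$ as an integer. Let $S\coloneqq\sum_{j=1}^k j^2=k(k+1)(2k+1)/6$. Since $\sum_j j^2 z'_j$ and $\sum_j j^2 z_j$ both lie in $[0,S]$, we have $B\in[-S,S]$. Solving for $i$ gives
\[
i-1=\frac{\mathcal{D} g_2-B}{2\,\mathcal{D} g_1},
\]
with the unknown $B$ ranging over $[-S,S]$. The decoder knows $\mathcal{D} g_1$ and $\mathcal{D} g_2$, so $i$ is confined to an interval of real length at most $2S/(2|\mathcal{D} g_1|)\le S$. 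This interval contains at most $S+1$ integer candidates, say $i\in[a,a+S]$, which we intersect with $[1,n-k+1]$. The permuted window then lies inside $[a,a+S+k-1]$, whose length is $S+k=\frac{2k^3+3k^2+k}{6}+k=L$, exactly the quantity in \eqref{eq:L_def}. If this segment runs past the ends of $\boldy$, we shift it inward to obtain a length-$L$ substring that still contains the window (assuming $n\ge L$).

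The main obstacle is the divisibility step $km=sw\Rightarrow k\mid w$. This is the only place primality is used, and it is what guarantees both $\mathcal{D} g_1\neq0$ for every nontrivial error and the lower bound $|\mathcal{D} g_1|\ge 1$ needed for localization. The remaining steps are bookkeeping of the integer range for $i$.
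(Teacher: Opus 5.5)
Your proof is correct and follows essentially the same route as the paper. Both use the weight-based formula $\mathcal{D} g_1 = km - sw$ (the paper's $(k-s)A - sB$), with primality forcing the window weight to be $0$ or $k$. Both then use the identity $\mathcal{D} g_2 = 2(i-1)\mathcal{D} g_1 + B$ with $|B| \le k(k+1)(2k+1)/6$, which confines the window to $k(k+1)(2k+1)/6 + k = L$ consecutive positions. The differences are cosmetic: you index the window as $[i,i+k-1]$ rather than $[i+1,i+k]$, and you add a remark about shifting the segment inward at the ends of $\boldy$, which the paper leaves implicit.
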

\begin{proof}
By the definition of a cyclic substring permutation error,
    there exists an $i\in[0,n-k]$ and a $s\in[1,k-1]$ such that
\begin{align*}
    \boldy = \boldx_{1:i} \circ \boldx_{i+s+1:i+k}\circ \boldx_{i+1:i+s} \circ \boldx_{i+k+1:n}.
\end{align*}
In words,
    the error model circularly permutes the substring comprising positions  $[i+1,i+k]$.
Define $\bolda \coloneqq \boldx_{i+1:i+s}$ and $\boldb \coloneqq \boldx_{i+s+1:i+k}$,
    so that we can write
\begin{align*}
    \boldx &= \boldx_{1:i} \circ \bolda \circ \boldb \circ \boldx_{i+k+1:n},\\
    \boldy &= \boldx_{1:i} \circ \boldb \circ \bolda \circ \boldx_{i+k+1:n}.
\end{align*}
A direct calculation shows that
\begin{align}
    \mathcal{D} g_1 = (k-s)A - sB,\label{eq:Delta_g_1}
\end{align}
    where $A$ and $B$ denote the Hamming weights of $\bolda$ and $\boldb$, respectively.
The reason is that for each $1$ in $\bolda$,
    transforming $\boldx$ into $\boldy$ shifts this $1$ by $|\boldb|=k-s$ positions to the right. A similar argument holds for $1$s in $\boldb$.
    
We now claim that 
    if $\mathcal{D} g_1=0$ then $\boldy=\boldx$.
Assume that $\mathcal{D} g_1= 0$.
Since $\mathcal{D} g_1= (k-s)A - sB $,
    we get
\begin{align}
    kA=s(A+B). \label{eq:Delta_g_1_zero}
\end{align}
In particular,
    from \eqref{eq:Delta_g_1_zero} we have
    $k|s(A+B)$.
Then,
    since $k$ is a prime number and $s\in[1,k-1]$,
    we have
    $k | A+B$.
Recall that the lengths of $\bolda$ and $\boldb$ sum to $k$,
    and thus $0\leq A+B\leq k$.
It follows that either $A+B=0$ or $A+B=k$.
Note that $A+B=0$ implies that both $\bolda$ and $\boldb$ are all-zero sequences,
    in which case
    switching $\bolda$ and $\boldb$ leads to the same sequence,
    i.e. $\boldy=\boldx$.
A similar argument holds for $A+B=k$,
    which corresponds to the case that both $\bolda$ and $\boldb$ are the all-one sequences.
This argument establishes the claim.

In the following we assume $\mathcal{D} g_1\neq 0$.
We now compute
\begin{align}
    \mathcal{D} g_2
        &= g_2(\boldy)-g_2(\boldx)\nonumber\\
        &= \sum_{t=1}^n t^2(y_t-x_t)\nonumber\\
        &= \sum_{t=i+1}^{i+k} t^2(y_t-x_t)\label{eq:same_outside}\\
        &= \sum_{j=1}^k (i+j)^2(y_{i+j}-x_{i+j})\label{eq:j_substitute}\\
        &= \sum_{j=1}^k (i^2+2ij+j^2)(y_{i+j}-x_{i+j})\nonumber\\
        &= i^2\sum_{j=1}^k (y_{i+j}-x_{i+j})+2i\sum_{j=1}^k j(y_{i+j}-x_{i+j})\nonumber\\
        &\qquad\qquad+\sum_{j=1}^k j^2(y_{i+j}-x_{i+j}),
        \label{eq:Delta_g_2_expand}
\end{align}
    where~\eqref{eq:same_outside} follows from the fact that $\boldx$ and $\boldy$ agrees on $[1,n]\setminus[i+1,i+k]$ and~\eqref{eq:j_substitute} follows from the change of variable $j=t-i$.
Then, we have
\begin{align}
    \sum_{j=1}^k(y_{i+j}-x_{i+j})=0.
    \label{eq:Delta_g_2_sum_to_0}
\end{align}
The claim holds because 
    $\boldx_{i+1:i+k}=\bolda\circ\boldb$ and $\boldy_{i+1:i+k}=\boldb\circ\bolda$,
    and thus $\boldx_{i+1:i+k}$ and $\boldy_{i+1:i+k}$ have the same Hamming weight.
Consequently,
\begin{align}
    \sum_{j=1}^k j(y_{i+j}-x_{i+j})
        &=\sum_{j=1}^k (i+j)(y_{i+j}-x_{i+j})\label{eq:Delta_g_2_term_2_add_0}\\
        &=\sum_{t=i+1}^{i+k} t(y_t-x_t)\label{eq:Delta_g_2_term_2_t_substitute}\\
        &=\sum_{t=1}^n t(y_t-x_t)\label{eq:Delta_g_2_same_outside}\\
        &=\mathcal{D} g_1,\label{eq:Delta_g_2_term_2_result}
\end{align}
    where \eqref{eq:Delta_g_2_term_2_add_0} is obtained by adding $0=i\sum_{j=1}^k(y_j-x_j)$ to both sides of the first equality,
    \eqref{eq:Delta_g_2_term_2_t_substitute} follows from the change of variable $t=i+j$,
    and \eqref{eq:Delta_g_2_same_outside} follows from the fact that $\boldx$ and $\boldy$ agree on $[1,n]\setminus[i+1,i+k]$.
Substituting \eqref{eq:Delta_g_2_sum_to_0} and \eqref{eq:Delta_g_2_term_2_result} into \eqref{eq:Delta_g_2_expand} leads to
\begin{align}
    \mathcal{D} g_2 = 2i\mathcal{D} g_1 + \sum_{j=1}^k j^2(y_{i+j}-x_{i+j}).
    \label{eq:Delta_g_2_simplified}
\end{align}
Since we have assumed that $\mathcal{D} g_1\neq 0$,
    we see from~\eqref{eq:Delta_g_2_simplified} that
\begin{align}
    i=  \frac{\mathcal{D} g_2}{2\mathcal{D} g_1}-\frac{1}{2\mathcal{D} g_1}\sum_{j=1}^k j^2(y_{i+j}-x_{i+j}).
    \label{eq:Delta_g_2_i_simplified}
\end{align}

Then,
    using the bounds $|\mathcal{D} g_1|\geq 1$ and $|x_{i+j}-y_{i+j}|\leq 1$,
    we get
\begin{align}
    \left|\frac{1}{2\mathcal{D} g_1}\sum_{j=1}^k j^2(y_{i+j}-x_{i+j})\right|
    &\leq \frac{1}{2}\sum_{j=1}^k j^2\nonumber\\
    &=\frac{k(k+1)(2k+1)}{12}.
    \label{eq:i_deviation_bound}
\end{align}
Putting \eqref{eq:i_deviation_bound} into \eqref{eq:Delta_g_2_i_simplified},
    we obtain that
    the index $i$ lies in the (real) 
    interval
\begin{align*}
    \left[ \frac{\mathcal{D} g_2}{2\mathcal{D} g_1}- \frac{k(k+1)(2k+1)}{12},\frac{\mathcal{D} g_2}{2\mathcal{D} g_1}+\frac{k(k+1)(2k+1)}{12}\right].
\end{align*}
Lastly,
    the permuted window $[i+1,i+k]$ is contained in the (real) interval
\begin{align*}
    &\left[ \frac{\mathcal{D} g_2}{2\mathcal{D} g_1}- \frac{k(k+1)(2k+1)}{12}+1,\right.\\
    &~\left.\frac{\mathcal{D} g_2}{2\mathcal{D} g_1}+\frac{k(k+1)(2k+1)}{12}+k\right],
\end{align*}
    which is of length
\begin{align*}
    \frac{k(k+1)(2k+1)}{6}+k-1=L-1,
\end{align*}
    where we used the definition of $L$ in \eqref{eq:L_def}. The claim then follows by noting that
    any length-$(L-1)$ real interval
    can contain at most $L$ consecutive integers.
\end{proof}

Lemma \ref{lem:locate_window_kcube}
    locates the permuted substring within another substring of length $L=O(k^3)$. The next lemma shows how to construct a ``local hash function'' for any length-$(2L)$ binary sequence that can help with further positioning of the cyclic shift.
\begin{lemma}\label{lem:greedy_coloring}
Define
\begin{align}
    M&\coloneqq  (k-1)^2(2L-k+1)^2+1,
    \label{eq:M_def}\\
    &=O(k^8)\nonumber,
\end{align}
    where $L=O(k^3)$ is defined in \eqref{eq:L_def}.
There exists a function
\begin{align*}
    h:\{0,1\}^{2L}\rightarrow [0,M-1]
\end{align*}
    satisfying the following property: for any distinct $\boldu,\boldv\in\{0,1\}^{2L}$
    such that $\boldu$ is obtained by applying two length-$k$ cyclic permutations to $\boldv$,
    we have $h(\boldu)\neq h(\boldv)$.
\end{lemma}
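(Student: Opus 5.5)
We will construct $h$ as a proper vertex coloring of a confusability graph and bound the number of colors by the maximum degree. Let $G$ be the graph on vertex set $\{0,1\}^{2L}$ in which distinct $\boldu,\boldv$ are adjacent whenever $\boldu$ can be obtained from $\boldv$ by applying two length-$k$ cyclic permutations, one after the other. Each permutation acts on a length-$k$ window of a length-$2L$ string by a nontrivial cyclic shift. This relation is symmetric, since the inverse of a cyclic shift of a window is again a cyclic shift of the same window, so $G$ is an undirected simple graph. A function $h$ with the required property is exactly a proper coloring of $G$. It therefore suffices to show that $G$ has maximum degree at most $M-1$. The greedy coloring algorithm then uses at most $M$ colors, and we take $h$ to be the resulting color, identified with an element of $[0,M-1]$.

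The degree bound is a direct count. Fix $\boldv\in\{0,1\}^{2L}$. A single length-$k$ cyclic permutation is specified by the window start, with $2L-k+1$ choices, and the nontrivial shift amount $s\in[1,k-1]$, with $k-1$ choices. Hence $\boldv$ has at most $(k-1)(2L-k+1)$ images under one such error, excluding $\boldv$ itself. Applying a second error to each intermediate string again gives at most $(k-1)(2L-k+1)$ options. So the number of strings $\boldu\neq\boldv$ reachable by exactly two errors is at most $(k-1)^2(2L-k+1)^2$.

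If the intended notion is ``at most two'' errors, the strings reachable by one error still need care. Each of them equals one error followed by the identity, and it can also be reached by two genuine errors, for instance by applying a shift $s$ and then the shift $k-s$ on the same window, which returns to the intermediate string. So these strings are already counted within the two-step bound once we allow the two steps to compose. Alternatively, we may use the cruder but still valid bound obtained by letting each step range over the $(k-1)(2L-k+1)$ choices together with the trivial one. Whichever reading is intended, we will check that the count stays at most $(k-1)^2(2L-k+1)^2=M-1$. This justifies the $+1$ in the definition \eqref{eq:M_def}.

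Given $\Delta(G)\le M-1$, we order the vertices of $\{0,1\}^{2L}$ arbitrarily and assign to each vertex the smallest value in $[0,M-1]$ not used by its already-colored neighbors. A free color always exists because at most $M-1$ neighbors are colored. The bound $M=O(k^8)$ follows from $L=O(k^3)$. The only delicate point is the bookkeeping in the degree bound described above, namely making sure that one-error neighbors and intermediate strings equal to $\boldv$ do not push the count past $(k-1)^2(2L-k+1)^2$. We expect this to be routine, since every neighbor of $\boldv$ arises as the endpoint of at least one of the counted two-step sequences.
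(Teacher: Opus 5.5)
Your route is the same as the paper's: the confusion graph on $\{0,1\}^{2L}$, a maximum-degree bound of $(k-1)^2(2L-k+1)^2=M-1$ obtained by counting window positions and nontrivial shifts, and greedy coloring with $M$ colors. The one flawed spot is your bookkeeping for strings a single error away from $\boldv$. These are genuinely neighbors in $G$, because $S_{\textnormal{cyc},k}$ contains the identity.

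\begin{itemize}
\item Your example fails. A shift by $s$ followed by a shift by $k-s$ on the same window composes to $\tau_k^{k}$, the identity, so it lands back on $\boldv$, not on the one-error neighbor.
\item Your fallback also fails. Write $F=(k-1)(2L-k+1)$. Letting each step also be trivial gives the count $(F+1)^2-1=F^2+2F$, which exceeds $M-1=F^2$.
\item The $+1$ in $M$ is already used up by the greedy coloring, which needs one more color than the maximum degree. It cannot absorb these extra strings.
\end{itemize}

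The repair is the count used in the proof of Theorem~\ref{thm:CP_red_UB}. The vertex $\boldv$ has at most $F$ one-error neighbors $\mathbf{w}\neq\boldv$. Each such $\mathbf{w}$ has at most $F$ one-error neighbors, and by symmetry one of them is $\boldv$. Hence the strings other than $\boldv$ reachable in at most two steps number at most $F+F(F-1)=F^2$.

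This works for every $k\geq 2$, including the prime $k=2$. For $k\geq 3$ you could instead write $\tau_k^{s}=\tau_k^{s_2}\tau_k^{s_1}$ with $s_1\in[1,k-1]\setminus\{s\}$ and $s_2\equiv s-s_1 \pmod{k}$. This exhibits every one-error neighbor as the endpoint of one of your $F^2$ counted nontrivial two-step sequences.

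The paper's proof simply asserts the degree bound from the position/shift count, so with this repair your argument is complete.
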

\begin{proof}
Consider an undirected, simple graph $G=(V,E)$ with vertex set $V=\{0,1\}^{2L}$ and edge set
\begin{align*}
E=\{&\{\boldu,\boldv\}~:~\boldu\neq\boldv\tn{ and }\boldu \tn{ is obtained by applying two }\nonumber\\
&\tn{length-}k \tn{ cyclic permutations to }\boldv\}.
\end{align*}
\begin{remark}
This graph $G$ is the ``confusion graph'' of the cyclic permutation error model with string input length $2L$ (i.e. $\{\boldu,\boldv\}$ is an edge if and only if $\boldu\neq \boldv$ and they both lead to the same output).
\end{remark}
The maximum degree of $G$,
    denoted by $D$,
    satisfies
\begin{align}
    D\leq (k-1)^2(2L-k+1)^2.\label{eq:D_ub}
\end{align}
The above inequality holds because there are $2L-k+1$ possible substring locations,
    and each substring has at most $k-1$ nontrivial circularly permuted outputs. It follows that 
    using a greedy coloring algorithm,
    there exists a coloring of $G$ using $D+1\leq M$ colors (see, e.g.\, \cite[Proposition 5.1.13]{west2001introduction}).
    
As the final step, we note that one can identify each color with an integer in $[0,M-1]$, and let the hash $h$ map each vertex to its color.
\end{proof}
We are now ready to prove Theorem \ref{thm:cyc_perm_k_prime}.
\begin{proof}
As outlined in Remark \ref{rmk:thm_cyc_perm_k_prime},
    it suffices to consider
    $k\leq n^{1/18}$.
In this case,
    $L=O(k^3)=O(n^{1/6})\ll n$. For each $\tau\in[1,\lfloor\frac{n}{L}\rfloor],$ define
\begin{align*}
    \boldx^{(\tau)}\coloneqq \boldx_{(\tau-1)L+1:(\tau+1)L},
\end{align*}
    where we assume that $\boldx^{(\lfloor\frac{n}{L}\rfloor)}$ is padded with zeros as needed, i.e., 
$\boldx^{(1)}=\boldx_{1:2L}$,
$\boldx^{(2)}=\boldx_{L+1:3L}$,
$\boldx^{(3)}=\boldx_{2L+1:4L}$. In particular,
    by construction,
    any length-$L$ substring in $\boldx$ is contained in at least one $\boldx^{(\tau)}$.

We now describe a code
    that can
    can correct a length-$k$ cyclic substring permutation error.
For each $c_1\in[0,2k^2-1]$, $c_2\in[0,4nk^2-1]$, $c_3\in[0,M-1]$, and $c_4\in[0,M-1]$, define
\begin{align*}
&\mc{C}_{c_1,c_2,c_3,c_4}\coloneqq
\{
\boldx\in\{0,1\}^n~:~\\
&g_1(\boldx)\equiv c_1~\bmod 2k^2,
g_2(\boldx)\equiv c_2~\bmod 4nk^2,\\
&\sum_{\tau \tn{ odd}} h(\boldx^{(\tau)})\equiv c_3~\bmod M,
\sum_{\tau \tn{ even}} h(\boldx^{(\tau)}) \equiv c_4~\bmod M
\}.
\end{align*}
Let $\boldx\in\mc{C}_{c_1,c_2,c_3,c_4}$
    and $\boldy$ be obtained by applying a length-$k$ cyclic permutation to $\boldx$.
We show next how to recover $\boldx$
    from $\boldy$ and the parameters $c_1,c_2,c_3,c_4$.

First,
    we show that we can recover $\mathcal{D} g_1$ from $\boldy$ and $c_1$, where we recall that $\mathcal{D} g_1=g_1(\boldy)-g_1(\boldx)$. From~\eqref{eq:Delta_g_1},
    since $A,B\in[0,k-1]$ and $s\in[1,k-1]$,
    we have
\begin{align*}
    |\mathcal{D} g_1|
    & \leq (k-s)A+sB\\
    & \leq (k-s)(k-1)+s(k-1)\\
    &=k(k-1).
\end{align*}
Therefore,
    $\mathcal{D} g_1\in[-k^2+k,k^2-k]$,
    which is an interval of length $2k^2-2k+1\leq 2k^2$.
It follows that knowing $\mathcal{D} g_1\bmod 2k^2$ allows us to uniquely recover $\mathcal{D} g_1$. We also have
\begin{align*}
    \mathcal{D} g_1 \equiv (g_1(\boldy)-c_1)\bmod 2k^2.
\end{align*}
This show that we can recover $\mathcal{D} g_1$ from $\boldy$ and $c_1$.

Similarly,
    from \eqref{eq:Delta_g_2_simplified},
    using $i\in[0,n-k]$, $|\mathcal{D} g_1|\leq k(k-1)\leq k^2$, and $\left|\sum_{j=1}^k j^2(y_{i+j}-x_{i+j})\right|\leq\frac{k(k+1)(2k+1)}{6}$,
    we have
\begin{align*}
    \left|\mathcal{D} g_2\right|
    &\leq 2(n-k)k^2+\frac{k(k+1)(2k+1)}{6}\\
    &= 2nk^2-2k^3+\frac{k(k+1)(2k+1)}{6}\\
    &\leq 2nk^2-1,
\end{align*}
    where the last inequality holds
    since $2k^3-\frac{k(k+1)(2k+1)}{6}\geq 1$.
By a similar argument,
    we can recover $\mathcal{D} g_2$ from $\mathcal{D} g_2 \bmod 4nk^2$,
    which equals $(g_2(\boldy)-c_2) \bmod 4nk^2$.

By Lemma \ref{lem:locate_window_kcube},
    if $\mathcal{D} g_1=0$ then we can recover $\boldx$ since 
    $\boldx=\boldy$.
In the following we assume $\mathcal{D} g_1\neq 0$.
By Lemma~\ref{lem:locate_window_kcube},
    we can find a substring of length $L$ in $\boldy$ that contains the circularly permuted substring of interest.
Let $\tau^{\star}\in[1,\lfloor\frac{n}{L}\rfloor]$ be such that the substring guaranteed by Lemma ~\ref{lem:locate_window_kcube} lies in $[(\tau^{\star}-1)L+1,(\tau^{\star}+1)L]$.
Here we assume that $\tau^{\star}$ is odd,
    and the case where $\tau^{\star}$ is even
    will follow from the same argument.
By construction,
    $\boldx$ and $\boldy$ agree on $[1,n]\setminus [(\tau^{\star}-1)L+1,(\tau^{\star}+1)L]$,
    and thus
\begin{align*}
    \sum_{\tau \tn{ odd}} \left(h(\boldy^{(\tau)})-h(\boldx^{(\tau)})\right)=h(\boldy^{(\tau^{\star})})-h(\boldx^{(\tau^{\star})}).
\end{align*}
Rearranging the terms and taking modulo $M$ leads to
\begin{align}
    h(\boldx^{(\tau^{\star})})\equiv
    \left(c_3-\sum_{\substack{\tau\tn{ odd}\\ \tau\neq\tau^{\star}}}h(\boldy^{(\tau)}) \right) \bmod M.
    \label{eq:h_x_tau_star}
\end{align}
Since $h(\boldx^{(\tau^{\star})})\in[0,M-1]$,
    Equation~\eqref{eq:h_x_tau_star} allows us to compute $h(\boldx^{(\tau^{\star})})$ from $c_3$ and $\boldy$.
    
We now perform a brute-force search by applying a length-$k$ cyclic permutation to $\boldy^{(\tau^{\star})}$ in all possible manners, and check if the output $\boldz^{(\tau^{\star})}$
    satisfies $h(\boldz^{(\tau^{\star})})=h(\boldx^{(\tau^{\star})})$.
By Lemma~\ref{lem:greedy_coloring},
    only the correct cyclic permutation that maps $\boldy^{(\tau^{\star})}$ back to $\boldx^{(\tau^{\star})}$ satisfies the constraint.
That is,
    any $\boldz^{(\tau^{\star})}$ with $\boldz^{(\tau^{\star})}\neq \boldx^{(\tau^{\star})}$
    satisfies $h(\boldz^{(\tau^{\star})}) \neq h(\boldx^{(\tau^{\star})})$.
We can thus uniquely recover $\boldx^{(\tau^{\star})}$,
    and 
    since $\boldy$ and $\boldx$ agree on $[1,n]\setminus [(\tau^{\star}-1)L+1,(\tau^{\star}+1)L]$,
    we can recover $\boldx$. These arguments jointly show that
    $\mathcal{C}_{c_1,c_2,c_3,c_4}$
    is a length-$k$ cyclic substring permutation error-correcting code.

As a final step, by the pigeonhole principle,
    there exist
    $c_1^{\star}\in[0,2k^2-1]$,
    $c_2^{\star}\in[0,4nk^2-1]$,
    $c_3^{\star}\in[0,M-1]$, and
    $c_4^{\star}\in[0,M-1]$ 
    such that
\begin{align*}
    \left|\mathcal{C}_{c_1^{\star},c_2^{\star},c_3^{\star},c_4^{\star}}\right|\geq \frac{2^n}{(2k^2)(4nk^2)M^2}=\frac{2^n}{8nk^4M^2}.
\end{align*}
It follows that the redundancy of $\mathcal{C}_{c_1^{\star},c_2^{\star},c_3^{\star},c_4^{\star}}$ is at most
\begin{align*}
    \log n + 4\log k + 2 \log M + 3
    = \log n + 20\log k + O(1),
\end{align*}
    where we used the fact that $M=\frac{4}{9}k^8+O(k^7)$ from the definition~\eqref{eq:M_def}.
\end{proof}

\section{A $\poly(n,k)$-time code for cyclic permutations} \label{sec:cyclic-efficient}

The above described construction relies on a coloring algorithm and exhaustive search which make its complexity of encoding/decoding unacceptably high. By adding $k$ bits of redundancy,
    we can modify the code in Theorem~\ref{thm:cyc_perm_k_prime} to obtain a code with $\log n + k + O(\log k)$ redundancy and $\poly(n,k)$ time complexity. Our construction is restricted to prime $k$ of properly bounded growth-rate.

\begin{theorem}\label{thm:cyc_perm_k_prime_poly_time}
Let $k$ be a prime number such that $2L\leq n$,
    where $L$ is defined as in~\eqref{eq:L_def}.
In particular, set
    $k=O(n^{1/3})$.
    
Then, there exists a length-$k$ cyclic permutation error-correcting code
    with redundancy 
\begin{align*}
    \log n + k + 49\log k + O(\log \log k)
\end{align*}
such that both the encoder and decoder have time complexity $\poly(n,k)$.
\end{theorem}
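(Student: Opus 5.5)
The plan is to keep the localization step of Theorem~\ref{thm:cyc_perm_k_prime} unchanged and replace only the non-constructive local hash of Lemma~\ref{lem:greedy_coloring}. That hash came from a greedy coloring of a graph on $2^{2L}$ vertices and needed brute-force decoding. In its place we will use an explicit \emph{linear} hash of about $k+O(\log k)$ bits.

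\textbf{Step 1: localization.} As before, we store $g_1(\boldx)\bmod 2k^2$ and $g_2(\boldx)\bmod 4nk^2$, which cost $\log n+4\log k+O(1)$ bits. By Lemma~\ref{lem:locate_window_kcube}, which requires $k$ prime, these values let the decoder recover $\mathcal{D}g_1$ and $\mathcal{D}g_2$. It then decides whether an error occurred and, if so, finds a length-$L$ segment of $\boldy$ containing the permuted window. This takes $\poly(n,k)$ time. We keep the overlapping blocks $\boldx^{(\tau)}$ of length $2L$ and split them into odd and even classes. The condition $2L\le n$ guarantees that these blocks make sense.

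\textbf{Step 2: recovering the error vector.} Write $\be=\boldx\xor\boldy$ for the error vector. It is supported on a window $[i+1,i+k]$ of $k$ consecutive positions. We store the $k$ interleaved parities $p_j(\boldx)=\bigoplus_{t\equiv j \bmod k}x_t$ for $j\in[0,k-1]$, which cost $k$ bits. Every residue class mod $k$ meets the window exactly once, so $p(\boldx)\xor p(\boldy)$ determines $\be$ restricted to the window, written in residue order. Equivalently, it determines the window pattern up to knowing $i\bmod k$.

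Combined with Step 1, $i$ lies in an interval of length $O(k^3)$. This leaves $O(k^3)$ candidate pairs (position, pattern). Each candidate is checked for consistency with $\mathcal{D}g_1$, $\mathcal{D}g_2$ and with being a weight-preserving cyclic rotation. The survivors form a list of $\poly(k)$ candidate corrections $\boldz$ of the block $\boldy^{(\tau^\star)}$. The decoder can enumerate this list in $\poly(k)$ time.

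\textbf{Step 3: disambiguating the candidates.} To pick the correct candidate we add a linear list-disambiguating hash on each block. We view $\boldx^{(\tau)}\in\FF_2^{2L}$ as a polynomial of degree $<2L/m$ over $\FF_{2^m}$, with $m=\Theta(\log k)$, and take $H(\boldx^{(\tau)})$ to be its evaluations at a constant number of points. For each parity class we store $\bigoplus_{\tau\text{ odd}}H(\boldx^{(\tau)})$ and $\bigoplus_{\tau\text{ even}}H(\boldx^{(\tau)})$. By linearity, the decoder recovers $H(\boldx^{(\tau^\star)})$ exactly as in \eqref{eq:h_x_tau_star}.

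Two distinct candidates $\boldz\ne\boldz'$ agree at a random evaluation point with probability at most $(2L/m)/2^m$. A union bound over the $\poly(k)$ candidate pairs then shows that some choice of points separates all pairs within the list, for every list. A naive union bound over all lists is exponential in $L$, so instead we use the structure of the candidates: candidate differences are sparse vectors localized in $O(k^3)$ positions, of which there are only $\poly(k)$ types per block, so the union bound is taken over these difference types rather than over lists. This argument, together with choosing $m$ and the number of evaluations, is where the constant $49$ (that is, $45\log k$ beyond the $4\log k$ of Step 1) should come from. The remaining slack is absorbed into $O(\log\log k)$ through rounding $m$ to field sizes.

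\textbf{Step 4: efficient encoding (main obstacle).} Decoding is clearly polynomial: localize, enumerate $\poly(k)$ candidates, and compare hashes. I expect the encoder to be the main obstacle. The pigeonhole argument of Theorem~\ref{thm:cyc_perm_k_prime} only shows that a good coset exists. Instead I would encode systematically. The parities and the $H$-hashes are $\FF_2$-linear, so they are handled by a systematic linear encoder. The integer hashes $g_1\bmod 2k^2$ and $g_2\bmod 4nk^2$ are handled by computing them on the message and appending their binary representation. The appended redundancy string must itself be protected against a cyclic permutation that straddles or lies in it. For this I would place it after a buffer and encode it with a short code: either a recursive application of Theorem~\ref{thm:CP_exact_k_red_UB} to a string of length $O(\log n+k)$, which has a brute-forceable size, or a repetition or Fire-code layer, at an extra cost of $O(\log\log k)$ plus lower-order terms. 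The delicate point is to check that the overhead of this protected tail stays within the $O(\log\log k)$ and $o(\log n)$ slack and is not an additional $\Theta(\log n)$.
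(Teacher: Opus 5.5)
Your Step~3 breaks, and it is the step that is supposed to keep the hash cost at $O(\log k)$. Your evaluation points are fixed independently of the data. They must therefore separate every difference $\boldx^{(\tau^\star)}\oplus\boldz$ that can arise in \emph{any} candidate list, and the claim that there are only $\poly(k)$ difference types per block is false. Such a difference has the form $\be\oplus\be'$, where $\be$ is the true error pattern on $[i+1,i+k]$ and $\be'$ is the pattern with the same interleaved parities on another window of the block. Because $k$ is prime, $\gcd(1+x^s,x^k+1)=1+x$ for every $s\in[1,k-1]$. Hence the words $\boldz\oplus\mathrm{rot}^s(\boldz)$ are exactly the $2^{k-1}$ even-weight words of length $k$, and there are $2^{k-1}\poly(k)$ possible differences per block. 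Nothing in your argument shows that the $g_1,g_2$ consistency test, which is two integer equations depending on $\boldy$, cuts this set down to $\poly(k)$. With your per-difference failure probability $((2L/m)/2^m)^c$, a union bound over this set needs $cm\gtrsim k$. That is about $k$ bits per hash and about $2k$ for the odd and even hashes together, giving $\log n+3k+O(\log k)$ rather than the claimed bound; the constant $49$ is also never derived. The missing idea is a \emph{data-dependent} hash that only separates a block from its own at most $D^*=O(k^8)$ confusable neighbours. This is the syndrome-compression coloring of Lemma~\ref{lem:compressed_coloring}, $\widehat h(\boldu)=\eta(a(\boldu))p_{2LD^*}+(\val(\boldu)\bmod a(\boldu))$, where $a(\boldu)$ is the least prime separating $\val(\boldu)$ from all neighbours' values. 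It takes $\widehat M=O(k^{22}\log k)$ values and is computable in $\poly(k)$ time. The odd/even aggregation does not need linearity: the other blocks of the same parity class are intact, so \eqref{eq:h_x_tau_star} works for any $h$. With such a hash, your interleaved parities become unnecessary.

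Your Step~4 is left open, and the remedies you suggest do not meet the requirements. Once $g_1$, $g_2$ and the parities are stored in a tail, the tail must be protected, since a corrupted stored value would make the decoder ``correct'' an intact message. Recursing with Theorem~\ref{thm:CP_exact_k_red_UB} gives only a greedy, existential code on a tail of length about $\log n+k$. Building and decoding it is not $\poly(n,k)$ when $k=\omega(\log n)$, and it costs about $2\log\log n+2\log k$ further bits, which $O(\log\log k)$ does not cover (e.g.\ for constant $k$). A Fire or repetition layer costs order $k$ more bits.

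The paper never transmits the localization data. A dynamic program finds the largest coset $\{g_0\equiv c_0^\star\bmod k,\ g_1\equiv c_1^\star\bmod 2k^2,\ g_2\equiv c_2^\star\bmod 4n'k^2\}$, at a cost of $\log(n+k)+5\log k+3$ bits. Messages are mapped into it by DFA-based $\unrank$/$\rank$ in $\poly(n,k)$ time, and the codeword is $\overline{\boldx}\circ 0^k\circ\widehat H_{\mathrm{odd}}\circ\widehat H_{\mathrm{even}}$ with an \emph{unprotected} tail. The buffer $0^k$, which is where the $+k$ comes from, ensures that no length-$k$ window touches both $\overline{\boldx}$ and the tail. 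If $\boldy_{n+1:n+k}\neq 0^k$, the value $g_0\bmod k$ separates a window straddling $\overline{\boldx}$ and the buffer ($\mathcal{D}g_0=0$) from one straddling the buffer and the tail ($\mathcal{D}g_0\in[1,k-1]$). If the buffer is intact, whether $\mathcal{D}g_1=0$ tells, by Lemma~\ref{lem:locate_window_kcube}, whether $\overline{\boldx}\circ 0^k$ was hit. The total is $\log n+k+(5+2\cdot 22)\log k+O(\log\log k)$.
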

The redundancy $\log n + k + 49\log k$ 
    in Theorem~\ref{thm:cyc_perm_k_prime_poly_time} deviates even further from the lower bound $\log n + \log k$ established in Theorem~\ref{thm:CP_exact_k_red_LB},
    but it is still \emph{smaller than the redundancy of Fire codes} characterized by~\eqref{eq:Fire_lb}, 
    provided both $n$ and $k$ are large enough.
    
Recall that in the proof of Theorem~\ref{thm:cyc_perm_k_prime}
    we use the greedy coloring function (hash) $h$ described in Lemma~\ref{lem:greedy_coloring},
    which takes $2^{O(k^3)}$ time to compute.
To address the complexity issue, inspired by the syndrome compression technique~\cite{sima2020syndrome}, 
    we first propose an efficient coloring algorithm for the graph $G$ defined in Lemma~\ref{lem:greedy_coloring},
    which is facilitated by the following Lemma. 

\begin{lemma}\label{lem:compressed_coloring}
Let $k$ be a positive integer and $L$ be as defined in \eqref{eq:L_def}.
There exist $\widehat{M}=O(k^{22}\log k)$ and a function
\begin{align*}
    \widehat{h}:\{0,1\}^{2L}\rightarrow [0,\widehat{M}-1]
\end{align*}
    that satisfy the following property: for any distinct $\boldu,\boldv\in\{0,1\}^{2L}$
    such that $\boldu$ is obtained by applying two length-$k$ cyclic permutations to $\boldv$,
    we have $\widehat{h}(\boldu)\neq \widehat{h}(\boldv)$.
Furthermore,
    $\widehat{h}$ can be computed in $\mathrm{poly}(k)$ time.
\end{lemma}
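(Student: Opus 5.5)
Following the syndrome compression technique of~\cite{sima2020syndrome}, we start from an explicitly computable but long ``syndrome'' of $\boldu\in\{0,1\}^{2L}$ and compress it modulo a small, input-dependent integer. Set $N:=2L=O(k^3)$ and let $\mathcal{N}(\boldu)$ be the set of neighbours of $\boldu$ in the confusion graph $G$ of Lemma~\ref{lem:greedy_coloring}. By~\eqref{eq:D_ub}, $|\mathcal{N}(\boldu)|\leq D=O(k^8)$, and $\mathcal{N}(\boldu)$ can be listed in $\poly(k)$ time by enumerating both layers of cyclic shifts: $O(k\cdot L)$ choices per layer, hence $O(k^8)$ candidates in total. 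We first want an injective ``base'' map $\mathrm{Syn}:\{0,1\}^{N}\to[0,2^{R})$ with $R=O(k^3)$ that is trivially computable. The choice $\mathrm{Syn}(\boldu)=\sum_t u_t2^{t-1}$ works, with $R=N$: it is injective, and $\mathrm{Syn}(\boldu)\neq \mathrm{Syn}(\boldv)$ for every pair of neighbours.

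The compression step is as follows. For each neighbour $\boldv\in\mathcal{N}(\boldu)$, the nonzero integer $|\mathrm{Syn}(\boldu)-\mathrm{Syn}(\boldv)|<2^{N}$ has at most $N$ distinct prime divisors, and in fact at most $O(N/\log N)$ of them. Let $P:=\prod_{\boldv\in\mathcal{N}(\boldu)}|\mathrm{Syn}(\boldu)-\mathrm{Syn}(\boldv)|$. Then $P<2^{ND}$, so the number of integers $a\leq A$ that divide $P$ and could be ``bad'' is controlled. We use the standard counting of~\cite[Lemma~3]{sima2020syndrome}: the number of integers $a$ dividing some nonzero $x<2^{ND}$ is small relative to $A$ once $A$ is of order $2^{o(\cdot)}$. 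It is more convenient, however, to argue with primes. The number of primes dividing $P$ is at most $\log P\leq ND=O(k^{11})$. Choose $A:=c\,ND\log(ND)=O(k^{11}\log k)$. By the prime number theorem, $[1,A]$ then contains more than $ND$ primes, so there is a prime $p(\boldu)\leq A$ that divides none of the differences. Such a prime can be found in $\poly(k)$ time by testing primes in increasing order, computing the $O(k^8)$ residues of $N$-bit numbers for each.

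Next we define $\widehat h(\boldu):=\bigl(p(\boldu),\ \mathrm{Syn}(\boldu)\bmod p(\boldu)\bigr)$, encoded as an integer below $\widehat M:=A^2=O(k^{22}\log^2 k)$. To check correctness, take neighbours $\boldu,\boldv$ with $\widehat h(\boldu)=\widehat h(\boldv)$. Then $p:=p(\boldu)=p(\boldv)$ and $\mathrm{Syn}(\boldu)\equiv\mathrm{Syn}(\boldv)\pmod p$. This contradicts the choice of $p(\boldu)$, since $\boldv\in\mathcal N(\boldu)$, and it uses the symmetry of $G$ noted in Theorem~\ref{thm:CP_red_UB}.

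The main obstacle is hitting the stated bound $\widehat M=O(k^{22}\log k)$ rather than $O(k^{22}\log^2 k)$. To save a logarithm, we encode the pair more tightly: we send the \emph{index} of $p(\boldu)$ among the primes, which is at most $\pi(A)=O(A/\log A)=O(k^{11})$, together with the residue, which is below $A=O(k^{11}\log k)$. This gives $\widehat M=\pi(A)\cdot A=O(k^{22}\log k)$, as claimed. The remaining points are routine. The primality tests and the enumeration of $\mathcal N(\boldu)$ are all $\poly(k)$. Primality of $k$ is not needed anywhere in this argument.
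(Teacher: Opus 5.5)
Your proposal is correct and is essentially the paper's proof. You take $\val(\boldu)$ as the syndrome and choose the least prime that divides no $\val(\boldu)-\val(\boldv)$ over neighbours $\boldv$; it exists because $P<2^{2LD^*}$ has fewer than $2LD^*$ distinct prime factors. Encoding the pair (prime index, residue) then gives $\widehat{M}=O(L^2{D^*}^2\log(LD^*))=O(k^{22}\log k)$. Your bound $\pi(A)$ on the prime index and the paper's direct bound $2LD^*$ are of the same order, so the two arguments differ only in presentation.
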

\begin{proof} Let $G$ be the graph defined in the proof of Lemma \ref{lem:greedy_coloring}.
It suffices to find a coloring of $G$ with $\widehat{M}$ colors that is computable in $\poly(k)$ time.

For each $\boldu\in\{0,1\}^{2L}$,
    let $\val(\boldu)$
    be the integer whose binary representation is $\boldu$.
Then,
    define for each $\boldu\in\{0,1\}^{2L}$ that
\begin{align}
  a(\boldu) &\coloneqq \min\{a \tn{ prime}~:~ \val(\boldu)\not\equiv\val(\boldv)\bmod a  \nonumber\\ &\qquad\qquad\tn{ for all }\boldv\in N(\boldu)\,\},
  \label{eq:a_def}
\end{align}
    where $N(\boldu)$ denotes the neighborhood of $\boldu$ in $G$, and by convention, we define $\min\emptyset\coloneqq \infty$.
We claim that $a$ has the following property.
\begin{claim}\label{clm:a_ub}
Define $D^*\coloneqq (k-1)^2(2L-k+1)^2$,
    which is an upper bound on the the maximum degree of $G$ as shown in~\eqref{eq:D_ub}.
For each $\boldu\in\{0,1\}^{2L}$,
    we have
    $a(\boldu)\leq p_{2LD^*}$,
    where $p_j$ denotes the $j$th prime number.
In particular, we have $a(\boldu)<\infty$.
\end{claim}
\begin{proof}
By contradiction, assume that $a(\boldu)>p_{2LD^*},$ (including the case $a(\boldu)=\infty$). That is,
    for each $j\in[1,2LD^*]$ we have
\begin{align}
    \val(\boldu)\equiv \val(\boldv) \bmod p_j \tn{ for some }\boldv\in N(\boldu).
    \label{eq:clm_a_ub_contradict}
\end{align}

    Consider the integer
\begin{align}
    P\coloneqq \prod_{\boldv\in N(\boldu)}|\val(\boldv)-\val(\boldu)|,\label{eq:P_def}
\end{align}
    which is strictly positive since $\val(\cdot)$ is injective.
Then,
    note that we have $|\val(\boldv)-\val(\boldu)|< 2^{2L}$,
    since $0\leq \val(\boldv),\val(\boldu)\leq 2^{2L}-1$
Therefore, from~\eqref{eq:P_def} we have
\begin{align}
    P< 2^{2LD^*}.
    \label{eq:P_ub}
\end{align}

At the same time,
    from \eqref{eq:clm_a_ub_contradict} and \eqref{eq:P_def} we have for each $j\in[1,2LD^*]$ that
    $p_j|P$.
    This leads to
\begin{align}
    \prod_{j=1}^{2L{D^*}}p_j|P.
    \label{eq:prod_pj_divides_P}
\end{align}
Then,
    using the bound $p_j\geq 2$ and the fact that $P$ is strictly positive, from~\eqref{eq:prod_pj_divides_P} we obtain
\begin{align}
    P\geq 2^{2LD^*}.\label{eq:P_lb}
\end{align}
Comparing \eqref{eq:P_ub} and \eqref{eq:P_lb} leads to a contradiction.
\end{proof}

We are now ready to describe the constructio of the desired function $\widehat{h}$ as
\begin{align}
    \widehat{h}(\boldu)\coloneqq \eta(a(\boldu))p_{2LD^*}+\left(\val(\boldu)\bmod a(\boldu)\right),
    \label{eq:widehath_def}
\end{align}
    where $\eta(p_j)\coloneqq j-1$ for each prime $p_j$.
The range of $\widehat{h}$ is bounded as follows:
first,
    by Claim~\ref{clm:a_ub} we know that
    $\eta((a(\boldu))\in[0,2LD^*-1]$.
Furthermore,
    by Claim~\ref{clm:a_ub} we also know that
    $\val(\boldu)\bmod a(\boldu)\in[0,a(\boldu)-1]\subseteq[0,p_{2LD^*}-1]$.
Using the bound $p_j\leq j(\ln j + \ln\ln j)$ for $j\geq 6$ \cite[Corollary of Theorem 3]{rosser1962approximate},
    we obtain $p_{2LD^*}= O(LD^*\log(LD^*))$.
Therefore,
    $\widehat{h}(\boldu)$ is at most
\begin{align*}
    2LD^*p_{2LD^*}
    &=O(L^2{D^*}^2\log LD^*)\\
    &=O(k^{22}\log k),
\end{align*}
    where we used the properties that $L=O(k^3)$ from \eqref{eq:L_def} and $D^*=O(k^2L^2)=O(k^8)$.
At the same time,
    it is clear that $\widehat{h}(\boldu)\geq 0$ by construction.
Therefore,
    we can choose $\widehat{M}\coloneqq 2LD^*p_{2LD^*}=O(k^{22}\log k)$.

We show next that $\widehat{h}$ is indeed a coloring of $G$.
Assume for contradiction that there exist a $\boldu\in\{0,1\}^{2L}$ and a $\boldv\in N(\boldu)$
    such that $\widehat{h}(\boldu)=\widehat{h}(\boldv)$.
By the argument above,
    we know that 
    each $\widehat{h}(\boldu)$ is uniquely associated with an ordered pair $\left(\eta(a(\boldu)),\val(\boldu)\bmod a(\boldu)\right)$.
Therefore,
    we have
    both 
\begin{align}
    \eta(a(\boldu))=\eta(a(\boldv))\label{eq:equal_left}
\end{align}
    and
\begin{align}
    \left(\val(\boldu)\bmod a(\boldu)\right) = \left(\val(\boldv)\bmod a(\boldv)\right).\label{eq:equal_right}
\end{align}
Note that \eqref{eq:equal_left} 
    implies that $a(\boldu)=a(\boldv)$,
    and thus from~\eqref{eq:equal_right} we have
\begin{align*}
    \val(\boldu)\equiv\val(\boldv)\bmod a(\boldu),
\end{align*}
    which contradicts the definition of $a(\boldu)$ in~\eqref{eq:a_def}.
These arguments establish that
    $\widehat{h}$ is indeed a coloring of $G$.

To conclude,
    we analyze the time complexity of $\widehat{h}$.
The computation of $\widehat{h}$ proceeds through the following steps:
\begin{enumerate}
    \item Generate the list of the first $2LD^*$ prime.
    By the bound from~\cite{rosser1962approximate},
        it suffices to list all the primes up to $A\coloneqq 2LD^*(\ln(2LD^*)+\ln\ln(2LD^*)$,
        which takes $O(A\cdot\polylog(A))=\poly(k)$ time using the sieve of Eratosthenes~\cite{sorenson1990introduction}.
    \item Enumerate $\boldv\in N(\boldu)$, which takes $D^*=O(k^8)$ time.
    \item
        For each prime $p$ in the generated list
            and each neighbor $\boldv$ of $\boldu$,
            test whether $\val(\boldu) \equiv \val(\boldv) \mod p$, which takes $\poly(k)$ time (assuming dividing an $2L$-bit integer with another $2L$-bit integer 
            takes $\poly(2L)=\poly(k)$ time). This step produces the desired $a(\boldu)$ and $\eta(a(\boldu))$.
    \item Compute $\widehat{h}(\boldu)$ based on \eqref{eq:widehath_def}, which takes $\poly(\log k)$ time.
\end{enumerate}
Since each step can be executed in $\poly(k)$ time,
    the overall computation of $\widehat{h}(\boldu)$ requires $\poly(k)$ time.
\end{proof}
We are now ready to prove Theorem \ref{thm:cyc_perm_k_prime_poly_time}.
\begin{proof}
Define
\begin{align*}
    g_0(\boldx)\coloneqq \sum_{t=1}^n \boldx_t.
\end{align*}
Furthermore, let $n'\coloneqq n+k$. For each $c_0\in[0,k-1]$, $c_1\in[0,2k^2-1]$, and $c_2\in[0,4n'k^2-1]$,
    define
\begin{align*}
    \mathcal{C}_{c_0,c_1,c_2}\coloneqq\{&\boldx\in\{0,1\}^n~:~g_0(\boldx)\equiv c_0\bmod k,\\
    &g_1(\boldx)\equiv c_1\bmod 2k^2,~g_2(\boldx)\equiv c_2\bmod 4n'k^2\}.
\end{align*}
By the pigeonhole principle, there exist $c_0^{\star}\in[0,k-1]$, $c_1^{\star}\in[0,2k^2-1]$, and $c_2^{\star}\in[0,4n'k^2-1]$
    such that
\begin{align*}
    |\mathcal{C}_{c_0^{\star},c_1^{\star},c_2^{\star}}|\geq \frac{2^n}{k\cdot 2k^2\cdot 4n'k^2}=\frac{2^n}{8n'k^5}.
\end{align*}
In other words,
    the redundancy of $\mathcal{C}_{c_0^{\star},c_1^{\star},c_2^{\star}}$ is at most
    $\log (8n'k^5)=\log(n+k)+5\log k + 3$.

We show next how to find
    $c_0^{\star}$,$c_1^{\star}$, and $c_2^{\star}$ using dynamic programming.
For each  $c_0\in[0,k-1]$, $c_1\in[0,2k^2-1]$, $c_2\in[0,4n'k^2-1]$,
    and $t\in[1,n]$,
    define
\begin{align*}
    N_{c_0,c_1,c_2,t}\coloneqq 
    &\left|\{\boldy\in\{0,1\}^t~:~ 
    g_0(\boldy)\equiv c_0\bmod k,\right.\\
     &\left. g_1(\boldy)\equiv c_1\bmod 2k^2,~g_2(\boldy)\equiv c_2\bmod 4n'k^2
    \}\right|.
\end{align*}
Also, for all $c_0\in[0,k-1]$, $c_1\in[0,2k^2-1]$, and $c_2\in[0,4n'k^2-1]$ define
\begin{align*}
    N_{c_0,c_1,c_2,0}\coloneqq \begin{cases}
        1, \tn{ if }c_0=c_1=c_2=0,\\
        0, \tn{ otherwise}.
    \end{cases}
\end{align*}
Then,
    it can be seen that
    for each $t\in[1,n]$,
    we have
    the following recursion:
\begin{align}
    N_{c_0,c_1,c_2,t}=N_{c_0,c_1,c_2,t-1}+N_{c_0',c_1',c_2',t-1},
    \label{eq:N_recursive}
\end{align}
    where $c_0'\coloneqq c_0-1\mod k$,
    $c_1'\coloneqq c_1-t\mod 2k^2$, and
    $c_2'\coloneqq c_2-t^2\mod 4n'k^2$.
Here the first term in \eqref{eq:N_recursive} corresponds to the case where the the $t$th bit is $0$,
    while the second term corresponds to the case where the the $t$th bit is $1$.
By~\eqref{eq:N_recursive},
    one can compute a table consisting of all $N_{c_0,c_1,c_2,n}$ for $c_0\in[0,k-1]$, $c_1\in[0,2k^2-1]$, $c_2\in[0,4n'k^2-1]$. This enumeration procedure will also count towards the overall complexity -- see the complexity analysis in the next paragraph.
Then,
    $c_0^{\star}$, $c_1^{\star}$, and $c_2^{\star}$
    can be chosen to be the first maximizer of $N_{c_0,c_1,c_2,n}$ with respect to a prescribed lexicographic order of the tuples $(c_0,c_1,c_2)$.
    
We now analyze the time complexity of finding $c_0^{\star}$, $c_1^{\star}$, and $c_2^{\star}$.
Note that each $N_{c_0,c_1,c_2,t}$ obeys an upper bound
    $N_{c_0,c_1,c_2,t}\leq 2^t\leq 2^n$,
    and thus every $N_{c_0,c_1,c_2,t}$ can be represented by an $(n+1)$-bit integer.
It follows that
    each computation of \eqref{eq:N_recursive} takes $\mathrm{ADD}_{n+1}$,
    where $\mathrm{ADD}_{n+1}$ denotes the time complexity of adding two $(n+1)$-bit integers.
Together with the fact that \eqref{eq:N_recursive} is performed $k\cdot 2k^2\cdot 4n'k^2 \cdot n=8n(n+k)k^5$ times, the computation of the table consisting of all $N_{c_0,c_1,c_2,n}$
    takes $8n(n+k)k^5\mathrm{ADD}_{n+1}$ time.
Then,
    a brute-force comparison over all $c_0\in[0,k-1]$, $c_1\in[0,2k^2-1]$, $c_2\in[0,4n'k^2-1]$ to find $c_0^{\star}$, $c_1^{\star}$, and $c_2^{\star}$
    takes $8(n+k)k^5\mathrm{COMP}_{n+1}$ time,
    where $\mathrm{COMP}_{n+1}$ denotes the time complexity to compare two $(n+1)$-bit integers.
Therefore,
    the overall procedure takes $8n(n+k)k^5\mathrm{ADD}_{n+1}+8(n+k)k^5\mathrm{COMP}_{n+1}$ time.
Assuming $\mathrm{ADD}_{n+1}=\poly(n)$ and $\mathrm{COMP}_{n+1}=\poly(n)$,
    the overall time complexity is thus $\poly(n,k)$.

To efficiently convert between $\{0,1\}^{n-(\log (n+k) + 5\log k + 3)}$ and ${\mc{C}}_{c_0^{\star},c_1^{\star},c_2^{\star}}$,
    we can construct a deterministic finite-state automaton (DFA)
    such that the set of length-$n$ binary strings accepted by this DFA is exactly ${\mc{C}}_{c_0^{\star},c_1^{\star},c_2^{\star}}$.
See, e.g., \cite[Appendix B]{chen2026correctingcontextualdeletionsdna} for more details.
To be more precise,
    for a binary-alphabet DFA with state space $Q$,
    its length-$n$ accepted strings can be enumerated 
    by the ``unranking'' function $\mathrm{unrank}$,
    whose inverse function is
    the ``ranking'' function $\mathrm{rank}$.
Both $\mathrm{rank}$ and $\mathrm{unrank}$ can be computed in $O(T_{\tn{construct}}T_{\tn{transition}}|Q|n\mathrm{ADD}_n)$ time,
    where $T_{\tn{construct}}$ is the time to construct the DFA,
    $T_{\tn{transition}}$ is the time to calculate the state transition,
    and $\mathrm{ADD}_n$ once again denotes the time it takes to add two $n$-bit integers.

We construct the desired DFA,
    which is based on the recursive formula in~\eqref{eq:N_recursive}, as follows.
The state space is $Q=\{\epsilon\}\cup \left([0,k-1]\times[0,2k^2-1]\times[0,4n'k^2-1]\times[1,n] \right)$,
    where $\epsilon$ denotes the empty string.
Each state other than $\epsilon$
    is identified as a four-tuple $(c_0,c_1,c_2,t)$.
The initial state is $\epsilon$.
The state transitions $\delta$ are defined as follows:
\begin{itemize}
    \item $\delta(\epsilon,0)=(0,0,0,1)$ and $\delta(\epsilon,1)=(1,1,1,1)$.
    \item 
        For each $t\in[1,n-1]$, $c_0\in[0,k-1]$,
        $c_1\in[0,2k^2-1]$, and $c_2\in[0,4n'k^2-1]$,
            define
            $\delta((c_0,c_1,c_2,t),0)=(c_0,c_1,c_2,t+1)$ and
            $\delta((c_0,c_1,c_2,t),1)=(c_0',c_1',c_2',t+1)$,
            where $c_0'\coloneqq c_0+1\mod k$,
    $c_1'\coloneqq c_1+t+1\mod 2k^2$, and
    $c_2'\coloneqq c_2+(t+1)^2\mod 4n'k^2$.
    \item 
        The state transitions from the states of the form $(c_0,c_1,c_2,n)$ can be arbitrary.
    \item 
        The accepted state is $(c_0^{\star},c_1^{\star},c_2^{\star},n)$.
\end{itemize}
This DFA has $|Q|=8nn'k^5+1$ states,
    and a length-$n$ binary sequence is accepted if and only if it is in $\mathcal{C}_{c_0^{\star},c_1^{\star},c_2^{\star}}$.
In addition,
    it can be checked that $T_{\tn{construct}}$ and $T_{\tn{transition}}$ are both $\poly(n,k)$.
It follows that
    both the ranking function $\mathrm{rank}:\mathcal{C}_{c_0^{\star},c_1^{\star},c_2^{\star}}\rightarrow [0,|\mathcal{C}_{c_0^{\star},c_1^{\star},c_2^{\star}}|-1]$
    and the unranking function $\mathrm{unrank}:[0,|\mathcal{C}_{c_0^{\star},c_1^{\star},c_2^{\star}}|-1]\rightarrow \mathcal{C}_{c_0^{\star},c_1^{\star},c_2^{\star}}$
    can be computed in $\poly(n,k)$ time.

We are now ready to describe the overall encoding procedure. For each $\boldx\in\{0,1\}^{n-(\log n' + 5\log k + 3)}$,
    identify $\boldx$ with an integer in $[0,\frac{2^n}{8n'k^5}-1]$ and
    let $\overline{\boldx}\coloneqq\unrank(\boldx)\in\{0,1\}^n$.
Then,
    the encoded words is
\begin{align*}
    \Enc(\boldx)\coloneqq \overline{\boldx}\circ 0^k \circ \widehat{H}_{\tn{odd}}(\overline{\boldx}\circ 0^k)\circ \widehat{H}_{\tn{even}}(\overline{\boldx}\circ 0^k),
\end{align*}
    where for any binary sequence $\boldz$ we let
\begin{align*}
    \widehat{H}_{\tn{odd}}(\boldz)&\coloneqq \left(\sum_{\tau \tn{ odd}} \widehat{h}(\boldz^{(\tau)})\right) \bmod \widehat{M}\\
    \widehat{H}_{\tn{even}}(\boldz)&\coloneqq \left(\sum_{\tau \tn{ even}} \widehat{h}(\boldz^{(\tau)})\right) \bmod \widehat{M}.
\end{align*}
Note that we have assumed 
    $2L\leq n$
    so that $\widehat{H}_{\tn{odd}}$ and $\widehat{H}_{\tn{even}}$ are nonempty sums.

The length of $\Enc(\boldx)$ is 
\begin{align*}
    |\Enc(\boldx)|
    &=n+k+2\log \widehat{M}\\
    &=n+k+44\log k + O(\log \log k),
\end{align*}
    and thus the overall redundancy equals 
    $|\Enc(\boldx)|-|\boldx|=\log (n+k) + k + 49\log k + O(\log \log k)=\log n + k + 49\log k + O(\log \log k)$,
    where in the last step we used the fact that $k=O(n^{1/3})$ and thus $\log(n+k)=\log n +O(1)$.
    
Note that $\Enc(\boldx)$ can be computed in $\poly(n,k)$ time
    since both $\unrank$ and $\widehat{h}$ are computable in $\poly(n,k)$ time.

Before describing the decoding procedure,
    we first explain the high-level idea behind it. Write $\bolds_{\tn{info}}\coloneqq \widehat{H}_{\tn{odd}}(\overline{\boldx}\circ 0^k)\circ\widehat{H}_{\tn{even}}(\overline{\boldx}\circ 0^k)$, so that $\Enc(\boldx)$ can be expressed as
\begin{align*}
\Enc(\boldx)=\overline{\boldx}\circ 0^k \circ \bolds_{\tn{info}}.
\end{align*}
We classify the length-$k$ cyclic permutation errors into the following types:
\begin{itemize}
    \item Type I: The length-$k$ substring lies within the $\overline{\boldx}$ part of $\Enc(\boldx)$ (i.e. the length-$k$ substring $[i+1,i+k]$ satisfies $[i+1,i+k]\subseteq[1,n]$).
    \item Type II: The length-$k$ substring straddles $\overline{\boldx}$ and $0^k$ (i.e. $[i+1,i+k]\cap[1,n]\neq\emptyset$ and $[i+1,i+k]\cap[n+1,n+k]\neq\emptyset$).
    \item Type III: The length-$k$ substring straddles $0^k$ and $\bolds_{\tn{info}}$.
    \item Type IV: The length-$k$ substring lies within the $\bolds_{\tn{info}}$ part.
\end{itemize}
Note that we have excluded the case where the length-$k$ substring aligns with the $0^k$ part of $\Enc(\boldx)$,
    since $0^k$ remains unchanged after a length-$k$ cyclic permutation.

Next, observe that $g_1(\overline{\boldx})=g_1(\overline{\boldx}\circ 0^k)\equiv c_1^{\star} \bmod 2k^2$ and $g_2(\overline{\boldx})=g_2(\overline{\boldx}\circ 0^k)\equiv c_2^{\star} \bmod 4n'k^2$. Therefore,
    for Type I and Type II errors,
    we can use $c_1^{\star}$, $c_2^{\star}$, and $\bolds_{\tn{info}}$ and apply the decoding procedure described in the proof of Theorem \ref{thm:cyc_perm_k_prime} to recover $\overline{\boldx}\circ 0^k$,
    which in turn recovers $\overline{\boldx}$.
On the other hand,
    for Type III and Type IV errors,
    the $\overline{\boldx}$ part is intact
    and thus can be recovered directly. 

We now explain how to determine which type of error has occurred.
Suppose $\boldy$ is obtained from $\Enc(\boldx)$ via a length-$k$ cyclic substring permutation.
We consider two cases:
\begin{enumerate}
    \item The case $\boldy_{n+1:n+k}\neq 0^k$.
        In this case, a cyclic permutation error must have  happened,
            and in particular
            the error is either of Type II or Type III.
        We then examine the quantity $\mathcal{D} g_0\coloneqq g_0(\boldy_{1:n+k})-g_0(\overline{\boldx}\circ 0^k)$,
            which satisfies $\mathcal{D} g_0\equiv g_0(\boldy_{1:n+k})-c_0^{\star}\bmod k$. Note that for a Type II error we have
        $\mathcal{D} g_0=0,$ since cyclic permutations are weight-preserving.
        On the other hand,
            for a Type III error,
            $\mathcal{D} g_0=w(\boldy_{n+1:n+k})>0$
            since $\boldy_{1:n}=\overline{\boldx}$
            and $\boldy_{n+1:n+k}\neq 0^k$.
        Furthermore,
            since the weight of $\boldy_{n+1:n+k}$ is  ``borrowed'' from the $\bolds_{\tn{info}}$ part,
            we have the upper bound $w(\boldy_{n+1:n+k})\leq \lfloor k/2 \rfloor\leq k-1$.
        Based on these observations,
            we can perform decoding based on the value of $\mathcal{D} g_0\bmod k$:
        \begin{enumerate}
            \item If $\mathcal{D} g_0\equiv 0 \bmod k$,
                then we know that a Type II error has occurred. In particular,
                the $\bolds_{\tn{info}}$ part is intact,
                and thus as outlined above,
                we can use $c_1^{\star},c_2^{\star}$, and $\bolds_{\tn{info}}$ to recover $\overline{\boldx}\circ 0^k$ and then $\overline{\boldx}$.
            \item 
                If $\mathcal{D} g_0\not\equiv 0 \bmod k$,
                    then we know that a Type III error has occurred. In this case,
                    we simply recover $\overline{\boldx}$ from the first $n$ bits of $\boldy$.
        \end{enumerate}
    \item 
        The case $\boldy_{n+1:n+k}= 0^k$.
        Note that in this case,
            Type I and Type II errors in $\Enc(\boldx)$
            correspond to a cyclic permutation in $\overline{\boldx}\circ 0^k$,
            while
            Type III and Type IV errors in $\Enc(\boldx)$ do not cause errors in $\overline{\boldx}\circ 0^k$.
        Therefore,
            following the proof of Lemma  \ref{lem:locate_window_kcube},
            we examine the quantity 
            $\mathcal{D} g_1\coloneqq g_1(\boldy_{1:n+k})-g_1(\overline{\boldx}\circ 0^k)$,
            which satisfies $\mathcal{D} g_1\equiv g_1(\boldy_{1:n+k})-c_1^{\star}\bmod 2k^2$.
        We then consider the following cases:
        \begin{enumerate}
            \item
                If $\mathcal{D} g_1=0$,
                    then there is no error confined to the first $n+k$ bits of $\boldy$,
                    and we can directly recover $\overline{\boldx}=\boldy_{1:n}$.
            \item 
                If $\mathcal{D} g_1\neq 0$,
                    then either a Type I or a Type II error has occurred in the first $n+k$ bits of $\boldy$.
                In particular,
                    the $\bolds_{\tn{info}}$ part remains  intact,
                    and we can recover $\overline{\boldx}$ in a similar way.
        \end{enumerate}
\end{enumerate}
It can be checked that
    the decoding procedure in all of the cases above runs in $\poly(n,k)$ time. Recall that in the decoding procedure described in the proof of Theorem \ref{thm:cyc_perm_k_prime},
    after we determine a substring of length $L=O(k^3)$ as described in Lemma~\ref{lem:locate_window_kcube},
    we locate $\boldy^{(\tau^*)}$ of length $2L$, brute-force enumerate all the possible length-$k$ cyclic-permutation outcomes $\boldz^{(\tau^*)}$ and test
    whether $h(\boldz^{(\tau^*)})=h(\boldx^{(\tau^*)})$.
In this theorem $h$ is replaced with $\widehat{h}$,
   the computation of which runs in $\poly(k)$ time,
    and there are at most $(2L-k+1)k=O(k^4)$ possible outcomes from $\boldy^{(\tau^*)}$,
    so that the overall decoding procedure indeed runs in $\poly(n,k)$ time. As a final step, we remark that after recovering $\overline{\boldx}$,
    we can recover the transmitted message
    by $\boldx=\rank(\overline{\boldx})$,
    which requires $\poly(n,k)$ time.
\end{proof}

\section{Multiple substring permutation bursts} \label{sec:multiple-bursts}

To address the issue of correcting multiple substring permutation errors, we make the obvious observation that a code correcting all even-weight bursts 
corrects all weight-preserving bursts for every codeword. 

We start with some auxiliary definitions. The integral of a binary string $\Int:\{0,1\}^n\to\{0,1\}^n$ is the running prefix sum
\[
  \Int(\bx)_j \coloneqq \oplus_{r=1}^{j} x_r, \qquad j=1,\dots,n .
\]
We also recall the definition of the derivative of a binary sequence,
$\Delta:\{0,1\}^n\to\{0,1\}^n$ (which in Definition~\ref{def:diff} had an additional boundary condition here omitted), 
\begin{align}
  \Delta(\bx)_j \coloneqq x_{j-1}\xor x_j, \qquad j=1,\dots,n,\quad x_0:=0 .
  \label{eq:Delta_same_length_def}
\end{align}

The following proposition is self-evident.
\begin{proposition}
\label{prop:inverse}
$\Delta$ and $\Int$ are mutually inverse, in the sense that
$\Delta(\Int(\boldx))=\Int(\Delta(\boldx))=\boldx$
    for all $\boldx\in\{0,1\}^n$.
In addition,
    both $\Delta$ and $\Int$ are $\FF_2$-linear.
\end{proposition}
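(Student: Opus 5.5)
We verify both identities coordinatewise from the definitions, using the convention $x_0=0$ in \eqref{eq:Delta_same_length_def} and the prefix-sum form of $\Int$. Linearity is handled separately.

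\textbf{Linearity.} Each coordinate of $\Delta(\bx)$, namely $x_{j-1}\xor x_j$, is an $\FF_2$-linear form in $\bx$. Likewise, each coordinate of $\Int(\bx)$, namely $\oplus_{r=1}^j x_r$, is an $\FF_2$-linear form. So both maps are given by $n\times n$ binary matrices: $\Delta$ by the lower-bidiagonal matrix with ones on the diagonal and subdiagonal, and $\Int$ by the lower-triangular all-ones matrix. This settles the second claim.

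\textbf{$\Delta\circ\Int$.} Put $\boldu=\Int(\bx)$, so $u_j=\oplus_{r=1}^j x_r$, and note $u_0=0$ (the empty sum), which agrees with the convention in $\Delta$. Then
\[
\Delta(\boldu)_j=u_{j-1}\xor u_j=\Bigl(\oplus_{r=1}^{j-1}x_r\Bigr)\xor\Bigl(\oplus_{r=1}^{j}x_r\Bigr)=x_j,
\]
because every term with $r\le j-1$ appears twice and cancels over $\FF_2$.

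\textbf{$\Int\circ\Delta$.} Put $\boldv=\Delta(\bx)$. Then
\[
\Int(\boldv)_j=\oplus_{r=1}^j (x_{r-1}\xor x_r),
\]
which is a telescoping sum. Each interior $x_r$ with $1\le r\le j-1$ appears twice and cancels, leaving $x_0\xor x_j=x_j$.

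Equivalently, the two matrices above are inverse lower-triangular matrices over $\FF_2$, and one identity already implies the other since $\{0,1\}^n$ is finite.

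The only point needing care is the boundary condition: $x_0=0$ must be used consistently, matching the empty prefix $u_0=0$. This is exactly why the omitted boundary term $c_{n+1}$ from Definition~\ref{def:diff} plays no role here. There is no real obstacle.
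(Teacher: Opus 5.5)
Your proof is correct. It checks $\Delta\circ\Int$ and $\Int\circ\Delta$ coordinatewise by telescoping, using the convention $x_0=0$ consistently with the empty prefix sum, and it gets linearity from the explicit bidiagonal and lower-triangular matrices; this is exactly the routine verification the paper omits by calling the proposition self-evident, and it matches the prefix-sum inversion the paper uses in the proof of Proposition~\ref{lem:bij}.
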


Our proposed construction relies on the following easy-to-see observation.
Let $\be$ be an even burst of length at most $k$ 
starting at position $i$. Then $\Int(\be)$
is a burst of length at most $k-1$, starting at the same position $i$. To be more precise, we have
$\Int(\be)_i=1$ and $\supp\bigl(\Int(\be)\bigr)\subseteq[i,\,i+k-2]$.
To see this,
    note that
    the even-weight property of $\bolde$ ensures that $\Int(\bolde)_{i+k-1}=\oplus_{j=i}^{i+k-1}(\bolde)_j=0$.

Next, fix a code $\mathcal{C}'\subseteq\{0,1\}^n$ that can correct $t$ nonoverlapping bursts separated by at least one bit (so as to prevent two or more bursts forming a longer consecutive bust),
each confined to $k-1$ consecutive positions. Define
\[
  \mathcal{C}
  \coloneqq
  \Delta(\mathcal{C}')
  \;=\;\bigl\{\,\bc\in\{0,1\}^n \;:\; \Int(\bc)\in\mathcal{C}'\,\bigr\}.
\]
Encoding amounts to choosing a base codeword $\bx\in\mathcal{C}'$, and transmitting 
$\bc=\Delta(\bx)$. Decoding involves processing the received $\by$ by forming $\Int(\by)$, running the decoder for 
$\mathcal{C}'$ to obtain $\bx$, and outputting $\bc=\Delta(\bx)$. This property is formally characterized by the following proposition.
\begin{proposition}
\label{prop:t_k_burst}
Let the channel add $\be=\oplus_{s=1}^{t}\be_s$, where each $\be_s$ is an 
even-weight burst
of length 
$\ell_s \leq k$ starting
at position $i_s$, 
    and that the bursts are nonoverlapping
    in the sense that $[i_s,i_s+\ell_s-1]\cap [i_{s'},i_{s'}+\ell_{s'}-1]=\emptyset$ for all $s\neq s'$.
Then the above construction recovers $\bc$ uniquely. Moreover, since $\Int$ and $\Delta$
are bijections, the codes have the same redundancy.
That is,
\[
  r(\mathcal{C})=r(\mathcal{C}').
\]
\end{proposition}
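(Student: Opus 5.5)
The plan is to push everything through the linear bijection $\Int$ and reduce the claim to the burst-correcting property of $\mathcal{C}'$. Let $\bc\in\mathcal{C}$ be transmitted, so $\bx\coloneqq\Int(\bc)\in\mathcal{C}'$, and let $\by=\bc\xor\be$ be received. By Proposition~\ref{prop:inverse}, $\Int$ is $\FF_2$-linear, hence
\[
  \Int(\by)=\Int(\bc)\xor\Int(\be)=\bx\xor\bigoplus_{s=1}^{t}\Int(\be_s).
\]
It then suffices to show that $\bigoplus_s\Int(\be_s)$ is an admissible error pattern for $\mathcal{C}'$. If so, the decoder of $\mathcal{C}'$ returns $\bx$, and $\Delta(\bx)=\Delta(\Int(\bc))=\bc$ by Proposition~\ref{prop:inverse}.

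For each $s$, I will apply the observation stated just before the proposition. Since $\be_s$ is an even-weight burst of length $\ell_s\le k$ starting at $i_s$, we have $\Int(\be_s)_j=0$ for $j<i_s$ and $\Int(\be_s)_{i_s}=1$. Also, for $j\ge i_s+\ell_s-1$, $\Int(\be_s)_j=\bigoplus_{r=i_s}^{i_s+\ell_s-1}(\be_s)_r=0$ by the even weight. So $\supp(\Int(\be_s))\subseteq[i_s,i_s+\ell_s-2]$, a window of at most $k-1$ consecutive positions. Note that $\ell_s\ge 2$ automatically, because a nonzero even-weight vector has weight at least $2$.

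The step needing care is checking the separation hypothesis on $\mathcal{C}'$: the integrated bursts must be nonoverlapping and separated by at least one bit. Order the bursts so that $i_1<i_2<\dots<i_t$. Disjointness of the original supports gives $i_{s+1}\ge i_s+\ell_s$. The last possible nonzero position of $\Int(\be_s)$ is $i_s+\ell_s-2$, so position $i_s+\ell_s-1$ is a zero gap strictly before $i_{s+1}$. Hence the windows $[i_s,i_s+\ell_s-2]$ are pairwise disjoint and separated by at least one position. This gap is exactly what the shortening from length $\ell_s$ to $\ell_s-1$ provides, and it is why $\mathcal{C}'$ was only required to handle separated bursts. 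Thus $\Int(\be)$ is a sum of at most $t$ separated bursts, each confined to $k-1$ positions, and $\mathcal{C}'$ corrects it. Uniqueness follows because the whole pipeline is a composition of deterministic maps and $\Int$ is a bijection.

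For the redundancy, Proposition~\ref{prop:inverse} shows that $\Delta$ is a bijection on $\{0,1\}^n$ with inverse $\Int$. Therefore $|\mathcal{C}|=|\Delta(\mathcal{C}')|=|\mathcal{C}'|$, which gives $r(\mathcal{C})=n-\log|\mathcal{C}|=r(\mathcal{C}')$.
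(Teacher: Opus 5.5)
Your proposal is correct and follows the paper's proof. Both arguments use the linearity of $\Int$, the even-weight observation that $\Int(\be_s)$ is supported in $[i_s,i_s+\ell_s-2]$, the one-bit gap that comes from the disjointness of the original windows, and $|\mathcal{C}|=|\mathcal{C}'|$ via the bijection $\Delta$. Your ordering step $i_{s+1}\ge i_s+\ell_s$ simply spells out the separation claim that the paper states in one line.
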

\begin{proof}
Let $\bc\in\mathcal{C}$, so $\bx:=\Int(\bc)\in\mathcal{C}'$. By linearity of $\Int$,
\[
  \Int(\by)\;=\;\Int(\bc)\xor\Int(\be)
             \;=\;\bx\;\xor\;(\oplus_{s=1}^{t}\Int(\be_s)).
\]
Note that as argued above, each $\Int(\be_s)$ is a burst of length at most $k-1$ starting
at position $i_s$.
Furthermore,
    since the intervals $\{[i_s,i_s+\ell_s-1]\}_{1\leq s \leq t}$
    are nonoverlapping,
    the intervals $\{[i_s,i_s+\ell_s-2]\}_{1\leq s \leq t}$
    are separated one bit apart from each other.
Hence,
    $\Int(\bolde)$ consists of bursts of length at most $k-1$  which are at least one bit apart. Since $\Int(\boldc)=\boldx\in\mc{C}'$, 
    the decoder of
$\mathcal{C}'$ thus returns $\bx$ from $\Int(\by)$,    
and one can recover $\bc=\Delta(\bx)$ by Proposition ~\ref{prop:inverse}.

Also, the fact that $|\mathcal{C}|=|\mathcal{C}'|$ follows because $\Delta$ is a
bijection, ensuring that the codes have equal redundancy.
\end{proof}

We demonstrate next a simple application of Proposition~\ref{prop:t_k_burst} for permutation error correction.
Note that 
for the special case of a single even-weight burst of length at most a constant $k$,
one can combine Proposition~\ref{prop:t_k_burst} with 
the existential results in~\cite{abdel1986existence} for $(k-1)$-burst-correcting codes asymptotically meeting the necessary lower bound in~\eqref{eq:k_burst_LB_combined}.
This procedure establishes the existence of optimal even-weight $k$-burst error-correcting codes, which can also correct length-$k$ permutation errors.

On one hand,
    the optimal $(k-1)$-burst correcting code in \cite{abdel1986existence} has redundancy asymptotically $\log n + (k-1)-1+o(1)=\log n + k - 2 +o(1)$.
On the other hand,
    we derive a necessary redundancy lower bound for any codes correcting a single even-weight $k$-burst error.
To count the number of such errors, the left endpoint is fixed to $1$, while the $k-1$ ``following'' bits have to have an odd weight. There are $2^{k-2}$ such substrings per starting position and $n-k+1$ starting positions, and thus there are $(n-k+1)2^{k-2}$ even-weight bursts of length $k$ overall.
Correcting such a burst therefore requires at least $\log n+k-2+o(1)$ redundant bits. Consequently,
    combining Proposition~\ref{prop:t_k_burst} with the asymptotically optimal codes in~\cite{abdel1986existence} leads to asymptotically optimal single even-weight $k$-burst-correcting codes. We also consider another special case where Proposition \ref{prop:t_k_burst} is applicable. Over $\Ftwo$ a weight-preserving burst on a substring of length $2$ must flip both
positions of the substring whose content is $01$ or $10$, that is, it reduces to an adjacent transposition. Then, $\Acc$ maps the pattern $11$ at positions $(j, j+1)$ to a single $1$ at
position $j$. The code design problem reduces to the statement that correcting $t$ adjacent transpositions is equivalent, via the integral precoder, to correcting $t$ single substitution errors. Using BCH codes results in redundancy $t \log\, (n+1),$ which suffices, 
instead of the redundancy $2t\log\,(n+1)$ required for correcting $2t$ errors representing $t$ bursts of length two.

\appendices

\section{A Review of Fire Codes and their  Redundancy}\label{app:Fire}
We first review the definition and the error-correction guarantees of Fire codes.
\begin{proposition}\label{prop:Fire}
\cite[Corollary of Theorem 8.14]{mceliece2002theory}
Let $k$ be a positive integer.
Let $f(x)$ be an irreducible polynomial over $\mbb{F}_2[x]$ of degree $m\geq k$ such that
    $f(0)\neq 0$ and
    $f(x)$ does not divide $x^{2k-1}+1$.
Write $n_0$ for the period of $f(x)$
    (i.e.\ the smallest positive integer $n_0'$ such that $f(x)$ divides $x^{n_0'}+1$).
Define
\begin{align*}
    g(x)\coloneqq(x^{2k-1}+1)f(x).
\end{align*}
Then,
    the cyclic code generated by $g(x)$ has block length $N=\lcm(2k-1,n_0)$
    and redundancy $m+2k-1$,
    and it is a $k$-burst-correcting code.
This code is called the \emph{Fire code}.
\end{proposition}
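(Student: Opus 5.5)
The plan is to verify the three claims in order: block length, redundancy, and burst-correction. The last is the substantive one.

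For the block length, I would use that $f$ is irreducible and does not divide $x^{2k-1}+1$, so $\gcd(f(x),x^{2k-1}+1)=1$. Then $g(x)$ divides $x^{N'}+1$ exactly when both $x^{2k-1}+1\mid x^{N'}+1$ and $f\mid x^{N'}+1$. Since $x^a+1\mid x^b+1$ iff $a\mid b$, and by the definition of the period $n_0$, this holds iff $(2k-1)\mid N'$ and $n_0\mid N'$. The least such $N'$ is $N=\lcm(2k-1,n_0)$, so $g$ generates a cyclic code of length $N$. Its redundancy is $\deg g=m+2k-1$.

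For burst-correction, I would argue by contradiction in the cyclic setting. It suffices to show that no nonzero codeword equals a sum of two (cyclic) bursts of length at most $k$. After a cyclic shift, suppose $b_1(x)+x^j b_2(x)\equiv 0 \pmod{g(x)}$ modulo $x^N+1$. Here $\deg b_1,\deg b_2\le k-1$, $b_1(0)=b_2(0)=1$, and $0\le j<N$. I then reduce modulo $x^{2k-1}+1$. Writing $r=j\bmod(2k-1)$ gives $b_1(x)\equiv x^r b_2(x)\pmod{x^{2k-1}+1}$. So the two bursts coincide as cyclic words of length $2k-1$, one starting at position $0$ and the other at position $r$.

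The key combinatorial step, which I expect to be the main obstacle, is showing that this forces $r=0$ and $b_1=b_2$. The idea is that a nonzero burst of length at most $k$ on a cycle of length $2k-1$ leaves a cyclic run of at least $k-1$ consecutive zeros outside its support. Any zero run inside the burst has length at most $k-2$, because both endpoints of the burst are $1$. Hence the starting position of the burst is uniquely determined as the position right after its unique longest zero run. Applying this to $b_1$ and to $x^r b_2$ gives $r=0$, and then $b_1=b_2=:b$.

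With $(2k-1)\mid j$ and $b(x)(1+x^j)\equiv 0\pmod{f}$, I finish with degrees. Since $\deg b\le k-1<k\le m$ and $b\neq 0$, the irreducible $f$ cannot divide $b$, so $f\mid x^j+1$ and hence $n_0\mid j$. Combining, $N\mid j$, so $j=0$ and the two bursts are identical. This is the desired contradiction: distinct bursts of length at most $k$ always have distinct syndromes, so the code is $k$-burst-correcting.
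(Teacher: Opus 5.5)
The paper does not prove this proposition; it quotes it from McEliece's text, so there is no in-paper argument to compare yours against. Your proof is correct, and in structure it is the classical Fire-code argument: the factor $x^{2k-1}+1$ pins down the burst pattern and its starting position modulo $2k-1$, the irreducible factor $f$ pins down the offset modulo $n_0$ (using $\deg b\le k-1<m$), and $N=\lcm(2k-1,n_0)$ combines the two. You differ from the usual textbook write-up only in the alignment step. There, one writes $b_1(x)+x^r b_2(x)=d(x)(x^{2k-1}+1)$ and compares degrees and the coefficient of $x^r$ to force $r=0$. You instead observe that a burst of length at most $k$ on a cycle of length $2k-1$ has a unique longest cyclic zero run: it has length at least $k-1$, while any zero run inside the burst has length at most $k-2$. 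Hence the burst's starting point can be read off the cyclic word. The two arguments are equivalent, and yours has the advantage of being literally the decoding rule for the first syndrome. There are two small loose ends, neither a real gap. First, normalizing $b_1(0)=b_2(0)=1$ silently drops the case where one of the two error patterns is zero, i.e.\ a single nonzero burst being a codeword. After a cyclic shift this is excluded because $x^{2k-1}+1$ cannot divide a nonzero polynomial of degree at most $k-1$. Second, for $k=1$ the zero-run argument is vacuous, but then $r=j \bmod 1=0$ automatically.
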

To emphasize the dependency on the burst length $k$,
    we also call the code in Proposition~\ref{prop:Fire}
    the \emph{$k$-burst-correcting Fire code}.
    
Now we establish the redundancy bounds on $k$-burst-correcting Fire codes of arbitrary block length $n$,
    which is obtained by constructing a $k$-burst-correcting Fire code in Proposition~\ref{prop:Fire} of block length $N\geq n$ and then shortening it. More precisely,
    if $f(x)$ defines a valid $k$-burst-correcting Fire code of block length $N\geq n$ in the sense of Proposition~\ref{prop:Fire}, then
\begin{align}
\mc{C}\coloneqq \{c(x)\in\mbb{F}_2[x]~:~ \deg(c)<n,~ g(x)|c(x)\}\label{eq:Fire_shortened}
\end{align}
    is the associated shortened code.
In particular, the code in \eqref{eq:Fire_shortened} is still a $k$-burst-correcting code,
    and we call it a $k$-burst-correcting Fire code of block length $n$.
Note that by Reiger's bound from~\eqref{eq:Reiger}
    the code in \eqref{eq:Fire_shortened} is non-trivial only when $n>2k$.
We then focus on the regime $n> 2k$.
\begin{proposition}\label{prop:Fire_red}
Let $n$ and $k$ be two positive integers
    such that $n> 2k$.
We have the following:
\begin{itemize}
    \item There exists a $k$-burst-correcting Fire code of block length $n$ 
    and redundancy at most 
    \begin{align}
        \max\{\log (n+1)+2k,3k-1\}.
        \label{eq:Fire_ub}
    \end{align}
    \item Any nontrivial $k$-burst-correcting Fire code of block-length $n$ has redundancy at least
    \begin{align}
        \max\{\log n+2k-\log (2k-1) - 1,3k-1\}.
        \label{eq:Fire_lb}
    \end{align}
\end{itemize}
\end{proposition}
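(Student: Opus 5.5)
The plan has two parts: build explicit shortened Fire codes for the upper bound \eqref{eq:Fire_ub}, and then derive the lower bound \eqref{eq:Fire_lb} from the structure of Fire codes together with Reiger's bound.

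\emph{Upper bound \eqref{eq:Fire_ub}.} I will take $f(x)$ to be a primitive polynomial of degree $m=\max\{k,\lceil\log(n+1)\rceil\}$. Its period is then $n_0=2^m-1\geq n$, and $f(0)\neq0$ holds automatically. It remains to make sure $f\nmid x^{2k-1}+1$. If $m>2k-1$ this holds because $\deg f>2k-1$. Otherwise, $f\mid x^{2k-1}+1$ would force $2^m-1\mid 2k-1$, i.e.\ $2^m\leq 2k$; since $m\geq k$, this can only happen for $k\leq 2$. Those small cases I will handle by bumping $m$ by one, or by checking them directly.

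With these conditions, Proposition~\ref{prop:Fire} gives a Fire code of length $N=\lcm(2k-1,n_0)\geq n_0\geq n$ and redundancy $m+2k-1$. Shortening as in \eqref{eq:Fire_shortened} keeps the redundancy $\deg g=m+2k-1$. Substituting $m\leq\max\{k,\log(n+1)+1\}$ then gives the bound $\max\{3k-1,\log(n+1)+2k\}$.

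\emph{Lower bound \eqref{eq:Fire_lb}.} Any valid $f$ has degree $m\geq k$, so the redundancy $m+2k-1$ is at least $3k-1$. For the other term I will use non-triviality: the shortened code can have length $n$ only if $N=\lcm(2k-1,n_0)\geq n$. This gives
\[
n\leq (2k-1)n_0\leq (2k-1)(2^m-1),
\]
where the last step uses that the period of an irreducible polynomial of degree $m$ divides $2^m-1$. It follows that $m\geq \log n-\log(2k-1)$ up to the $-1$ slack coming from $2^m-1<2^m$, and hence the redundancy $m+2k-1$ is at least $\log n+2k-\log(2k-1)-1$.

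The main obstacle is the edge case in the upper bound: guaranteeing $f\nmid x^{2k-1}+1$ when $m$ is small relative to $2k-1$. I also need to watch the off-by-one terms inside the ceilings so that the stated constants come out exactly.
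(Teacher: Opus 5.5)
Your proposal is correct and follows the paper's proof essentially step for step: a primitive $f$ of degree $m=\max\{k,\lceil\log(n+1)\rceil\}$ gives \eqref{eq:Fire_ub}, and $m\geq k$ together with $n\leq N\leq(2k-1)n_0\leq(2k-1)(2^m-1)$ gives \eqref{eq:Fire_lb}. The one thing to tidy up is your ``main obstacle,'' which is vacuous: as the paper observes, $n_0=2^m-1\geq n>2k-1$ already forces $n_0\nmid 2k-1$, hence $f\nmid x^{2k-1}+1$ for every $k$. So do not bump $m$, which would generally overshoot \eqref{eq:Fire_ub}. Two smaller points: $2^m-1<2^m$ costs nothing, since the $-1$ in \eqref{eq:Fire_lb} comes from the $2k-1$ in $m+2k-1$; and non-triviality is what guarantees the shortened code's redundancy is $\deg g$ rather than $\min\{n,\deg g\}$.
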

\begin{proof}
We first prove the redundancy upper bound in~\eqref{eq:Fire_ub}.
Choose $m=\max\left\{\lceil\log(n+1)\rceil,k\right\}$
    and let $f(x)$ be a primitive polynomial in $\mbb{F}_2[x]$ of degree $m$.
The period of $f(x)$ is then $n_0=2^m-1\geq n$,
    and in particular $f(x)$ does not divide $x^{2k-1}+1$ since $2k-1<n\leq n_0$.
Together with the fact that $m\geq k$, the chosen $m$ and $f$ satisfy the assumptions in Proposition~\ref{prop:Fire}.
Therefore,
    the associated Fire code is well-defined,
    and it has block length
    $N=\lcm(2k-1,n_0)\geq n_0\geq n$
    and redundancy
\begin{align*}
    m+2k-1&\leq \max\{\log(n+1)+1,k\}+2k-1\\
    &=\max\{\log (n+1) + 2k ,3k-1\}.
\end{align*}
It follows that the associated shortened code has redundancy $\min\{n,m+2k-1\}\leq m+2k-1\leq \max\{\log (n+1) + 2k ,3k-1\}$ as well\footnote[1]{Note that the case redundancy equals $n$ corresponds to a trivial code (i.e.\ the code with only one codeword), which happens only when $n$ is too small.}.

We now prove the redundancy lower bound in~\eqref{eq:Fire_lb}. Since the shortened length-$n$ code is nontrivial, so is any original length-$N$ code before shortening. By the redundancy formula $m+2k-1$,
    it suffices to show that $m\geq \max\{\log n - \log(2k-1),k\}$.
By requirement, we already have $m\geq k$.
Also, from the formula of block length $N=\lcm(2k-1,n_0)$,
    we have
\begin{align*}
    n_0&= \gcd(2k-1,n_0)\frac{N}{2k-1}\\
    &\geq \frac{n}{2k-1},
\end{align*}
    since $N\geq n$ by assumption and $\gcd(2k-1,n_0)\geq 1$.
At the same time,
    since $f$ has degree $m$,
    is irreducible,
    and $f(0)\neq 0$,
    it can be shown that $f(x)|x^{2^m-1}-1$.
It then follows from the minimality of $n_0$ that
\begin{align*}
    2^m-1\geq n_0\geq \frac{n}{2k-1},
\end{align*}
    which implies $m\geq \log n -\log (2k-1)$.
\end{proof}

\section{Lower bound on $\binom{2k-2}{k-2}$}\label{app:binom_lb}
From \cite[Lemma 17.5.1]{cover1991elements},
     we have for each positive even integer $L$ that
\begin{align*}
    \binom{L}{L/2}\geq \frac{2^L}{\sqrt{2L}}.
\end{align*}
Setting $L=2k-2$ yields
\begin{align*}
    \binom{2k-2}{k-1}\geq \frac{2^{2k-2}}{2\sqrt{(k-1)}}.
\end{align*}
Then,
    from the identity
\begin{align*}
\binom{2k-2}{k-2}=\frac{k-1}{k}\binom{2k-2}{k-1},
\end{align*}
    we get
\begin{align*}
    \binom{2k-2}{k-2}&\geq \frac{k-1}{k}\frac{2^{2k-2}}{2\sqrt{(k-1)}}\\
    &= \sqrt{1-\frac{1}{k}}\frac{2^{2k-\textcolor{ColorYPChen}{3}}}{\sqrt{k}}\\
    & \geq \frac{2^{2k-\textcolor{ColorYPChen}{7/2}}}{\sqrt{k}},
\end{align*}
    where the last inequality follows from the fact that $1-\frac{1}{k}\geq \frac{1}{2}$ for $k\geq 2$.

\section{Asymptotic order of $d_{k-1}$}\label{app:dkm1_asymp}

In this appendix,
    we show that
\begin{align}
    d_{k-1}=\Theta(\frac{2^k}{k}),\label{eq:d_km1_asymp_goal}
\end{align}
    where 
    $d_{k-1}=\max_{s\in[0,k-1]}\left|\binom{k-1}{s}-\binom{k-1}{s-1}\right|$
    is defined in \eqref{eq:d_km1_def}.
First,
    by symmetry of the binomial coefficient,
    we can maximize over $s\leq \lfloor\frac{k-1}{2}\rfloor$.
That is,
    we have
\begin{align*}
    d_{k-1}
    &=\max_{s\in[0,\lfloor\frac{k-1}{2}\rfloor]}\left|\binom{k-1}{s}-\binom{k-1}{s-1}\right|\\
    &=\max_{s\in[0,\lfloor\frac{k-1}{2}\rfloor]}\left(\binom{k-1}{s}-\binom{k-1}{s-1}\right)\\
    &=\max_{s\in[0,\lfloor\frac{k-1}{2}\rfloor]}\frac{k-2s}{k}\binom{k}{s}.
\end{align*}
Then,
    for each $s\in[0,\lfloor\frac{k-1}{2}\rfloor]$,
    define
\begin{align*}
    d^{(s)}_{k-1}\coloneqq \frac{k-2s}{k}\binom{k}{s}.
\end{align*}
Next,
    for each $s\in[0,\lfloor\frac{k-1}{2}\rfloor-1]$
    we calculate
\begin{align*}
    \frac{d^{(s+1)}_{k-1}}{d^{(s)}_{k-1}}&=\frac{\frac{k-2s-2}{k}\binom{k}{s+1}}{\frac{k-2s}{k}\binom{k}{s}}\\
    &=\frac{(k-2s-2)(k-s)}{(k-2s)(s+1)}.
\end{align*}

We now have the following equivalent conditions for $s\in[0,\lfloor\frac{k-1}{2}\rfloor-1]$:
\begin{align*}
    &\frac{d^{(s+1)}_{k-1}}{d^{(s)}_{k-1}}\geq 1,\\
    \Leftrightarrow~&(k-2s-2)(k-s)\geq (k-2s)(s+1),\\
    \Leftrightarrow~&4s^2+(-4k+4)s+(k^2-3k)\geq 0,\\
    \Leftrightarrow~& s\leq \frac{k-1-\sqrt{k+1}}{2} \tn{ or } s\geq \frac{k-1+\sqrt{k+1}}{2}.
\end{align*}
However,
    since we have assumed $s\leq \lfloor\frac{k-1}{2}\rfloor-1$,
    the condition $s\geq \frac{k-1+\sqrt{k+1}}{2}$ cannot hold.
Therefore,
    we have $\frac{d^{(s+1)}_{k-1}}{d^{(s)}_{k-1}}\geq 1$ if and only if $s\leq \frac{k-1-\sqrt{k+1}}{2}$.
It follows that
\begin{align*}
    d_{k-1}=\frac{k-2s^{\star}}{k}\binom{k}{s^{\star}},
\end{align*}
    where the (possibly not unique) maximizer $s^{\star}$ is
\begin{align*}
    s^{\star}&\coloneqq \left\lfloor\frac{k-1-\sqrt{k+1}}{2}\right\rfloor+1\\
    &=\frac{k-\sqrt{k}}{2}+O(1).
\end{align*}

Finally,
    by \cite[(5.38), (5.43)]{spencer2014asymptopia},
    it can be shown that
\begin{align*}
    \binom{k}{s^{\star}}=\Theta(\binom{k}{k/2})=\Theta(\frac{2^k}{\sqrt{k}}).
\end{align*}
Together with the fact that 
$\frac{k-2s^{\star}}{k}=\frac{\sqrt{k}+O(1)}{k}=\Theta(\frac{1}{\sqrt{k}})$,
    we get \eqref{eq:d_km1_asymp_goal}.
    


\bibliographystyle{IEEEbib.bst}
\bibliography{refs}

\end{document}